\documentclass[acmsmall]{acmart}
\setcopyright{cc}
\setcctype{by}
\acmDOI{10.1145/3808274}
\acmYear{2026}
\acmJournal{PACMPL}
\acmVolume{10}
\acmNumber{PLDI}
\acmArticle{196}
\acmMonth{6}
\acmSubmissionID{pldi26main-p135-p}
\received{2025-11-13}
\received[accepted]{2026-04-03}

\usepackage{listings}

\usepackage{graphicx}
\usepackage{framed}
\usepackage{amsmath,amsfonts,amsthm, mathrsfs}
\usepackage{graphicx}
\usepackage{enumerate}
\usepackage{xcolor, varwidth}
\usepackage{amsmath, amsthm, wasysym, verbatim, bbm, color, graphics, geometry, multicol, tikz, wrapfig, harpoon, array, graphicx}
\usepackage{pgfplots, subcaption}
\usepackage{algpseudocode, algorithm}
\usetikzlibrary{shapes, patterns, calc}
\usetikzlibrary{automata, positioning, arrows}
\usepackage{tabularx}
\usepackage{wrapfig}

\pgfplotsset{compat=1.18} 

\usepackage{thmtools, thm-restate}
\usepackage{enumitem}
\usepackage[frozencache,cachedir=.]{minted}

\theoremstyle{plain}
\newtheorem{theorem}{Theorem}[section]
\newtheorem{corollary}[theorem]{Corollary}
\newtheorem{lemma}[theorem]{Lemma}

\theoremstyle{definition}
\newtheorem{definition}[theorem]{Definition}

\newtheorem{example}[theorem]{Example}

\numberwithin{equation}{section}
\numberwithin{figure}{section}
\numberwithin{table}{section}
\numberwithin{algorithm}{section}

\newcommand\GG{\mathcal{G}}
\newcommand\VV{\mathcal{V}}
\newcommand\EE{\mathcal{E}}
\newcommand\nf{\texttt{nodefun}}
\newcommand\gf{\texttt{graphfun}}

\newcommand\fpath{f^\textsf{path}}
\newcommand\pa{\textsf{Pa}}
\newcommand\anc[1]{\textsf{Anc}_Z^*(#1)}
\newcommand\dir{\textsf{dir}}

\begin{document}

\title{Causality and Semantic Separation}

\author{Anna Zhang}
\orcid{0009-0000-6247-5667}
\affiliation{%
  \institution{Massachusetts Institute of Technology}
  \city{Cambridge}
  \country{USA}
}
\email{annazhang@alum.mit.edu}

\author{Qinglan Luo}
\orcid{0009-0007-7913-0092}
\affiliation{%
  \institution{Wellesley College}
  \city{Wellesley}
  \country{USA}
}
\affiliation{%
  \institution{Massachusetts Institute of Technology}
  \city{Cambridge}
  \country{USA}
}
\email{ql101@mit.edu}

\author{London Bielicke}
\orcid{0009-0002-3865-4707}
\affiliation{%
  \institution{University of California at Los Angeles}
  \city{Los Angeles}
  \country{USA}
}
\email{londonbielicke@g.ucla.edu}

\author{Eunice Jun}
\orcid{0000-0002-4050-4284}
\affiliation{%
  \institution{University of California at Los Angeles}
  \city{Los Angeles}
  \country{USA}
}
\email{emjun@ucla.edu}

\author{Adam Chlipala}
\orcid{0000-0001-7085-9417}
\affiliation{%
  \institution{Massachusetts Institute of Technology}
  \city{Cambridge}
  \country{USA}
}
\email{adamc@csail.mit.edu}

\begin{CCSXML}
<ccs2012>
<concept>
<concept_id>10011007.10011074.10011099.10011692</concept_id>
<concept_desc>Software and its engineering~Formal software verification</concept_desc>
<concept_significance>500</concept_significance>
</concept>
<concept>
<concept_id>10011007.10011006.10011039.10011311</concept_id>
<concept_desc>Software and its engineering~Semantics</concept_desc>
<concept_significance>500</concept_significance>
</concept>
</ccs2012>
\end{CCSXML}

\ccsdesc[500]{Software and its engineering~Formal software verification}
\ccsdesc[500]{Software and its engineering~Semantics}

\keywords{causal models, $d$-separation, logical foundations of science}

\begin{abstract}
  The design of scientific experiments deserves its own variation of formal
  verification to catch cases where scientists made important mistakes, such as
  forgetting to take confounding variables into account. One of the most
  fundamental underpinnings of science is \emph{causality}, or what it means for
  interventions in the world to \emph{cause} other outcomes, as formalized by
  computer scientists like Judea Pearl. However, these ideas had not previously
  been made rigorous to the standards of the programming-languages community,
  where one expects a (syntactic) program analysis to be proved sound with
  respect to a natural semantics. In the domain of causality, as the relevant
  ``program analysis,'' we focus on $d$-separation, a classic condition on graphs that can be used to decide when the
  design of an experiment controls for sufficiently many confounding variables, even though the reason that this condition works is often unintuitive. Our central
  result (mechanized in Rocq) is that $d$-separation exactly coincides with a
  novel \emph{semantic} definition inspired by
  noninterference from the theory of security. This characterization provides a structural semantic foundation for $d$-separation and helps explain why the graph-theoretic condition is correct, independently of probabilistic assumptions. 
  For each given automated test on
  the quality of an experiment design, our theorem justifies an associated
  method for falsifying the world-modeling hypothesis behind the experiment.
\end{abstract}

\maketitle

\section{Introduction}

A scientific experiment can be seen as a program, perhaps a complex one, that
performs particular interactions with its environment, with the goal of refining 
an understanding of how that environment works. That is, the scientists begin
with a \emph{world model} that identifies a set (often infinite) of possible
programs that could represent the environment. Through the experiment, a scientist
manages to shrink the set of world models that remain plausible. The hazards in
experimental design are so common and diverse that the process of enumerating them is known as assessing \emph{threats to
validity}.
How is a
scientist to know an experimental design accounts for all of the important
hazards? We found that question similar to the question of how a programmer
knows that a program behaves properly, which leads to the whole discipline of
program verification. Noting the apparent lack of such a discipline for
experiment design, we set out to start building up its foundations, which led us
to the specific element of \emph{causality} that this paper covers.

First, let us review the elements of program verification often taken for
granted, to help in drawing connections to experiment design. Nearly any program
verification involves \emph{assumptions} and \emph{proof obligations}. The
assumptions could include the semantics of the language that the program is written
in or the behavior of external libraries to which the program links. Then we
have a \emph{specification} covering how the program is allowed to behave, when
the assumptions hold.

Similar elements are apparent in experiment design. An experiment runs in some
real-world setting, where it is reasonable to model the world as a program. We
do not know exactly which program, but we aim to learn more about the
possibilities. If we simply fix the ``type signature'' of the world in terms of
which measurable quantities exist, then we allow too broad a space of programs, where
it is hopeless to use observation to narrow the possibilities significantly.
Thus, just as a program verification may assume a specification for library
functions, an experiment assumes a \emph{world model} that fixes a class of
programs that ought to include the \emph{true} world program.
Then the experiment, the program to be verified, interacts with the world and uses what it learns to rule out possible world models. 
So \emph{the soundness of logic
for ruling out possible worlds is central to science}.

To make this idea more tangible, consider a simple example from medicine. Suppose we are testing a new treatment for a heart condition, and a scientist’s model assumes that gender has no direct causal effect on recovery rates once treatment and age are controlled for. This assumption corresponds to a particular structure in the model’s causal graph. If, in a randomized clinical trial, we observe that outcomes still vary systematically by gender after conditioning on treatment and age, then the experiment has falsified that model. In this sense, the experiment serves as a test oracle for the world model: it probes the model’s assumptions through carefully designed manipulations.

Our framework aims to formalize the connection between the structure of a causal model and the outcomes of well-controlled experiments in the same way that programming-language semantics formalize what can be proved about a program’s behavior. Program verification often relies on tractable automated analyses of programs,
which should be proved sound with respect to semantics. For instance, an
abstract interpretation~\cite{AbsInt} is used to assign overapproximating 
syntactic descriptions to the variables of a program, iterating to find
a fixed point in a lattice. We define soundness with respect to the assumed
semantics of the programming language. In experiment design, there are similar
techniques that generations of scientists have relied upon. The field of
\emph{causal modeling} 
defines the analogue of programs in a
language: a \emph{structural causal model} or \emph{causal model}, a DAG whose nodes are fundamental quantities
in the world (perhaps observable, perhaps not), where directed edges indicate
(potential) causal relationships between nodes. There is also the equivalent
of a set of program analyses.
In this paper, we focus on \emph{$\mathrm{d}$-separation}, a decidable property that characterizes when a set of nodes in a DAG blocks all influences between two others -- for example, when conditioning on that set avoids spurious conclusions due to confounding.
Although the definition of $d$-separation is simple, why it works is not obvious. We therefore set out to prove a relationship with a semantic characterization that explains the correctness of the graph-theoretic condition.

We were inspired by two families of software-related techniques.
\begin{enumerate}
\item \emph{Information-flow security} drives reasoning about how programs
protect secrets and block undue influence by their environments through
controlling how information flows between different inputs, outputs, and state
elements.  The gold-standard semantic property is
\emph{noninterference}~\cite{noninterference}, for instance characterizing
confidentiality by saying that varying secret inputs to a program must never
change public outputs.  Similar kinds of information-flow restrictions are
central to the value of causal DAGs in restricting the sets of programs needed
to model real-world phenomena, but the field of causality had lacked a similarly
elegant \emph{semantic} explanation of what $d$-separation checks guarantee.
\item \emph{Program testing} allows us to find bugs by subjecting programs to
clever sequences of inputs, designed to exercise corner cases.  Regardless of
the clever test suite, we need a \emph{test oracle} that evaluates program
behavior against a specification.  As an application of our new semantic
characterization of $d$-separation, we demonstrate how to prove soundness of test
oracles, in the sense that, through observation of the world, they are able to
\emph{falsify} world models.  In a (somewhat subtle) sense, we also prove that
certain oracles are \emph{complete}, meaning if we happened to know that no
possible test could fail the oracle, then the $d$-separation syntactic check must
succeed on the DAG.
\end{enumerate}

So, summing up, we introduce a new semantic condition
characterizing what $d$-separation establishes, and we start to explore its
consequences for justifying the sound design of experiments. Our central result
is fully mechanized in the Rocq Prover, and we provide the code as an associated artifact \cite{zhang_2026_19075958}. 

We now review the relevant concepts from the literature on experimental design and causal reasoning.
Then, we present our new semantic characterization of $d$-separation and our main
theorem justifying it (soundness and completeness). Afterward, we examine
implications of that theorem: a more intuitive way of understanding what
$d$-separation tells us, which allows us to prove soundness and completeness for
test oracles to use in experiments. With the payoff established, we conclude by sketching the proof
of our central theorem.

\section{Background}

\subsection{Experimental Design and Validity}
The purpose of scientific experiments is to establish causal relationships. 
For example, randomized controlled trials are the gold standard for establishing causality in medicine.
Experiments are well-designed if they provide unbiased, valid causal estimates. 
In practice, ideal designs are expensive, requiring many participants and resources. 
Certain design choices are more efficient but may inadvertently introduce confounding.
Confounding, or the presence of variables that influence both causes and effects, can lead to an inability to calculate causal estimands~\cite{cinelli2020controls}. 


In \textit{Experimental and Quasi-experimental Designs for Generalized Causal Inference}, \citet{shadish2002validity} put forth a now-prevalent theory of scientific validity for experiments. 
They outline four types of validity: external, construct, internal, and
statistical-conclusion. 
External validity refers to the generalizability of experimental findings.
Construct validity pertains to how well the variables measured (e.g., IQ test
results vs. SAT scores) represent the intended construct (e.g., intelligence).
Internal validity in an experiment refers to the accurate estimation of causal conclusions.
Statistical-conclusion validity relates to the appropriate and correct
formulation of statistical models to analyze data collected from an experiment.
While Shadish et al.\ provide these definitions and enumerate
threats to validity, or ways in which each type of validity can be compromised,
no work has since provided a formal, principled way to detect
scientific validity automatically in experiments. The closest set of tools for diagnosing
validity come in the form of graphical criteria (e.g., backdoor criterion) using Pearl's
causal diagrams~\cite{pearl2009causality}, as discussed below. Our work is a step towards
connecting experimental design with causal reasoning by establishing a formal
framework for analyzing the semantics of causal diagrams. 

Synthesizing the data collected through a scientific experiment to establish
evidence for causal relationships involves the following steps. First,
researchers must formulate a causal model that encodes their hypotheses
about the structure of causal relationships
as a DAG. This model must be falsifiable. It should make
predictions that can be tested and potentially contradicted by observed data.
Second, researchers assign probabilities to values of nodes (variables) and use
statistical methods (e.g., linear regression, Bayesian inference) to update
these probabilities in light of experimental data.
Our work is primarily focused on the first step. We formalize how experimental
designs support or challenge the falsifiability of a given causal model. 


\subsection{Frameworks of Causality}
Structural causal models (SCMs)~\cite{pearl2009causality} represent causal
relationships using DAGs, where nodes correspond to
variables, and edges represent direct causal effects. SCMs encode assumptions
about causal structure explicitly and support reasoning about interventions
using the do-operator, which simulates external manipulation of variables (i.e., random assignment in an experiment).

The Potential Outcomes framework (PO)~\cite{rubin1974estimating, imbens2015causal} characterizes
causal effects by comparing counterfactual outcomes for the same
unit (e.g., patient) under different treatments (e.g., drug vs. placebo).
While PO provides an instance-based view,
SCMs provide a graphical, compositional view of causal networks
from which to derive counterfactuals. 
We focus on SCMs due to their structural similarity to information-flow security models (see \autoref{sec:relatedWork}).

\subsection{Fundamental Concepts in SCMs}
Understanding causality using SCMs begins with acknowledging that different types of
questions about the world require fundamentally different kinds of reasoning.
The hierarchy known as the ``Ladder of Causation'' \cite{bookofwhy} is helpful
for categorizing these levels of causal understanding:
association, intervention, and counterfactuals.
While correlations in observational data alone can answer questions at the first
level, answering intervention or counterfactual questions typically requires
assumptions about the underlying causal structure of the world. 
SCMs, represented as DAGs, support such reasoning. 




\begin{figure}
\begin{framed}
    \begin{subfigure}[b]{\textwidth}
    \centering
    \begin{tikzpicture}[
        node distance=1cm and 1cm,
        every node/.style={minimum size=1.2em, font=\small},
        circ/.style={draw, thick, circle, inner sep=2pt},
        dots/.style={font=\scriptsize},
        roundnode/.style={},
        unobserved/.style={font=\small, circle}
        ]

    \node[roundnode] (F) {\verb|days until finals week|};
    \node[roundnode, right=of F] (M) {\verb|students on campus|};
    \node[roundnode, right=of M] (B) {\verb|boba sales|};

    \draw[->, thick] (F) -- (M);
    \draw[->, thick] (M) -- (B);

    \end{tikzpicture}
    \caption{The number of students studying on campus \textbf{mediates} the effect of finals season on boba sales.}
    \label{fig:boba_sales}
    \end{subfigure}
    
    \begin{subfigure}[b]{\textwidth}
    \centering
    \vspace{.5em}
    \begin{tikzpicture}[
        node distance=1cm and 1cm,
        every node/.style={minimum size=1.2em, font=\small},
        circ/.style={draw, thick, circle, inner sep=2pt},
        dots/.style={font=\scriptsize},
        roundnode/.style={},
        unobserved/.style={font=\small, circle}
        ]

    \node[roundnode] (L) {\verb|courseload|};
    \node[roundnode, left=of L] (C) {\verb|caffeine|};
    \node[roundnode, right=of L] (G) {\verb|GPA|};

    \draw[->, thick] (L) -- (C);
    \draw[->, thick] (L) -- (G);

    \end{tikzpicture}
    \caption{From analysis of experimental data alone, it may appear that higher caffeine consumption is a cause of a lower GPA. However, the causal model shows that the \textbf{confounding} variable, a heavy courseload, is actually the cause of both.}

    \label{fig:caffeine}
    \end{subfigure}
    \begin{subfigure}[b]{\textwidth}
    \centering
    \begin{tikzpicture}[
        node distance=1cm and 1cm,
        every node/.style={minimum size=1.2em, font=\small},
        circ/.style={draw, thick, circle, inner sep=2pt},
        dots/.style={font=\scriptsize},
        roundnode/.style={},
        unobserved/.style={font=\small, circle}
        ]

    \node[roundnode] (D) {\verb|exam grades|};
    \node[roundnode, left=of D] (L) {\verb|procrastination|};
    \node[roundnode, right=of D] (T) {\verb|test anxiety|};

    \draw[->, thick] (L) -- (D);
    \draw[->, thick] (T) -- (D);

    \end{tikzpicture}
    \caption{While lots of procrastination and high test anxiety are otherwise unrelated, conditioning on the \textbf{collider} (e.g., looking only at students with low exam grades) can create a false association between the two.}
    \label{fig:exam_grades}
    \end{subfigure}
    
    \caption{These causal models illustrate examples of mediators, confounders, and colliders in academic settings.}
    \label{fig:examples}
\end{framed}
\end{figure}

It is useful to consider causal \textit{paths} because they represent possible
routes through which influence can flow among variables in a causal model.
These paths are undirected in the sense that they may contain edges pointing
both forward and backward, allowing us to examine dependencies that emerge from
various structural configurations. (This need to account for edges flowing in
both directions can perhaps be seen as the reason why we need to develop new
theoretical machinery, beyond what arises for noninterference in
information-flow security.) The following three node structures often appear in
such paths:
\begin{definition} \label{def:med_con_col}
    Let $\GG = (\VV, \EE)$ be a causal model, where $a, b, c \in \VV$.
    \begin{itemize}
        \item $b$ is a \textbf{mediator} of $a$ and $c$ if $(a,b) \in \EE$ and
        $(b,c) \in \EE$ or if $(b,a) \in \EE$ and $(c,b) \in \EE$.
        \item $b$ is a \textbf{confounder} of $a$ and $c$ if $(b,a) \in \EE$ and
        $(b, c) \in \EE$.
        \item $b$ is a \textbf{collider} of $a$ and $c$ if $(a,b) \in \EE$ and
        $(c,b) \in \EE$.
    \end{itemize}
\end{definition}
\begin{definition} \label{def:med_con_col_path}
    Let $\GG = (\VV, \EE)$ be a causal model and let $P$ be a path in $\GG$. If $a, b,$ and $c$ are three consecutive nodes on $P$, then $b$ is a \textbf{mediator, confounder, or collider on \boldmath$P$} according to which relationship from \autoref{def:med_con_col} holds among $a$, $b$, and $c$.
\end{definition}

Examples appear in \autoref{fig:examples}.
Understanding how to \textit{adjust}, or statistically account, for these
structures is central to causal inference. 
For instance, adjusting for a confounder helps remove estimation bias by ``blocking,'' or
mitigating, spurious associations, 
but adjusting for a collider can actually introduce bias by opening a non-causal path between variables.

More formally, causal relationships are naturally related to the concept of independence. 
In particular, a relevant property in a causal model is whether
two nodes $a$ and $b$ are independent conditioned on some subset of nodes $Z
\subseteq \VV$. If so, then an experiment conditioning on (e.g., through
randomization or sampling) the variables in $Z$ will ensure that $a$ has no
effect on $b$, and vice versa.  
This concept can be formalized further using probabilities.
\begin{definition}\label{def:ci_prob}
    Let $\VV$ be a finite set of variables, and let $\mathbf{P}(\cdot)$ be a probability distribution over $\VV$. Let $a,b \in \VV$ and $Z \subseteq \VV$. Then, variables $a$ and $b$ are \textbf{conditionally independent given \boldmath$Z$} if $$\mathbf{P}(a = \alpha \,|\, b = \beta, Z = (\zeta_1, ..., \zeta_k)) = \mathbf{P}(a = \alpha \,|\, Z = (\zeta_1, ..., \zeta_k))$$ for all possible assignments of values $\alpha, \beta, (\zeta_1, ..., \zeta_k)$.
\end{definition}
While this definition is precise, there is no clear mapping to reasoning about
causal graphs.\ The probabilistic notion of conditional independence is
inherently algebraic:\ it tells us that knowing $Z$ renders $a$ and $b$
independent, but it offers no guidance on how to determine this relationship
from the structure of a graph. In practice, repeatedly checking conditional
independencies from joint distributions is computationally expensive and
provides little insight into \textit{why} the independence holds. What we need
is a way (a ``program analysis'' when we see causal DAGs as our programs) to
read off these relationships directly from the graph itself through a structural
criterion that corresponds to conditional independence in the underlying
distribution. The notion of $d$-separation provides this graphical criterion.

\begin{definition}\label{def:d_conn} Let $\GG = (\VV, \EE)$ be a causal model.
    Then, nodes $a$ and $b$ are \textbf{\boldmath $d$-connected given $Z$},
    where $Z \subseteq \VV$, if and only if there exists an undirected path $P$
    from $a$ to $b$ in $\GG$ such that the following two conditions hold:
    \begin{enumerate}
        \item No mediator or confounder on $P$ is in $Z$.
        \item Every collider on $P$ has a descendant in $Z$ (we
        consider a node as its own descendant).
    \end{enumerate}
    We say $a$ and $b$ are \textbf{\boldmath $d$-separated given $Z$} if and
    only if they are not $d$-connected given $Z$.
\end{definition}
We can think of the variables in $Z$ as ``blocking'' any effect that $a$ may
have on $b$. For example, in \autoref{fig:caffeine}, $\verb|caffeine|$ and
$\verb|GPA|$ are $d$-separated given $Z = \{\verb|courseload|\}$ but not $Z =
\emptyset$.

In \textit{Causality}, \citet{pearl2009causality} connects the notions of $d$-separation and conditional
independence, stating that for a causal model $\GG$, if $a$
and $b$ are $d$-separated given $Z$, then conditional independence given $Z$
holds between $a$ and $b$ for every probability distribution $\mathbf{P}(\cdot)$
compatible with $\GG$. Conversely, if $a$ and $b$ are not $d$-separated given
$Z$, then there is at least one distribution $\mathbf{P}(\cdot)$ compatible with
$\GG$ in which $a$ and $b$ are not conditionally independent given $Z$.

While $d$-separation offers a graph-theoretic criterion for deciding conditional
independence, and the probabilistic definition formalizes it algebraically, both
leave a semantic gap: they do not explicitly connect the structure of a DAG to
the mechanisms that generate the observed distributions. 
On one hand, $d$-separation provides a syntactic rule for detecting conditional independence without grounding it
in the specifics of a data-generating process.
On the other hand, the probabilistic definition treats conditional independence as a property of
distributions without reference to underlying causal mechanisms, making it
agnostic to the data-generating process.
Neither approach fully captures the
intuition that conditional independence should arise from how variables are
causally related through structural assignments and shared randomness. We will
introduce a semantic definition to fill this gap by grounding conditional
independence in the functional dependencies encoded by the DAG.

More specifically, we have found the definition of $d$-separation unintuitive in
terms of the conditions it levies on paths. Why are these conditions \emph{exactly} the
right ones to rule out bad confounding? Pearl's proved equivalence~\cite{pearl2009causality} seems to
offer just that kind of answer, but much is hiding in the phrase ``compatible
with $\GG$.'' This statement restricts the probability distribution $\mathbf{P}$, such
that for every DAG node $v$ with parents $\pa(v)$ (those nodes with edges
heading into $v$), we have $\mathbf{P}(v) = \Pi_i \mathbf{P}(v \mid \pa(v)_i)$.
Setting aside
the difficulty of confirming such probability factorizations in practice, we note that conditional independence is a property of a particular distribution rather than of the graph structure itself. In particular, conditional independence can arise accidentally (e.g., through cancellation effects) even when a causal path exists. We are left wondering why the concept of probability is needed to explain
causality, which can also be formalized for purely deterministic systems.

To that end, we give a deterministic semantic characterization of what $d$-separation establishes, isolating its structural meaning before layering on probabilistic assumptions.
This characterization is not intended as a new algorithmic test; its purpose is to provide a faithful semantic foundation for $d$-separation, thereby justifying the use of $d$-separation as the computationally efficient static analysis of causal diagrams. 


\section{A New Semantic Characterization of Separation Under Conditions}
To capture the ideas of conditional independence in a more intuitive way, we introduce a new concept called \textit{semantic separation}: a characterization of $d$-separation that specifies how values propagate through a graph, rather than focusing on purely syntactic criteria. The semantic interpretation helps us understand not only the relationships among the nodes but also what these relationships imply about the behavior of variables under interventions or counterfactual scenarios.

\subsection{Function-Based Formal Semantics}\label{sec:function_based_formal_semantics}

\autoref{app:formalizing_causal_models} describes in detail how we represent causal models in Rocq, including the development of key graph-theoretic and causal-inference foundations.
In this section, we present the underlying semantic model in mathematical terms. The Rocq formalizations of the semantics are described in \autoref{app:semantics_impl}.

Causal models tell us the relationships that may exist between nodes. Assuming that a causal model accurately captures the causal relationships between nodes, each node's value is determined by a function of the values of its parents. In practice, however, the exact functional form of these relationships is unknown; to account for this uncertainty, we include an independent \textit{unobserved error term} for each node. These unobserved terms, common in causal inference, represent latent factors not explicitly included in the DAG. They are often visualized as greyed-out parent nodes. This abstraction reflects the reality that causal graphs rarely model all relevant background factors, so any omitted influences on a node are absorbed into its unobserved term. 

We choose to capture these relationships using node functions. A \textit{node function} assigns a value to a node based on its unobserved term and the values of its parents. Let $\mathcal{X}$ denote the domain of possible node values. Then each node
function takes an unobserved term in $\mathcal{X}$ and a list of parent values in $\mathcal{X}$ and
returns a value in $\mathcal{X}$. Note that the node function is of course free to ignore or allow little influence by the unobserved term. 

We represent a causal model with an overarching \textit{graph function}, which maps each node in the graph to its corresponding node function. Together with an assignment of unobserved terms, it determines the value of every node.

To illustrate how a graph function captures meaningful causal relationships, consider a simple example modeling how a student's sleep, study hours, and focus influence their test score. Suppose our causal graph is as shown in \autoref{fig:test_scores}.

\begin{figure}[b]
  \centering
  \newcommand{\figheight}{8cm}
  \newcommand{\figwidth}{2.0cm}
  \newcommand{\gap}{2em}

  \begin{tabular}{@{}c@{\hspace{\gap}}c@{\hspace{\gap}}c@{}}
    \raisebox{1em}{\begin{subfigure}[t]{0.25\textwidth}
      \centering
        \begin{tikzpicture}
    [
        node distance=1cm and 1cm,
        every node/.style={minimum size=1.2em, font=\small},
        circ/.style={draw, thick, circle, inner sep=2pt},
        dots/.style={font=\scriptsize}
        ]

    \node[] (C) {\textit{score}};
    \node[above left=.5cm of C] (B) {\textit{focus}};
    \node[above =.5cm of B] (A) {\textit{sleep}};
    \node[above right=.5cm of C] (L) {\textit{study}};
    
    \draw[->, thick] (A) -- (B);
    \draw[->, thick] (B) -- (C);
    \draw[->, thick] (L) -- (C);

    \end{tikzpicture}
    \end{subfigure}}
    &
    \raisebox{0em}{\begin{subfigure}[t]{0.15\textwidth}
      \centering
      \begin{tikzpicture}[
        node distance=1cm and 1cm,
        every node/.style={minimum size=1.2em, font=\normalsize},
        circ/.style={draw, thick, circle, inner sep=2pt}
    ]
    
    \node (x) at (2*.75,2*.75) {$x$};
    \node (r) at (0,2*.75) {$r$};
    \node[circ] (t) at (1*.75,1*.75) {$t$};
    \node (s) at (0,0) {$s$};
    \node (y) at (2*.75,0) {$y$};
    \node (u) at (1*.75,-1.2*.75) {$u$};
    
    \draw[->, thick] (x) -- (t);
    \draw[->, thick] (r) -- (t);
    \draw[->, thick] (t) -- (s);
    \draw[->, thick] (x) -- (y);
    \draw[->, thick] (s) -- (u);
    \draw[->, thick] (y) -- (u);
    
    \end{tikzpicture}
    \end{subfigure}}
    &
    \raisebox{9em}{\begin{minipage}[t]{.5\textwidth}
      \centering
      \caption{(Left) This causal model encodes the intuition that amount of sleep affects focus level, and test score is affected by both focus and study time.}
      \label{fig:test_scores}

      \caption{(Right) $Z = \{t\}$, and $\anc{u} = \{ u, s, y, x \}$. Note that although $x$ has a blocked path to $u$ through $t$, we only require the existence of any unblocked path.}
        \label{fig:unblocked_ancestors}
    \end{minipage}}
  \end{tabular}
\end{figure}

Then, we might have the following graph function $g$, where node values lie in $\mathcal{X} = \mathbb{R}$:

\[
g(v) =
\begin{cases}
(e) \mapsto e
& \textrm{if } v = \mathit{sleep} \\[4pt]

(e) \mapsto e
& \textrm{if } v = \mathit{study} \\[4pt]

(e, \mathit{sleep}) \mapsto g(\mathit{sleep}) + e
& \textrm{if } v = \mathit{focus} \\[4pt]

(e, \mathit{focus}, \mathit{study})
\mapsto 0.6\,g(\mathit{focus}) + 0.4\,g(\mathit{study}) + e
& \textrm{if } v = \mathit{score}
\end{cases}
\]

Here, $\mathit{sleep}$ and $\mathit{study}$ are exogenous variables whose values depend only on independent unobserved error terms $e$ representing factors such as stress, natural ability, distractions, or imprecision of measuring a variable. The variable $\mathit{focus}$ depends causally on $\mathit{sleep}$, and $\mathit{score}$ is a weighted combination of $\mathit{focus}$ and $\mathit{study}$, both also subject to their own unobserved terms.

Since each node's value depends on its parents' values, and the graph is assumed to be acyclic, the node functions can be evaluated in a topological order, yielding a value for every node and guaranteeing termination.

Different assignments of unobserved terms $U$ correspond to different
possible worlds consistent with the causal structure. Specifically, let $U: \VV \rightarrow \mathcal{X}$
be an assignment of unobserved terms to nodes (equivalently, we will sometimes represent $U$ or another assignment as a mapping, e.g., $\{ u: \alpha, v: \beta\}$).

Each node function can be computed with knowledge of $U$ and the structure of $\GG$, since it can then determine the unobserved term and parents of the desired node. Thus, we write $f_U : \mathcal{V} \rightarrow X$ as the graph function that encodes information about each node function and computes a node's value using $U$ as the unobserved-terms assignment for $\VV$.

For example, we would rewrite the above function $g$ as
\[
f_U(v) =
\begin{cases}
U(v)
& \textrm{if } v \in \{ \mathit{sleep}, \mathit{study} \} \\[4pt]
f_U(\mathit{sleep}) + U(v)
& \textrm{if } v = \mathit{focus} \\[4pt]
0.6\,f_U(\mathit{focus}) + 0.4\,f_U(\mathit{study}) + U(v)
& \textrm{if } v = \mathit{score}.
\end{cases}
\]

The function $f_U$ thus captures the entire semantics of the causal model under given assignments of unobserved terms.

\subsection{Defining Semantic Separation}
The semantic framework enables us to define an analogue of conditional independence in a way that aligns with intuition.

Informally, two nodes should be considered independent if changing the value of one has no effect on the value of the other. Under our model, the value of a node is determined by its function, so a natural semantic notion of independence would assert that altering the output of one node's function does not influence the output of the other's. We consider more generally \textit{conditional} independence, of which independence is the case where the conditioning set is empty. In this context, conditioning on a set of nodes $Z$ means we want to hold fixed the values of all nodes in $Z$ while assessing the influence of $u$ on $v$. To provide the fixed values, we introduce an additional set of assignments, $A_Z$, which maps each node in $Z$ to its desired conditioned value.
\begin{definition}\label{def:properly_conditions}
    We say that a world modeled by unobserved-terms assignments $U$ \textbf{properly conditions on \boldmath $Z$} if, for the given assignments $A_Z$, $f_U(z) = A_Z(z)$ for all $z\in Z$.
\end{definition}
In other words, $Z$ is the set of nodes being conditioned on, and $A_Z$ specifies the fixed values for those nodes. For example, if $Z = \{\text{gender}\}$ and $A_Z = \{\text{gender} : \text{F}\}$, then a world with function $f$ and unoberved-terms assignments $U$ properly conditions on $Z$ only if $f_U(\text{gender}) = \text{F}$.

Unlike probabilistic independence, at the level of functional semantics, each node's value depends deterministically on its parents and an unobserved error term. Our definition of semantic separation captures when such dependencies prevent causal influence from flowing between two nodes, given that certain variables must be preserved.

In particular, if $u$ and $v$ are semantically separated, then in a world where $f(u) = \alpha$ and all nodes in $Z$ are properly conditioned, if we were to modify the world \textit{minimally} so that $f(u) = \beta$ and all nodes in $Z$ remain properly conditioned, then $f(v)$ would be unaffected. The randomness or error leading to these different settings can be absorbed into the unobserved-terms assignments, $U$. 

We now dig into the meaning of ``minimally'': the goal is to ensure that the only possible influence on $f(v)$ comes from changes that causally descend from $u$. Since a node's value is affected only by its parents (and its own unobserved term), which are in turn only affected by their parents, we are led to restrict change to the \textit{ancestors} of $u$. However, since conditioned nodes have fixed values, changes to ancestors of $u$ can only affect $f(u)$ if they do not have to pass through conditioned nodes. We thus define the following notion.

\begin{definition}\label{def:unblockedancestor}
    Given a node $u$ and a conditioning set $Z$, a node $w$ is an \textbf{unblocked ancestor} of $u$ if either $w = u$, or there exists a directed path from $w$ to $u$ such that no internal nodes on the path (including $w$) are in $Z$. The set of unblocked ancestors of $u$ given $Z$ is denoted $\anc{u}$.
\end{definition}
An example is shown in \autoref{fig:unblocked_ancestors}.

When we intervene minimally on $u$, its change may propagate through the graph and
inadvertently alter variables in the conditioning set $Z$. Since variables in $Z$ must
remain fixed, we may need to adjust other variables so that the values of $Z$ are
restored. Intuitively, this means undoing, or \textit{repairing}, any changes to conditioned variables that arise from the intervention, in order to model the concept that $Z$ is being held constant structurally.

For example, consider determining how a drug influences health. Some influence may occur through blood-oxygen levels, while experimenters want to determine what other pathways exist. Therefore, they might compare subjects whose blood-oxygen levels are held fixed, counteracting the drug's effect on blood oxygen using some other well-established treatment.
This secondary intervention restores the conditioned variable. In general, however, such adjustments may themselves propagate through the system and disturb other factors that are intentionally being held constant, requiring further corrections until the conditioned variables are restored.

We formalize this intuition as a sequence of repair steps that restore the nodes in $Z$. With this setup, we now define semantic separation.
One way of seeing this definition is as defining a kind of \emph{operational semantics} of actions that experimenters might take, such that semantic separation holds when any such experimenter actions that reach certain termination conditions have preserved certain values.
\begin{definition}\label{def:ci}
    Let $\GG = (\VV, \EE)$ be a graph, and let $u,v \in \VV$. Let $f$ be a graph function, let $\alpha, \beta$ be values, and let $A_Z$ be conditioned assignments for $Z$. Let $U_0, U_1, ..., U_\ell$ be a \textit{sequence} of unobserved-terms assignments, which satisfy the conditions below:
    \begin{enumerate}
        \item \textit{Initialization.} $f_{U_0}(u) = \alpha$, and $f_{U_0}$ properly conditions on $Z$.
        \item \textit{Catalyst update.} $f_{U_1}(u) = \beta$, and $U_1$ differs from $U_0$ only for $a \in \anc{u}$.
        \item \textit{Reparative propagation.} For all $i = 2, ..., \ell$, $U_i$ differs from $U_{i-1}$ only for
        $$a' \in \bigcup_{\substack{z\in Z \\ \exists a \in \anc{z}, \\ U_{i-2}(a) \neq U_{i-1}(a)}} \anc{z}.$$
        \item \textit{Termination.} $1 \leq \ell \leq |\VV| + 1$.
        \item \textit{Restored conditioning.} $f_{U_\ell}(u) = \beta$, and $f_{U_\ell}$ properly conditions on $Z$.
    \end{enumerate}
    We say that $u$ and $v$ are \textbf{semantically separated given \boldmath $Z$} if, for all such $f, \alpha, \beta, A_Z, U_0, ..., U_\ell$, the value of $v$ remains unchanged:
    $f_{U_0}(v) = f_{U_\ell}(v).$
\end{definition}

\autoref{def:ci} is somewhat intricate, quantifying over many auxiliary assignments, but this complexity is intentional; the definition must faithfully capture the semantic idea that a change to $u$ must propagate through the model and, if it disturbs the conditioning set $Z$, can be repaired only by locally justified updates to unblocked ancestors of the disturbed nodes.

We now unpack the definition and build intuition for each requirement, showing that each condition is necessary to rule out spurious noncausal changes while allowing precisely those modifications that reflect legitimate causal propagation.

For instance, in the \textit{initialization}, we establish a reference world to anchor our semantics based on unobserved-terms assignments $U_0$, in which the value of $u$ is fixed at $\alpha$, and the system is well-behaved with respect to $Z$.

In the \textit{catalyst update}, we introduce the ``minimal modification'' to change the value of $u$ to $\beta$, yielding new unobserved-terms assignments $U_1$ that produce the new value for $u$. We need the additional constraint on the nodes for which $U_1$ differs from $U_0$ in order to isolate \textit{why} $f(v)$ changed in the case that the value of $v$ does not stay constant. Specifically, 
if we did not constrain $U_1$ at all, then a change in $f(v)$ might be attributable not to $f(u)$ but to other aspects of the changing unobserved-terms assignments. For example, in the simplest case, if $u$ and $v$ are disconnected nodes (and thus should be semantically separated for any $Z$), we could still have $U_0(v) \neq U_1(v)$, which could cause a difference in $f(v)$ that is entirely unrelated to $u$. Restricting changes to only unobserved terms of $\anc{u}$ ensures that when we compare the two different worlds represented by $U_0$ and $U_1$, those differences are localized to the nodes that could actually affect $u$. In doing so, we rule out irrelevant sources of variation.

At a high level, these first two stages simply describe how to introduce a perturbation to $u$ by changing only the unobserved terms of influence and observing whether the change affected $v$. To make the ideas more concrete, consider the following simple example.

\begin{example}
    Consider the case of the following graph: $x \rightarrow u \rightarrow y \rightarrow v$, where $Z = \emptyset$. Note that $u$ and $v$ are $d$-connected given $Z$, since the path $u\rightarrow y \rightarrow v$ is unblocked. Intuitively, we would expect that $u$ and $v$ are \textit{not} semantically separated, since $u$ has this direct causal path to $v$. 

    Suppose we begin with unobserved terms given by $U$. According to our definition, a change to the value of $u$ can be induced by modifying either $U(u)$ or $U(x)$ to get $U'$, since $f(u)$ could depend on both $U(u)$ and $f(x)$ (using the notation of \autoref{def:ci}, $U_0 = U$ and $U_1 = U'$). Such a change may propagate to $y$ (such that $f_U(y) \neq f_{U'}(y)$), whose value depends on $f(u)$, and then further to $v$, which depends on $f(y)$. Hence, a perturbation of $u$ can ultimately alter $f(v)$ (such that $f_U(v) \neq f_{U'}(v)$), showing that $u$ and $v$ are not semantically separated.
    
    Of course, there exist specific functions $f$ and unobserved-terms assignments $U$ for which perturbing $f(u)$ happens not to affect $f(v)$. However, \autoref{def:ci} quantifies over all such assignments and functions, requiring that no possible change to $u$ under the permitted conditions alter $v$. Therefore, the existence of even a single case where $f(v)$ changes is sufficient to conclude that $u$ and $v$ are not semantically separated.

    Now, consider the same graph but with $Z = \{y\}$. In this case, $u$ and $v$ are $d$-separated by $Z$. Returning to our definition, we see that for all $f$, $f(v)$ depends only on $U(v)$ and $f(y)$. Since $v$ is not an unblocked ancestor of $u$ or any conditioned node, $U(v)$ cannot change. Furthermore, since $y\in Z$, the \textit{restored conditioning} step requires that its value $f(y) = A_Z(y)$ stays fixed at the end of the process. Therefore, $f(v)$ remains invariant across all permissible changes, and $u$ and $v$ are semantically separated.
\end{example}






It may be tempting simply to define semantic separation by requiring that the value of $v$ stays constant through only the first two stages, since they capture the idea that the value of $u$ is perturbed, and $v$ must go unchanged. However, this approach fails to account for how values of $Z$ could affect $f(u)$; their effect will not be from the ancestors of $u$, since their passed-on values are fixed. This idea highlights a key insight: determining semantic separation requires not only a change to the value of $u$ but also the \textit{propagation} of that change throughout the rest of the graph. We are interested in whether the value of $v$ stays constant once the change has been fully propagated. In particular, after the initial catalyst change to $f(u)$, if any values of $Z$ are no longer properly conditioned, we allow changes to recondition them, leading to the stage of \textit{reparative propagation.}

The (possibly empty, since $\ell$ could equal $1$) sequence of unobserved-terms assignments $U_2, ..., U_\ell$ repairs any disrupted values in the conditioning set $Z$. In particular, for $i \geq 2$, each $U_i$ is responsible for restoring the conditioned nodes that were affected by the preceding assignment $U_{i-1}$. During the propagation process, previously repaired nodes may in turn disturb others, requiring further adjustments. Consequently, the repairs proceed gradually through the sequence, with each $U_i$ allowed to differ from $U_{i-1}$ only for ancestors of the nodes in $Z$ that were affected in the previous step. In this way, each set of assignments incrementally restores proper conditioning, using only changes justified by the preceding state.

\begin{example}\label{ex:def_collider}
    Consider the simple structure $u \rightarrow w \leftarrow v$ with $Z = \{ w \}$, as shown in \autoref{fig:def_collider}. We would expect $u$ and $v$ not to be semantically separated; for instance, consider $f(w) := f(u) \oplus f(v)$, which is permitted since $u$ and $v$ are both parents of $w$. Suppose $A_Z(w) = 1$, and we change the value of $u$ from $0$ to $1$. To preserve the conditioned value of $w$, the value of $v$ must change from $1$ to $0$. Thus, the values of $u$ and $v$ are dependent on each other. 

    We can observe this dependence via our definition. In this small graph, all catalysts must originate from changing $U(u)$ to obtain $U'$, since $u$ has no unblocked ancestors besides itself (using the concrete $f$ in the above paragraph, perhaps $f_U(u) := U(u) \text{ mod 2}$, $U(u) = 0$, $U'(u) = 1$). This change affects $f(w)$, so we must perform a reparative-propagation step to restore it. The repair could come from $U'(v)$ or $U'(w)$, since changing $U'(u)$ again would simply revert $u$ to its original value. (It is of course possible to have oscillating or cyclic sequences, repairing via $U'(u)$, which reperturbs $f(w)$, requiring yet another change, but here we consider the minimal sequence.) Suppose the change comes from $U'(v)$. As a result, $f(v)$ changes, showing that $u$ and $v$ are not semantically separated.
\end{example}
\begin{figure}
  \centering
  \newcommand{\gap}{1em}
  \begin{tabular}{@{}c@{\hspace{\gap}}c@{\hspace{\gap}}c@{}}
    \begin{subfigure}[t]{0.2\textwidth}
      \centering
        \begin{tikzpicture}[
    node distance=1.5cm and 1.5cm,
    every node/.style={minimum size=1.2em, font=\Large},
    circ/.style={draw, thick, circle, inner sep=2pt},
    unobs/.style={font=\small}
]

\node[red] (u) at (0, 1) {$u$};
\node[red, circ, below right=0.75cm of u] (w) {$w$};
\node[above right=0.75cm of w] (v) {$v$};

\node[unobs, gray, above=0.5cm of u] (Uu) {$U(u)$};
\node[unobs, gray, above=0.5cm of v] (Uv) {$U(v)$};
\node[unobs, gray] (Uw) at ($(Uu)!0.5!(Uv)$) {$U(w)$};
\fill[rounded corners=3pt, red, opacity=0.2] ($(Uu.south west)$) rectangle ($(Uu.north east)$);

\draw[->, thick, red] (u) -- (w);
\draw[->, thick] (v) -- (w);

\draw[->, red] (Uu) -- (u);
\draw[->, gray] (Uw) -- (w);
\draw[->, gray] (Uv) -- (v);

\end{tikzpicture}
    \end{subfigure}
    &
    {\begin{subfigure}[t]{0.2\textwidth}
      \centering
      \begin{tikzpicture}[
    node distance=1.5cm and 1.5cm,
    every node/.style={minimum size=1.2em, font=\Large},
    circ/.style={draw, thick, circle, inner sep=2pt},
    unobs/.style={font=\small}
]

\node[red] (u) at (0, 1) {$u$};
\node[blue, circ, below right=0.75cm of u] (w) {$w$};
\node[blue, above right=0.75cm of w] (v) {$v$};

\node[unobs, gray, above=0.5cm of u] (Uu) {$U(u)$};
\node[unobs, gray, above=0.5cm of v] (Uv) {$U(v)$};
\node[unobs, gray] (Uw) at ($(Uu)!0.5!(Uv)$) {$U(w)$};
\fill[rounded corners=3pt, blue, opacity=0.2] ($(Uw.south west)$) rectangle ($(Uv.north east)$);

\draw[->, thick, red] (u) -- (w);
\draw[->, thick, blue] (v) -- (w);

\draw[->, red] (Uu) -- (u);
\draw[->, gray] (Uw) -- (w);
\draw[->, blue] (Uv) -- (v);

\end{tikzpicture}
    \end{subfigure}}
    &
    \raisebox{8.5em}{\begin{minipage}[t]{.55\textwidth}
      \centering
      \caption{(Left) \textit{Catalyst update}: modifying $U(u)$ (highlighted) changes $u$, propagating to $w$ and violating its conditioned value. \\
      (Right): \textit{Reparative propagation}: restoring $w$ may use either $U(w)$ or $U(v)$ (highlighted). Here, repair is performed via $U(v)$, forcing a change in $v$.}
        \label{fig:def_collider}
    \end{minipage}}
  \end{tabular}
\end{figure}
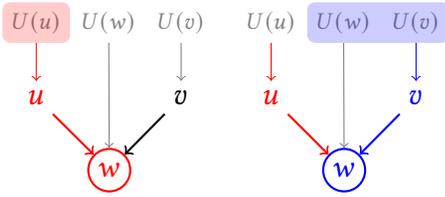
\autoref{ex:def_collider} aligns with the graphical intuition: $u$ and $v$ are $d$-connected in this graph since the collider $w$ is its own conditioned descendant. However, it is arguably more useful to reason through their lack of semantic separation, as above, than to observe the more-technical and less-intuitive collider-descendant rule given in $d$-separation.

\begin{example}
    Now consider a case with two consecutive conditioned nodes, such as the path $u \rightarrow z_1 \rightarrow z_2 \leftarrow v$, where $Z = \{ z_1, z_2 \}$, as shown in \autoref{fig:double_condition}. Note that $u$ and $v$ are $d$-separated because $z_1$ blocks the path between them as a conditioned mediator. 
    
    Suppose we change $U(u)$ to obtain $U'$, affecting $z_1$, which in turn may affect $z_2$. The reparative-propagation condition prevents an alternate repair of $z_2$ via $U'(v)$, since no \textit{unblocked} ancestors of $z_2$ changed from $U$ to $U'$. Instead, restoring $z_1$ (which \textit{does} have unblocked ancestors that changed) to its original value will automatically restore $z_2$.
    
    In this example, our formulation of the reparative-propagation step blocks a potential noncausal path of influence, ensuring that dependencies cannot ``skip through'' consecutive conditioned nodes. This restriction illustrates how the semantic definition enforces local, causally justified repairs.
\end{example}
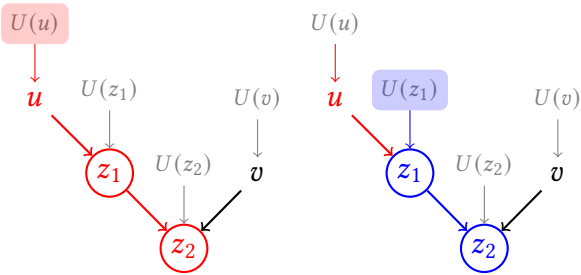
\begin{figure}[b]
  \centering
  \newcommand{\gap}{1.5em}
  \begin{tabular}{@{}c@{\hspace{\gap}}c@{\hspace{\gap}}c@{}}
    \begin{subfigure}[t]{0.25\textwidth}
      \centering
        \begin{tikzpicture}[
    node distance=1.5cm and 1.5cm,
    every node/.style={minimum size=1.2em, font=\Large},
    circ/.style={draw, thick, circle, inner sep=2pt},
    unobs/.style={font=\small}
]

\node[red] (u) at (0, 1) {$u$};
\node[red, circ, below right=0.75cm of u] (z1) {$z_1$};
\node[red, circ, below right=0.75cm of z1] (z2) {$z_2$};
\node[above right=0.75cm of z2] (v) {$v$};

\node[unobs, gray, above=0.5cm of u] (Uu) {$U(u)$};
\node[unobs, gray, above=0.5cm of v] (Uv) {$U(v)$};
\node[unobs, gray, above=0.5cm of z1] (Uz1) {$U(z_1)$};
\node[unobs, gray, above=0.5cm of z2] (Uz2) {$U(z_2)$};
\fill[rounded corners=3pt, red, opacity=0.2] ($(Uu.south west)$) rectangle ($(Uu.north east)$);

\draw[->, thick, red] (u) -- (z1);
\draw[->, thick, red] (z1) -- (z2);
\draw[->, thick] (v) -- (z2);

\draw[->, red] (Uu) -- (u);
\draw[->, gray] (Uz1) -- (z1);
\draw[->, gray] (Uz2) -- (z2);
\draw[->, gray] (Uv) -- (v);

\end{tikzpicture}
    \end{subfigure}
    &
    {\begin{subfigure}[t]{0.25\textwidth}
      \centering
      \begin{tikzpicture}[
    node distance=1.5cm and 1.5cm,
    every node/.style={minimum size=1.2em, font=\Large},
    circ/.style={draw, thick, circle, inner sep=2pt},
    unobs/.style={font=\small}
]

\node[red] (u) at (0, 1) {$u$};
\node[blue, circ, below right=0.75cm of u] (z1) {$z_1$};
\node[blue, circ, below right=0.75cm of z1] (z2) {$z_2$};
\node[above right=0.75cm of z2] (v) {$v$};

\node[unobs, gray, above=0.5cm of u] (Uu) {$U(u)$};
\node[unobs, gray, above=0.5cm of v] (Uv) {$U(v)$};
\node[unobs, gray, above=0.5cm of z1] (Uz1) {$U(z_1)$};
\node[unobs, gray, above=0.5cm of z2] (Uz2) {$U(z_2)$};
\fill[rounded corners=3pt, blue, opacity=0.2] ($(Uz1.south west)$) rectangle ($(Uz1.north east)$);

\draw[->, thick, red] (u) -- (z1);
\draw[->, thick, blue] (z1) -- (z2);
\draw[->, thick] (v) -- (z2);

\draw[->, red] (Uu) -- (u);
\draw[->, blue] (Uz1) -- (z1);
\draw[->, gray] (Uz2) -- (z2);
\draw[->, gray] (Uv) -- (v);

\end{tikzpicture}
    \end{subfigure}}
    &
    \raisebox{9.5em}{\begin{minipage}[t]{.4\textwidth}
      \centering
      \caption{(Left) \textit{Catalyst update}: modifying $U(u)$ (highlighted) changes $u$, propagating through $z_1$ to $z_2$ and violating both conditioned nodes. \\
      (Right): \textit{Reparative propagation}: restoring $z_1$ may only use $U(z_1)$ (highlighted), which also restores $z_2$.}
        \label{fig:double_condition}
    \end{minipage}}
  \end{tabular}
\end{figure}

The sequence continues until all conditioned nodes $z$ once again satisfy $f_{U_\ell}(z) = A_Z(z)$, after which we require that $v$ remain unchanged between the original and fully repaired worlds. This requirement is captured in \textit{restored conditioning}.

The \textit{termination} condition ensures that the definition is well-founded, and the sequence is finite. Since at least one conditioned node must be repaired for each $i$, we can bound the length of the repair sequence $U_2, ..., U_\ell$ by the number of conditioned nodes, which is of course upper-bounded by the total number of nodes $|\VV|$. As discussed in \autoref{ex:def_collider}, this sequence can be \textit{cyclic}, in that a repaired node becomes unconditioned again in a later set of assignments and must be repaired again. While cyclic sequences are not forbidden by \autoref{def:ci}, the definition must hold for all sequences satisfying the criteria, so redundant or oscillating repair paths can be ignored.

Semantic separation thus holds exactly when, across all such perturbation-repair scenarios, the value of $v$ remains invariant. Intuitively, changing $u$ cannot affect $v$ except through violations of $Z$, aligning directly with what $d$-separation expresses graphically. The next section will show that this semantic notion coincides precisely with $d$-separation, providing a formal bridge between graphical and semantic causal models.

\subsection{Semantic Separation \texorpdfstring{$\Longleftrightarrow$}{iff} \texorpdfstring{$d$}{d}-Separation} \label{sec:main_theorem}
Using the notion of semantic separation established in \autoref{def:ci}, we present our central result:
\begin{restatable}{theorem}{equiv}
\label{thm:equiv}
    The notions of semantic separation and $\mathrm{d}$-separation coincide exactly. In particular, for a causal model $\GG = (\VV, \EE)$, two distinct nodes $u,v \in \VV$, and a conditioning set $Z \subseteq \VV$ with $u, v \notin Z$, $u$ and $v$ are semantically separated given $Z$ (for all possible graph functions $f$ and conditioning assignments $A_Z$) if and only if they are $\mathrm{d}$-separated given $Z$ in $\GG$.
\end{restatable}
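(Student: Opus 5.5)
We prove the two directions separately, each by contrapositive.

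\emph{Semantic separation $\Rightarrow$ $d$-separation.} Suppose $u$ and $v$ are $d$-connected given $Z$. Fix an unblocked path $P$ from $u$ to $v$, chosen minimal: shortest among unblocked paths, and for each collider $c$ on $P$ pick a shortest directed path $c \to \cdots \to z_c$ with $z_c \in Z$, so that its internal nodes avoid $Z$. Take $\mathcal{X}$ to be a finite group such as $\mathbb{Z}_2$ with $\oplus$. Define $f$ so that each node on $P$ and on the collider-descendant chains is the $\oplus$ of its unobserved term and the values of those parents that are its neighbours along $P$ or the chains; every other node is the constant $0$. Set $A_Z$ to the values produced by $U_0 \equiv 0$.

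We then build the perturbation sequence explicitly. The catalyst flips $U(u)$. The change then travels along $P$ through mediators and confounders, none of which is in $Z$. At each collider $c$ it would disturb $z_c$ via the chain. The repair step flips the unobserved term of the next segment's source: a confounder, or the node after $c$. This is legal because that node is an unblocked ancestor of $z_c$, through $c$ and the chain, and $z_c$ has a changed unblocked ancestor from the previous step. The $\oplus$ at $c$ then cancels, so $z_c$ is restored.

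Each collider costs one repair step. Since we may choose a path with distinct nodes, there are at most $|\VV|-1$ colliders, so the termination bound $\ell \le |\VV|+1$ holds. At the end $u$ has its new value $\beta$, $Z$ is restored, and $v$ has flipped. This contradicts semantic separation.

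The delicate points are these. Chains for different colliders may intersect each other or $P$. Conditioned nodes $z_c$ may coincide. A repair may also leak into another conditioned node. We handle these by choosing $f$ to depend only on designated parent edges, which makes every node's value a XOR of a specific set of flipped terms that we can track by parity. Where chains share nodes, we pass to a modified path, for instance by routing through the shared node, using the standard ``shortest witness'' argument.

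\emph{$d$-separation $\Rightarrow$ semantic separation.} Assume $u$ and $v$ are $d$-separated given $Z$. Define $R$ as the set of nodes $d$-connected to $u$ given $Z$ (including $u$), together with the collider-side closure: nodes reachable by an active trail that ends entering a node whose unobserved term might change. We prove by induction on $i$ an invariant: every node $a$ with $U_i(a) \neq U_0(a)$, and every node $w \notin Z$ with $f_{U_i}(w) \neq f_{U_0}(w)$, is $d$-connected to $u$ given $Z$ via a trail whose final-step orientation is of the appropriate kind.

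The catalyst case holds since every node in $\anc{u}$ is $d$-connected to $u$ by a directed trail avoiding $Z$. For the repair step, a modified node $a' \in \anc{z}$ where $z \in Z$ has a changed unblocked ancestor $a$. The trail from $u$ to $a$, then down to $z$ (a collider in $Z$, or one with a descendant in $Z$), then up to $a'$, is active. Value changes propagate only forward along non-$Z$ edges, so changed values at nodes outside $Z$ are explained by a changed unobserved term upstream through an unblocked directed path. Nodes in $Z$ are eventually restored, so at the end $f_{U_\ell}(v)$ depends only on unchanged terms and $Z$-values. This requires a final argument: $v$'s value is determined by the unobserved terms of its unblocked ancestors and by the values of $Z$-nodes that are its parents through unblocked routes. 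None of those terms changed, since any such change would yield an active trail from $u$ to $v$.

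The main obstacle is making this invariant precise. Concatenations of trails can repeat nodes, so we must show that walk-based $d$-connection implies path-based $d$-connection. This is a standard but fiddly lemma, and we prove it first.
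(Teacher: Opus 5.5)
Your forward direction has the paper's architecture: a clean $d$-connecting path, role-based node functions, and one flip per segment source. XOR at colliders is a fine substitute for the paper's equality test, though for an arbitrary $\mathcal{X}$ you must encode it through two fixed distinct values. Two steps, however, fail as written.

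\emph{The catalyst.} Values flow only from parents to children, so a change at $u$ never ``travels'' back along an edge pointing into $u$ or through a confounder. For $u \leftarrow c \rightarrow v$ with $Z=\emptyset$, flipping $U(u)$ changes only $f(u)$. Condition~3 then permits no further change, and $f(v)$ is untouched. The catalyst must instead flip the first \emph{source} $s_1$ of $P$, which lies in $\anc{u}$ (\autoref{lem:first_S1_node_anc}). Each repair must then flip the next source (the next confounder, or $v$), not ``the node after $c$'': if that node is a mediator, flipping it restores $c$ but propagates nothing further. This is exactly the sequence~\eqref{eq:sequence_U}.

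\emph{Disjointness of the chains.} Making the collider chains disjoint from $P$ and from each other is the technical core of this direction (\autoref{thm:desc_paths_disjoint}), and ``shortest'' does not deliver it. In \autoref{fig:desc_path_intersects_path_bad} both $u$--$v$ paths have five edges, so the bad one is a shortest $d$-connecting path. Yet every directed path from $q$ into $Z$ passes through $s\in P$, and rerouting through $x$ shortens nothing. What does work is choosing a $d$-connecting path with the \emph{fewest colliders}, together with shortest chains $Q_c$.
\begin{itemize}
\item If $Q_c$ first meets $P$ at $w\neq c$, rerouting along $Q_c$ keeps the path $d$-connecting: $w$ has a descendant in $Z$ via $Q_c$, and if $w\in Z$ it was a collider on $P$ and stays one. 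The reroute turns $c$ into a mediator and adds at most the collider $w$. A new collider at $w$ with no collider of $P$ strictly between $c$ and $w$ would force that stretch of $P$ to be directed from $w$ to $c$, closing a cycle with $Q_c$.
\item Consequently every chain meets $P$ only at its collider. If $Q_c$ and $Q_{c'}$ first meet at $y$, rerouting through them is again a path, and $c$, $c'$ and every collider between them give way to the single collider $y$.
\end{itemize}
Either way the collider count strictly drops, contradicting minimality.

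Your backward direction is a genuinely different route from the paper's and is sound once tightened. The paper isolates specific conditioned nodes (\autoref{lem:value_affected_by_unblocked_seq}, \autoref{lem:two_conditioned_nodes}) and builds an actual path into each. It resolves every overlap by first-intersection case analysis (\autoref{lem:path_u_to_z} and the final proof). You instead carry an active \emph{walk} from $u$ to every node whose unobserved term has changed, and shortcut only once at the end. This confines all overlap reasoning to one short lemma: if a shortcut at a repeated node $w$ turns two non-collider occurrences into a collider, the walk must switch from traversing edges forward to traversing one backward between them, and that switching collider is a descendant of $w$.

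For the induction to close, two adjustments are needed. The walk to a node of $Z$ must end with an edge into it. You must also drop the clause about value changes, because it is false at intermediate steps. Take $u\to z\to w$ with $Z=\{z\}$ and both $z$ and $w$ copying their parent: the catalyst changes $f(w)$, although $w$ is $d$-separated from $u$. The clause is also unnecessary. Condition~3 mentions only unobserved terms, and \autoref{lem:value_affected_by_unblocked} is needed only for $U_0$ versus $U_\ell$, both of which condition properly.
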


Our results are stated for an arbitrary domain $\mathcal{X}$ of node values. We assume:
\begin{itemize}
    \item $\mathcal{X}$ contains at least two distinct elements.
    \item Equality on $\mathcal{X}$ is decidable; that is, there exists a Boolean equality function satisfying reflexivity and symmetry.
\end{itemize}

We prove \autoref{thm:equiv} by splitting it into its two logical directions. We show the forward direction via the contrapositive: assuming that there exists a $d$-connecting path from $u$ to $v$, we define a specific graph function in which changing the value of $u$ alters the value of $v$, thereby violating semantic separation. We prove the backward direction by showing that if $u$ and $v$ are not semantically separated, i.e.,\ there exist unobserved-terms assignments satisfying the conditions of \autoref{def:ci} that propagate a change to the value of $v$, then we can show the existence of a $d$-connected path from $u$ to $v$. The forward and backward directions are proven in Sections \ref{sec:forward} and \ref{sec:backward}, respectively. 

Before turning to the practical advantages of this semantic view, it is helpful to clarify that our framework indeed assumes experimenters may carefully tweak only certain variables (e.g., the ancestors of $u$) within a world, while others remain fixed. This abstraction captures the idealized design of controlled experiments; while in practice such intervention may be approximate, many experimental setups, whether randomized trials in medicine or controlled simulations in physics, aim to perform precisely this kind of manipulation. In this sense, our semantics provides a natural target for how such experiments can be represented formally.

Assuming \autoref{thm:equiv} is true, we now examine situations in which the semantic definition offers benefits over the purely structural perspective offered by $d$-separation.

\section{Advantages of Semantic Separation}

\subsection{Special Cases of the Definition}

\autoref{def:ci} is not obviously more elementary than the standard definition of $d$-separation, which we called unintuitive.
However, specializing our new definition to different cardinalities of $Z$ yields simplified forms that one may find not much less intuitive than noninterference.

Consider first the case of $Z = \emptyset$, where it is easy to see that any repair step must be vacuous.
\begin{corollary}\label{cor:card0}
  Given a graph $\GG$ and two of its nodes $u$ and $v$, they are $\mathrm{d}$-separated without conditioning ($Z = \emptyset$) exactly when, under any compatible graph function (a graph function where each node's function depends only on its parents in the DAG), modifying any assignment $U$ only by changing \emph{the ancestors of $u$} can never modify the value of $v$.
\end{corollary}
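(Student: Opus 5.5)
The plan is to specialize \autoref{thm:equiv} to $Z = \emptyset$ and check that \autoref{def:ci} collapses to the stated condition. The theorem then transfers the equivalence with $\mathrm{d}$-separation directly, so no new graph-theoretic argument is needed.

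First I will simplify the conditioning-related parts of \autoref{def:ci}.
\begin{enumerate}
\item When $Z=\emptyset$, every $U$ vacuously properly conditions on $Z$ (\autoref{def:properly_conditions}), and $A_Z$ is the empty assignment.
\item The unblocked ancestors simplify. With $Z=\emptyset$, no directed path can have an internal node in $Z$, so $\anc{u}$ is exactly the set of (ordinary) ancestors of $u$, including $u$ itself.
\item The reparative-propagation condition becomes trivial. The union in that condition ranges over $z \in Z = \emptyset$, so it is empty. Hence for $i \ge 2$ we must have $U_i = U_{i-1}$, and the whole sequence is determined by $U_0$ and $U_1$. Consequently $f_{U_\ell} = f_{U_1}$, and semantic separation given $\emptyset$ reduces to the following statement: for all compatible $f$, all $U_0$, and all $U_1$ differing from $U_0$ only on ancestors of $u$ with $f_{U_1}(u)=\beta$, we have $f_{U_0}(v)=f_{U_1}(v)$.
\end{enumerate}

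Next I will remove the dependence on $\alpha$ and $\beta$. They are universally quantified, and we may take $\alpha := f_{U_0}(u)$ and $\beta := f_{U_1}(u)$. So the condition is exactly ``modifying $U$ only at ancestors of $u$ never changes $v$.'' Here I must check that the termination bound $1 \le \ell \le |\VV|+1$ admits $\ell = 1$, so every such pair $(U_0, U_1)$ is a legal witness. Conversely, any longer sequence yields the same pair $(U_0, U_\ell = U_1)$. This gives the equivalence of the two quantified statements in both directions.

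Finally, I apply \autoref{thm:equiv} with $Z=\emptyset$; its hypothesis $u,v\notin Z$ holds trivially. I must also confirm that the graph functions quantified over in \autoref{def:ci} are exactly the compatible ones, each node function reading only its parents and its own unobserved term. That is how graph functions were defined in \autoref{sec:function_based_formal_semantics}, so nothing more is needed.

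The main obstacle is the careful bookkeeping in the reduction of the sequence, since the definition quantifies over sequences of arbitrary length. In particular, I must check that collapsing $U_2,\dots,U_\ell$ onto $U_1$ preserves the restored-conditioning clause $f_{U_\ell}(u)=\beta$, which follows immediately from $U_\ell = U_1$.

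One degenerate case remains: the corollary does not assume $u \neq v$, while \autoref{thm:equiv} requires distinct nodes. I will either assume distinctness, as the theorem does, or note that when $u=v$ both sides fail whenever $\mathcal{X}$ has two elements.
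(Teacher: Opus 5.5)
Your proposal is correct and is essentially the paper's own argument: the paper justifies \autoref{cor:card0} only by specializing \autoref{thm:equiv} to $Z=\emptyset$, ``where it is easy to see that any repair step must be vacuous.'' Your bookkeeping fills in exactly those details: unblocked ancestors become all ancestors including $u$, reparative propagation forces $U_\ell = U_1$, and $\alpha,\beta$ can be taken as the realized values of $u$ because \autoref{def:ci} does not require $\alpha\neq\beta$. The one thing to drop is your fallback claim for $u=v$, which holds only if a one-node path counts as $d$-connecting. Under the paper's formal path representation (at least one edge, no repeated nodes), $u$ is vacuously $d$-separated from itself while semantic separation fails, so simply take $u\neq v$ as the theorem does.
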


This definition is extremely close to classic noninterference, with the ancestors of $u$ playing the role of the secret inputs and $v$ the role of the public output.
We have merely picked up a kind of thoroughness condition, forcing us to acknowledge every node upstream of $u$ as a potential influence.
(Classic noninterference is phrased in terms of black-box functions, with no concept of dataflow within, so it makes sense that the classic definition does not try to account for inputs causally influencing others.)

The true complexity of causal reasoning only becomes apparent when we need to account for confounding by choosing the right $Z$.
Let us consider the special case of $Z = \{z\}$.
\begin{corollary}\label{cor:card1}
  Given a graph $\GG$ and two of its nodes $u$ and $v$, they are $\mathrm{d}$-separated conditioning on $\{z\}$ exactly when, under any compatible graph function, modifying any assignment $U$ in either of the following ways can never modify the value of $v$.
  \begin{enumerate}
  \item Changing any ancestors of $u$ not blocked by $z$, such that the value of $z$ itself \emph{is not} affected.
  \item Changing any ancestors of $u$ not blocked by $z$, such that the value of $z$ itself \emph{is} affected; followed by a single step of repair, changing only ancestors of $z$ (since we assume a DAG, no relevant path to $z$ can go through $z$ internally, anyway) to reestablish the value of $z$ without changing the value of $u$.
  \end{enumerate}
\end{corollary}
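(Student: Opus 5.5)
The plan is to derive \autoref{cor:card1} as a specialization of \autoref{thm:equiv} with $Z=\{z\}$, assuming $u,v\neq z$ as that theorem requires. By \autoref{thm:equiv}, $d$-separation given $\{z\}$ is equivalent to semantic separation given $\{z\}$ (quantified over all $f$ and $A_Z$). So it suffices to show that the sequences $U_0,\dots,U_\ell$ allowed by \autoref{def:ci} produce exactly the same endpoint pairs $(U_0,U_\ell)$ as the two modification patterns in the corollary. Only these endpoints matter, since semantic separation compares $f_{U_0}(v)$ with $f_{U_\ell}(v)$.

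First I unfold the conditions for a singleton $Z$. The catalyst step changes only nodes in $\anc{u}$, which are exactly the ``ancestors of $u$ not blocked by $z$.'' For $i\ge 2$, reparative propagation allows changes only inside $\anc{z}$, and only if some unblocked ancestor of $z$ changed at the previous step. I then split on whether $f_{U_1}(z)=A_Z(z)$.

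If the value of $z$ is unaffected, then $\ell=1$ already satisfies restored conditioning, which gives pattern (1). If the value of $z$ is affected, restored conditioning forces $\ell\ge 2$, and the case $\ell=2$ is precisely pattern (2). The repair changes only $\anc{z}$, and the requirement $f_{U_2}(u)=\beta$ is the corollary's ``without changing the value of $u$.'' Here the repair is licensed because the change to $f(z)$ implies that some node of $\anc{z}$ changed between $U_0$ and $U_1$. In the unblocked-ancestor sense, a change in $U$ outside $\anc{z}$ reaches $z$ only through another node of $Z$, and the only such node is $z$ itself. For the converse direction (corollary $\Rightarrow$ semantic separation), each pattern directly yields a valid sequence with $\ell\in\{1,2\}$.

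The main obstacle is showing that longer repair sequences ($\ell>2$) add no new counterexamples. Otherwise the corollary's condition, which quantifies over fewer scenarios, would be strictly weaker. I would first try a collapse argument: given a violating sequence, replace $U_2,\dots,U_\ell$ by the single assignment $U_\ell$ directly after $U_1$. This needs $U_\ell$ to differ from $U_1$ only on $\anc{z}$. Each $U_i$ with $i\ge 2$ differs from $U_{i-1}$ only on $\anc{z}$, because the union in \autoref{def:ci} is either empty or equal to $\anc{z}$. Composing these steps therefore keeps all the differences inside $\anc{z}$. The collapsed sequence still has $f(z)=A_Z(z)$ and $f(u)=\beta$ at its end, and the same value of $v$.

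One subtlety must be checked. If $f_{U_1}(z)$ was already correct but a longer sequence wanders and returns, the collapsed endpoint may differ from $U_1$ on $\anc{z}$ even though $z$ was not disturbed. Pattern (2) as stated requires $z$ to be affected. I would handle this by noting that then $\ell\ge2$ forces some change in $\anc{z}$ between $U_0$ and $U_1$. If that change did not alter $f(z)$, I would still read (2) as permitting the repair. Failing that, I would fall back on \autoref{thm:equiv}'s proof: the forward direction's counterexample for a $d$-connecting path should only need $\ell\le 2$ when $|Z|=1$, since one collider suffices.
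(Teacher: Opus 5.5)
Your route is the one the paper intends: it states the corollary without a separate proof, as a direct specialization of \autoref{def:ci} and \autoref{thm:equiv} to $Z=\{z\}$. Several of your observations are correct:
\begin{itemize}
\item With one conditioned node, $\anc{z}$ is the whole ancestor set of $z$, and every repair step stays inside $\anc{z}$.
\item A sequence whose catalyst disturbs $f(z)$ collapses to $U_0,U_1,U_\ell$, which is pattern (2).
\item If the catalyst touches nothing in $\anc{z}$, the sequence is constant from $U_1$ on, which is pattern (1).
\end{itemize}
One label is reversed: embedding the patterns as sequences proves semantic separation $\Rightarrow$ the corollary's condition, not the converse.

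The gap is the case you flag, and it is not a technicality. Your stated reduction, that sequences and patterns give the same endpoint pairs, is false for a fixed $f$. Take $a\to u$, $a\to z$, $b\to z$, $b\to v$, with $f(z)$ constant, $f(u)=f(a)$, $f(v)=f(b)$, and $f(a),f(b)$ copying their unobserved terms. Changing $U(a)$ and then $U(b)$ is a legal sequence with $\ell=2$, since the repair is licensed because $a\in\anc{z}$ changed, and it moves $v$. Yet under this $f$ no pattern-(1) change reaches $v$, and pattern (2) is impossible because $z$ never moves. So the hard direction (condition $\Rightarrow$ $d$-separation) cannot come from rearranging sequences within one $f$; you must build a different $f$ from the graph. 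Your first fallback, reading (2) loosely, proves a different statement.

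Your second fallback is the right fix, but ``one collider suffices'' is exactly the missing lemma, and it is not automatic. A $d$-connecting path given $\{z\}$ can have several colliders, all of them ancestors of $z$. The lemma does follow from \autoref{thm:desc_paths_disjoint}. Take a clean path (\autoref{def:clean_d_path}) with $Z=\{z\}$:
\begin{itemize}
\item If $z$ is itself a collider, any other collider's descendant path would have to end at $z\in P$.
\item Two colliders outside $Z$ would have descendant paths meeting at $z$.
\end{itemize}
Either way cleanliness fails, so the path has at most one collider. Sources and colliders alternate along a path, so $|S_1|$ is the number of colliders plus one, and the sequence of \autoref{eq:sequence_U} has $\ell\le 2$.

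You must then check that this sequence literally instantiates the patterns.
\begin{itemize}
\item With no collider, interior noncolliders avoid $Z$, so $z\notin P$ and $z\in S_5$. Then $\fpath(z)$ is constant, and $U_0,U_1$ is pattern (1).
\item With one collider, under $U_1$ the collider sees $\beta$ on one side and $\alpha$ on the other, so it outputs a value $\neq A_Z(z)$. This value is copied to $z$ if the collider is not $z$ itself, so $z$ is disturbed.
\item $U_2$ then changes only $s_2\in\anc{z}$ (\autoref{lem:consecutive_S1_nodes}) while $\fpath(u)$ stays $\beta$. This is exactly pattern (2), and $\fpath(v)$ goes from $\alpha$ to $\beta$.
\end{itemize}
With this lemma the collapse argument is unnecessary. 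The easy half uses only the embedding of patterns into sequences, and the hard half uses this direct construction.
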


\subsection{Test Oracles for Experiments}

The scientist's full repertoire depends on probability and statistics, which we have not yet folded into our formalization, but already we can justify some important ingredients in experiments.
Consider the general problem of determining whether a causal DAG is a correct formalization of a real phenomenon.
We may run experiments to try to \emph{disprove} the theory that we modeled the
phenomenon correctly with our DAG. Each run of an experiment measures some of
the nodes in G (partial $f_U$ for unknown $U$).

We want to know  \emph{when it is sound to declare the original DAG falsified,
considering particular measurements obtained in some experiment}.



\begin{corollary}
    Suppose we are given a graph $\GG$ and two of its nodes $u$ and $v$ that are
    $\mathrm{d}$-separated without conditioning (as can be confirmed via a simple graph
    algorithm). Consider an experimental run, implicitly running in some graph function
    $f$ and unobserved-terms assignments $U$, begun by measuring the values
    $f_U(u)$ and $f_U(v)$. Now intervene on u, producing $U'$, in some
    way guaranteed to change only the $U$ values for ancestors of $u$.  If
    measuring $v$ now confirms $f_U(v) \neq f_{U'}(v)$, then we have falsified the
    hypothesis that $\GG$ models the world accurately.
\end{corollary}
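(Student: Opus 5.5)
The corollary follows from \autoref{thm:equiv}, specialized to $Z = \emptyset$ as in \autoref{cor:card0}. I argue by contraposition. Suppose $\GG$ models the world accurately. This means the true world is given by some graph function $f$ that is compatible with $\GG$: each node function reads only the node's unobserved term and the values of its parents in $\GG$. Suppose also that the experimental run is described by unobserved-terms assignments $U$ and $U'$, where $U'$ differs from $U$ only on ancestors of $u$. From these two assumptions I will derive $f_U(v) = f_{U'}(v)$. This contradicts the measured inequality, so the hypothesis that $\GG$ is accurate must be rejected.

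The key step is to instantiate \autoref{def:ci} with $Z = \emptyset$, $A_Z$ the empty assignment, $\ell = 1$, $U_0 = U$, $U_1 = U'$, $\alpha = f_U(u)$ and $\beta = f_{U'}(u)$. I check the conditions in turn.
\begin{itemize}
\item \emph{Initialization} holds because $f_{U_0}(u) = \alpha$ by definition, and every $U$ properly conditions on the empty set.
\item \emph{Catalyst update} holds because, when $Z = \emptyset$, no directed path can be blocked, so $\anc{u}$ is exactly the set of ancestors of $u$. The hypothesis on the intervention is therefore precisely the requirement that $U_1$ differ from $U_0$ only on $\anc{u}$.
\item \emph{Reparative propagation} is vacuous, since $\ell = 1$.
\item \emph{Termination} holds because $1 \le 1 \le |\VV|+1$.
\item \emph{Restored conditioning} holds because $f_{U_1}(u) = \beta$ by choice of $\beta$, and conditioning on $\emptyset$ is trivial.
\end{itemize}
Since $u$ and $v$ are $d$-separated given $\emptyset$, the reverse direction of \autoref{thm:equiv} gives semantic separation given $\emptyset$ for all graph functions. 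In particular it holds for the true $f$, so $f_{U_0}(v) = f_{U_1}(v)$, that is, $f_U(v) = f_{U'}(v)$. Alternatively, I can cite \autoref{cor:card0} directly, which packages exactly this instantiation.

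The main obstacle is not the logic but making the modeling assumptions explicit. Two points need care. First, ``$\GG$ models the world accurately'' must be read as saying that the world's behavior is given by some graph function compatible with $\GG$, with independent unobserved terms. Second, the intervention must be describable as a change to $U$, leaving the node functions $f$ untouched. I would state these as the interpretation of the hypothesis and note that the corollary is relative to it.

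The theorem also requires $u \neq v$ and $u, v \notin Z$. The second condition is trivial here since $Z$ is empty. For the first, if $u = v$ the two nodes are $d$-connected by the trivial path, so $d$-separation already forces $u \neq v$. The decidable-equality and two-element assumptions on $\mathcal{X}$ are needed only for the completeness direction of \autoref{thm:equiv}, so they play no role in this soundness argument.
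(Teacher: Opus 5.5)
Your proposal is correct and follows the paper's route. The paper's proof is simply ``By \autoref{cor:card0},'' and your instantiation of \autoref{def:ci} with $Z=\emptyset$, $\ell=1$, $U_0=U$, $U_1=U'$, followed by the $d$-separation-implies-semantic-separation direction of \autoref{thm:equiv}, is exactly what that corollary packages. One small caveat: in the paper's formal setting every path has at least one edge, so $u$ is vacuously $d$-separated from itself rather than $d$-connected to itself by a trivial path; the distinctness $u \neq v$ should therefore be read off from ``two of its nodes'' (matching the hypothesis of \autoref{thm:equiv}), not derived from $d$-separation.
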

\begin{proof}
  By \autoref{cor:card0}.
\end{proof}


Now, the devil is in the details of how one intervenes during an experiment to ensure
that only the target variables are manipulated (formally captured by the do-operator, which replaces the intervened variable's function with a
constant so that it no longer depends on its usual parents~\cite{pearl2009causality}), while holding all other
structural relationships constant.
Ensuring that any real-world interventions do not change the unobserved factors of other nodes unintentionally is out of the scope of our framework.
Nevertheless, it is edifying to prove a principle for drawing conclusions from experiments
in a purely deterministic setting: $d$-separation can be useful even in the
absence of probability.

\begin{corollary}
  Suppose we are given a graph $\GG$ and two of its nodes $u$ and $v$ that are $\mathrm{d}$-separated
  conditioning on $\{z\}$.  Consider an experimental run, implicitly running in some graph
  function $f$ and unobserved-terms assignments $U$, begun by measuring the values
  $f_U(u)$, $f_U(v)$, and $f_U(z)$.  Now intervene on u, producing $U'$,
  in some way guaranteed to change only the $U$ values for ancestors of $u$ not
  blocked by $z$ in $\GG$, while also modifying the observed value of $z$.  Then
  make a further intervention into $U''$ guaranteed to change only the $U'$
  values for ancestors of $z$, restoring the prior value of $z$ without changing
  the value of $u$.  If measuring $v$ now confirms $f_U(v) \neq f_{U''}(v)$,
  then we have falsified the hypothesis that $\GG$ models the world accurately.
\end{corollary}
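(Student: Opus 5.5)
The plan is to reduce this statement to \autoref{cor:card1}, which in turn rests on \autoref{thm:equiv} specialized to $Z=\{z\}$. I argue by contraposition. Suppose $\GG$ models the world accurately, meaning the true world is some compatible graph function $f$ together with unobserved-terms assignments $U$. I will show that the observed sequence $U, U', U''$ is one of the perturbation--repair scenarios quantified over in \autoref{def:ci}. Since $u$ and $v$ are $\mathrm{d}$-separated given $\{z\}$, \autoref{thm:equiv} then yields $f_U(v)=f_{U''}(v)$, so observing $f_U(v)\neq f_{U''}(v)$ contradicts the hypothesis.

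Concretely, I instantiate \autoref{def:ci} as follows: $Z=\{z\}$, $A_Z=\{z: f_U(z)\}$, $\alpha=f_U(u)$, $\beta=f_{U'}(u)$, and $U_0=U$, $U_1=U'$, $U_2=U''$ with $\ell=2$. I then check each condition in turn.
\begin{enumerate}
\item \textit{Initialization} holds by the choice of $A_Z$ and $\alpha$.
\item \textit{Catalyst update} holds because $U'$ differs from $U$ only on ancestors of $u$ not blocked by $z$. These are exactly the elements of $\anc{u}$ by \autoref{def:unblockedancestor}, with $u$ itself included.
\item \textit{Termination} holds since $\ell=2\le|\VV|+1$.
\item \textit{Restored conditioning} holds because the second intervention restores $f_{U''}(z)=f_U(z)=A_Z(z)$ and preserves $f_{U''}(u)=f_{U'}(u)=\beta$.
\end{enumerate}
The hypothesis that the first intervention modified $z$ is not strictly needed for applying the definition. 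It only reflects case (2) of \autoref{cor:card1}; case (1) would be the $\ell=1$ scenario.

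The step needing care is \textit{reparative propagation} for $i=2$. Here $U''$ may differ from $U'$ only on $\anc{z}$, and only provided that some $a\in\anc{z}$ satisfies $U(a)\neq U'(a)$. The statement only guarantees that $U''$ changes ancestors of $z$, so two gaps must be closed.
\begin{itemize}
\item The changed ancestors must be unblocked ancestors. Since $Z=\{z\}$ and $z$ is the endpoint of any directed path to $z$, no internal node of such a path lies in $Z$ (acyclicity, as remarked in \autoref{cor:card1}). Hence every ancestor of $z$ lies in $\anc{z}$.
\item The triggering condition must hold. The first intervention changed the value of $z$. Because $f_U(z)$ is determined by $U$ restricted to the ancestors of $z$, with $z$ counted as its own ancestor, some ancestor $a$ of $z$ must satisfy $U(a)\neq U'(a)$, and by the previous point $a\in\anc{z}$.
\end{itemize}
This closes the condition, and applying \autoref{thm:equiv} with this $f$ and $A_Z$ completes the argument. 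If one prefers to avoid the triggering argument, one can simply cite \autoref{cor:card1}(2) directly, after observing that its informal description matches this instantiation.
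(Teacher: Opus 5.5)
Your argument is correct and is the same reduction the paper makes: its proof is just a citation of \autoref{cor:card1}, i.e.\ \autoref{thm:equiv} specialized to $Z=\{z\}$, and your instantiation $U_0=U$, $U_1=U'$, $U_2=U''$, $A_Z=\{z: f_U(z)\}$, together with your check of the reparative-propagation trigger, spells out what that citation leaves implicit. Do drop your aside that the hypothesis that the first intervention modifies $z$ is ``not strictly needed'': it is exactly what you use to get some $a\in\anc{z}$ with $U(a)\neq U'(a)$, and without it the claim fails (take $u$ isolated and edges $a\to z$, $a\to v$ with $f_U(v)=f_U(a)=U(a)$ and $f_U(z)=U(z)$; changing only $U(u)$ and then only $U(a)$ keeps $u$ and $z$ fixed yet changes $v$, although $u$ and $v$ are $\mathrm{d}$-separated given $\{z\}$).
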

\begin{proof}
  By \autoref{cor:card1}.
\end{proof}

To illustrate these special cases, consider a simple example inspired by \autoref{fig:caffeine}, relating caffeine consumption to GPA. Suppose an experimenter hypothesizes a causal diagram in which extra caffeine has no direct effect on GPA once courseload is controlled. Our formalism for $|Z| = 1$ provides a way to test this hypothesis: the experimenter repeatedly recruits new subjects, measures their GPAs, and applies the intervention (e.g., increasing caffeine intake). Whenever a subject's courseload changes because of the intervention, we modify it directly, say through a requirement to drop excess classes. After such repairs, we confirm that GPA never changes. If it does, then the causal diagram -- and thus the scientific hypothesis -- is falsified.
In general, this procedure can be repeated to search systematically for counterexamples to the hypothesized causal model.

We have stated these theorems to cover what we might call \emph{soundness} of a test oracle for experiments: if the experimenter intervenes in a certain way and collects measurements satisfying a property, then we know that $\GG$ is inaccurate.

We might also aim for \emph{completeness}: intuitively, if $\GG$ is inaccurate, then some experiment of the given kind can be run to uncover that inaccuracy. 
The reality is not that simple, as \autoref{thm:equiv} describes semantic separation \emph{in all possible compatible graph functions $f$}, while a given sequence of experiments presumably only runs in the one $f$ that describes reality accurately.
That $f$ may have bad properties that prevent our chosen kind of experiment from noticing an inconsistency. 
We thus instead obtain completeness in terms of all compatible worlds, rather than a single realized world. 
Prior work effectively axiomatized the independence requirements of these worlds; our result shows that the semantic property can be proved without assuming such axioms.


Our hope is that our refined semantic understanding of $d$-separation can be a stepping stone to proving explicitly probabilistic theorems about soundness of experimental designs that appeal to checking relevant $d$-separations.
Having established the payoff of our central theorem, we return to its proof. We sketch the main ideas of each direction of implication, leaving most technical details to the appendices.

\section{Forward Direction: Semantic Separation Implies \texorpdfstring{$d$}{d}-Separation} \label{sec:forward}

We now prove the forward direction of \autoref{thm:equiv} by contrapositive. Suppose that $u$ and $v$ are not $d$-separated given $Z$, so there exists a $d$-connected path from $u$ to $v$ in $\GG$. We will show that $u$ and $v$ are not semantically separated given $Z$. Concretely, we will construct a graph function $f$ and unobserved-terms assignments $U_0, ..., U_\ell$ that satisfy the five conditions of \autoref{def:ci} but result in different values for $f_{U_0}(v)$ and $f_{U_\ell}(v)$, thus violating semantic separation.

At a high level, the strategy is as follows: we define a specific graph function $f$ that evaluates each node differently based on its structural role in the $d$-connected path, specifically whether it is a mediator, confounder, or collider. This function will be constructed such that all noncolliders on the path are assigned the same value. Since both $u$ and $v$ are endpoints and thus noncolliders, this property ensures that $f(u) = f(v)$. We will then construct a valid sequence of unobserved-terms assignments that modifies $u$'s value and causes the change to propagate through the path to $v$, ultimately changing $f(v)$ as well, which will demonstrate that $u$ and $v$ are not semantically separated and complete the contrapositive proof.

\subsection{Constructing a Function to Equate Node Values}
We examine simple cases of $d$-connected paths before generalizing to an arbitrary path.

\subsubsection{A Chain of Mediators}\label{sec:mediator_chain}
We start with the simplest case: the path from $u$ to $v$ consisting of a single edge $u \rightarrow v$. Since $u$ is a parent of $v$, we can define $f(v) := f(u)$ to propagate the value directly. Now suppose the path is a longer chain of mediators: $$u \rightarrow m_1 \rightarrow m_2 \rightarrow \cdots \rightarrow m_k \rightarrow v.$$
In this case, we define $f(v) := f(m_k)$, $f(m_i) := f(m_{i-1})$ for $i = 2, ..., k$, and $f(m_1) := f(u)$. Each node's value depends only on its parent in the path, and so this definition satisfies the structure of a causal model while ensuring all values along the path are equal.

We must also ensure that such a function respects conditioning on $Z$; otherwise it cannot be used in \autoref{def:ci}. In particular, if some descendant of a node on the path lies in $Z$, its value could be disrupted if the values along the path change. To prevent this improper conditioning, we simply define $f(z) := A_Z(z)$ for all $z\in Z$, ensuring that all conditioned nodes remain fixed and do not rely on any values of nodes in the path. Thus, we have shown that if $u$ and $v$ are connected by a path of all mediators (i.e.,\ a directed path), then there is a graph function that equates their values.

Note that the assumption that the path is $d$-connected is crucial. If any $m_i \in Z$, then changing $f(u)$ could violate the conditioning of $m_i$. Furthermore, the change would not propagate past $m_i$ to $v$, since $f(m_i) = A_Z(m_i)$ would be fixed.

The case of a directed path in the opposite direction (i.e.,\ towards $u$) is symmetric; we flip the function definitions accordingly to ensure again that all nodes evaluate to the same value.

\subsubsection{A Single Confounder}
We now consider a simple fork: $u \leftarrow c \rightarrow v$, where $c$ is a confounder. Since both $u$ and $v$ have $c$ as a parent, we can define $f(u) := f(c)$ and $f(v) := f(c)$. This graph-function definition will equate all values along the path to equal the value of the confounder.

More generally, suppose the path is two chains of mediators connected by a single confounder: $$u \leftarrow m_1 \leftarrow \cdots \leftarrow m_k \leftarrow c \rightarrow n_j \rightarrow \cdots \rightarrow n_1 \rightarrow v.$$
Then, we combine this idea with the ideas in \autoref{sec:mediator_chain}, defining the values of the left chain to be equal to $f(m_k)$ and the right chain to be equal to $f(n_j)$, and finally $f(m_k) := f(c)$ and $f(n_j) := f(c)$. Again, all values are equated along the path.

\subsubsection{A Single Collider}\label{sec:single_collider}
Suppose the path is $u \rightarrow c \leftarrow v$, where $c$ is a collider. We wish to equate all noncollider nodes on the path (just $u$ and $v$ in this case). This case is the least intuitive because $u$ and $v$ are both parents of $c$, but any effect of the value of $u$ on the value of $v$ must occur through $c$, their shared child node.

By the definition of $d$-connectedness, $c$ must have a descendant in $Z$. First, consider the case where $c \in Z$. Then, there is some assigned value $x := A_Z(c)$ which $c$ must evaluate to in every properly conditioned setting of unobserved-terms assignments. Choose some $y \neq x$, using the assumption that node values come from a type with at least two elements. Then, we can define
\[f(c) := \begin{cases}
    x & \text{if } f(u) = f(v) \\
    y & \text{otherwise},
\end{cases}\]
and any unobserved-terms assignments that properly condition on $Z$ necessarily force $f(u) = f(v)$.

Now, suppose $c\not \in Z$, so it has some descendant $d \in Z$. Let $d_1, ..., d_k$ be the intermediate nodes of the directed path from $c$ to $d$, as shown in \autoref{fig:single_descendant_path}.
Using the mediator-chain strategy of \autoref{sec:mediator_chain}, we can equate the values of nodes on $c$'s descendant path: $$f(d) := f(d_k), f(d_i) := f(d_{i-1}), f(d_1) := f(c),$$ so that $f(c) = f(d)$. We then define $f(c)$ in the same way as above, where $x := A_Z(d)$. Again, any unobserved-terms assignments that properly condition on $Z$ force $f(u) = f(v)$.

\begin{figure}[b]
  \centering
  \newcommand{\figheight}{8cm}
  \newcommand{\figwidth}{2.0cm}
  \newcommand{\gap}{2em}

  \begin{tabular}{@{}c@{\hspace{\gap}}c@{\hspace{\gap}}c@{}}
    \begin{subfigure}[t]{0.15\textwidth}
      \centering
        \begin{tikzpicture}[
    node distance=1cm and 1cm,
    every node/.style={minimum size=1.2em, font=\normalsize},
    circ/.style={draw, thick, circle, inner sep=2pt}
]

\node[blue] (u) at (0,1.5) {$u$};
\node[blue] (v) at (2, 1.5) {$v$};
\node[blue] (c) at (1,1) {$c$};
\node[below=.25cm of c] (d1) {$d_1$};
\node[below=.25cm of d1] (dots) {$\vdots$};
\node[below=.25cm of dots] (dk) {$d_k$};
\node[circ, below=.25cm of dk] (d) {$d$};

\draw[->, thick, blue] (u) -- (c);
\draw[->, thick, blue] (v) -- (c);
\draw[->, thick] (c) -- (d1);
\draw[->, thick] (d1) -- (dots);
\draw[->, thick] (dots) -- (dk);
\draw[->, thick] (dk) -- (d);

\end{tikzpicture}
    \end{subfigure}
    &
    \raisebox{2em}{\begin{subfigure}[t]{0.3\textwidth}
      \centering
      \begin{tikzpicture}[
    node distance=1cm and 1cm,
    every node/.style={minimum size=1.2em, font=\normalsize},
    circ/.style={draw, thick, circle, inner sep=2pt}
]

\node[blue] (u) {$u$};
\node[blue, right=.5cm of u] (t) {$t$};
\node[blue, right=.5cm of t] (q) {$q$};
\node[blue, right=.5cm of q] (r) {$r$};
\node[blue, right=.5cm of r] (v) {$v$};
\node[below=.75cm of t] (p) {$p$};
\node[circ, below=.75cm of p] (s) {$s$};
\node[below=.75cm of q] (y) {$y$};
\node[circ, right=1cm of s] (x) {$x$};

\draw[blue, ->, thick] (u) -- (t);
\draw[->, thick] (u) -- (p);
\draw[->, thick] (t) -- (p);
\draw[->, thick] (p) -- (s);
\draw[blue, ->, thick] (t) -- (q);
\draw[blue, ->, thick] (r) -- (q);
\draw[->, thick] (q) -- (y);
\draw[->, thick] (y) -- (x);
\draw[->, thick] (r) -- (x);
\draw[blue, ->, thick] (r) -- (v);

\end{tikzpicture}
    \end{subfigure}}
    &
    \raisebox{13.5em}{\begin{minipage}[t]{.4\textwidth}
      \centering
      \caption{(Left) Assuming the path highlighted in blue is $d$-connected, then $c$ must have a descendant in $Z$. Assuming $c\not\in Z$, it must have a descendant $d \in Z$ and a descendant path to $d$.}
      \label{fig:single_descendant_path}

      \caption{(Right) Consider the $d$-connected path from $u$ to $v$ highlighted in blue, where $Z = \{ s, x\}$. Then, the partition is as follows: $S_1 = \{u, r\}$, $S_2 = \{t, v\}$, $S_3 = \{q\}$, $S_4 = \{y, x\}$, $S_5 = \{s\}$, $S_6 = \{p\}$. }
        \label{fig:partition_example}
    \end{minipage}}
  \end{tabular}
\end{figure}

\subsubsection{General Construction of \texorpdfstring{$\fpath$}{fpath}} \label{sec:fpath_intro}
We now generalize the above ideas to an arbitrary $d$-connected path $P$ from $u$ to $v$. We construct a graph function $\fpath$ that forces all noncollider nodes on $P$ to take on the same value in a setting of unobserved terms that properly conditions on $Z$.

Specifically, we partition the nodes in $\VV$ into six sets and define a specific node function for each node based on the set into which it falls. Recall that each node function computes the node’s value from its unobserved term and the values of its parents.
\begin{itemize}
    \item $S_1$ (sources): nodes on $P$ with no neighboring parents on the path. The node function simply copies the value of the unobserved term.
    \item $S_2$ (transmitters): every node on $P$ with exactly one neighboring parent on the path. The node function simply copies the value of that parent along $P$.
    \item $S_3$ (colliders): every node on $P$ whose two neighboring nodes on the path are both parents. The node function compares the values of those two parents. If they are equal, then the node function outputs $x := A_Z(d)$,
    where $d$ is the conditioned descendant associated with the collider (possibly the collider itself). Otherwise, it outputs a distinct value $y \neq x$.
    \item $S_4$ (descendants): nodes along paths to conditioned descendants from colliders in $P$ (choosing one path per collider). The node function copies the value of the node's parent along the chosen descendant path.
    \item $S_5$ ($Z$-residual): conditioned nodes not included in $\bigcup_{i=1}^4 S_i$. The node function copies the node's assigned value in $A_Z$.
    \item $S_6$ (residual): remaining nodes. Assign any arbitrary node function.
\end{itemize}
Here, all confounders lie in $S_1$ and mediators in $S_2$. The endpoints $u$ and $v$ fall into either $S_1$ or $S_2$ depending on the directions of the edges adjacent to them. For example, using the graph in \autoref{fig:partition_example}, we would assign 
\begin{equation*}
    \begin{split}
        &\fpath(t) := \fpath(u), \hspace{2em}
        \fpath(v) := \fpath(r), \hspace{2em}
        \fpath(x) := \fpath(y) \\
        &\fpath(y) := \fpath(q), \hspace{2em}
        \fpath(q) := \begin{cases}
            A_Z(x) & \text{if } f(t) = f(r) \\
            \rho &\text{otherwise, for some $\rho \neq A_Z(x)$}.
        \end{cases}
    \end{split}
\end{equation*}




For each collider $c \not\in Z$, we select a descendant $d\in Z$ (which exists because $P$ is $d$-connected) and add all nodes on the directed path from $c$ to $d$ (excluding $c$, including $d$) to $S_4$.
However, an issue arises: we know that sets $S_1, S_2, S_3$ are disjoint since we assume the path from $u$ to $v$ is acyclic, and thus each node on the path is categorized into exactly one set. Then, we can assign each node in each set a different node function. However, nodes in $S_4$ may also be on $P$ or appear on multiple descendant paths, which would prevent the clean assignment of node functions.

Fortunately, whenever a $d$-connected path exists, there also exists one with all descendant paths disjoint from each other and from the path itself. The existence proof appears in \autoref{app:disjoint_desc_paths}. We can thus assume, without loss of generality, that the sets $S_1, ..., S_6$ form a disjoint partition of $\VV$, and we can thus fully construct $\fpath$. The Rocq definition is provided in \autoref{app:exset}.


We can see that under $\fpath$, the values of the nodes depend on the sources; each source will propagate its value to any neighboring transmitters. Thus, the values of all sources, determined by their unobserved terms, must agree.
\begin{definition}\label{def:source-fixed}
    For any $\alpha$, we say $U$ is \textbf{source-fixed to \boldmath $\alpha$} if $U(w) = \alpha$ for all sources $w \in S_1$.
\end{definition}
Given any source-fixed $U$, $\fpath_U$ indeed equates noncollider values. At a high level, a node in $S_2$ copies its parent's value, and chains of $S_2$ nodes are anchored by $S_1$ nodes, whose values are set directly by the unobserved terms, which are all equal since $U$ is source-fixed. Furthermore, since any two chains of transmitters that collide at an $S_3$ node take on the same value, the $S_3$ node will take on the correct $A_Z$ value to pass on to its conditioned descendant. Thus, for source-fixed $U$, $\fpath$ also properly conditions on $Z$. We formally show these two results in \autoref{app:equate_node_values}.

Thus, with source-fixed unobserved-terms assignments $U$, we have a function that properly conditions on $Z$ and forces the values of all noncollider nodes---most importantly, the values of $u$ and $v$---to be equal. We next construct a sequence of assignments that ensures the propagation of a value from $u$ to $v$, violating semantic separation.

\subsection{The Sequence of Unobserved-Terms Assignments}\label{sec:sequence_unobs}
We aim to use $\fpath$ to prove that $u$ and $v$ are \textit{not} semantically separated given $Z$, as described in \autoref{def:ci}. Choose some $\alpha \neq \beta$. If $U_0$ is source-fixed to $\alpha$, we will have $\fpath_{U_0}(u) = \fpath_{U_0}(v) = \alpha$. If we can create a sequence $U_1, ..., U_\ell$ satisfying the requirements of \autoref{def:ci} such that $U_\ell$ is source-fixed to $\beta$, then $\fpath_{U_\ell}(u) = \fpath_{U_\ell}(v) = \beta$, and notably $\fpath_{U_0}(v) \neq \fpath_{U_\ell}(v)$, proving that $u$ and $v$ are not semantically separated given $Z$.

We again examine simple cases of a $d$-connected path from $u$ to $v$ before providing the general construction of such a sequence.

\subsubsection{A Directed Edge}
In the case that $P$ is a single directed edge $u \rightarrow v$, then $S_1 = \{u\}$. We can simply define $U_0 = \{ u: \alpha \}$ and $U_1 = \{ u: \beta \}$, where the assignments for other nodes are arbitrary (but identical) for both $U_0$ and $U_1$. Note that $U_0$ and $U_1$ indeed differ only for members of $\anc{u}$ (only $u$). Here, $\ell = 1$, and $\fpath_{U_0}(v) \neq \fpath_{U_\ell}(v)$, so $u$ and $v$ are not semantically separated.

\subsubsection{A Single Confounder}
Now consider the case that $P$ is a simple fork $u \leftarrow w \rightarrow v$. Here, $S_1 = \{w\}$. Define $U_0 = \{ w : \alpha \}$, where other nodes are assigned arbitrarily. Note that changing the unobserved term of $u$ will not cause any change in $\fpath(u)$, since $\fpath(u)$ depends on the value of $w$. However, $w$ is an unblocked ancestor of $u$. Thus, we can define 
\[U_1(w') := \begin{cases}
    \beta & \text{if } w' = w \\
    U_0(w') & \text{otherwise}.
\end{cases}\]
Note that $U_0$ and $U_1$ are both source-fixed (to $\alpha$ and $\beta$, respectively), so if we let $\ell = 1$, we once again see that $U_0$ and $U_1 = U_\ell$ satisfy the requirements from \autoref{def:ci}, and thus $u$ and $v$ are not semantically separated.

\subsubsection{A Single Collider}
Now, consider the case that $P$ has a single collider: $u \rightarrow w \leftarrow v$, where $w\in Z$. Here, $S_1 = \{ u, v\}$. Define $U_0 = \{u: \alpha, v: \alpha\}$. Define
\begin{equation*}
    \begin{split}
        &U_1(w') := \begin{cases}
    \beta & \text{if } w' = u \\
    U_0(w') & \text{otherwise},
\end{cases} \hspace{2em} \text{and} \hspace{2em}
        U_2(w') := \begin{cases}
        \beta & \text{if } w' = v \\
        U_1(w') & \text{otherwise}.
    \end{cases}
    \end{split}
\end{equation*}
Note that under $U_1$, we are no longer properly conditioning on $Z$. In particular, $\fpath_{U_1}(u) = U_1(u) = \beta$, but $\fpath_{U_1}(v) = U_1(v) = \alpha$. Recall that $\fpath_{U_1}(w) = A_Z(w)$ only if $\fpath_{U_1}(u) = \fpath_{U_1}(v)$.

We thus perform a reparative-propagation step, defining $U_2$. Note that $U_2$ differs from $U_1$ for only $v \in \anc{w}$, which satisfies Condition 3 of \autoref{def:ci}. We let $\ell = 2$, and note that $U_\ell$ is source-fixed to $\beta$. Thus, $\fpath_{U_\ell} (v) = \beta \neq \fpath_{U_0}(v)$, so $u$ and $v$ are not semantically separated.

\subsubsection{General Construction of Sequence}
These simple cases show us that the propagation of a change in the value of $u$ can occur via the sources in the path (the nodes in $S_1$). In particular, if we change the unobserved terms of the sources, one-by-one, until the unobserved-terms assignments are source-fixed, and the value of $v$ is changed, then we can show that $u$ and $v$ are not semantically separated.

Concretely, suppose there are $\ell$ sources, and suppose they are organized in order of their appearance in $P$, such that $S_1 = [s_1, ..., s_\ell]$. Choose some $\alpha \neq \beta$. Define $U_0$ to be any unobserved-terms assignment source-fixed to $\alpha$. Then define the following sequence of unobserved-terms assignments, for $i = 1, ..., \ell$:
\begin{equation}\label{eq:sequence_U}
    \begin{split}
        U_i(w) = \begin{cases}
                \beta & \text{if } w = s_i \\
                U_{i-1}(w) & \text{otherwise}
            \end{cases}
    \end{split}
\end{equation}

Note that for all edge orientations of $P$, $S_1$ will have at least one node. Thus, the sequence will contain at least $U_1$, as needed for \autoref{def:ci}. 

The sequence also meets the remaining conditions of \autoref{def:ci}. In particular, the reparative-propagation step (Condition 3) holds because each triple of consecutive $U_i, U_{i+1}, U_{i+2}$ changes the unobserved terms for only two consecutive sources, which must be separated by a collider in the path, of which the two sources are both unblocked ancestors. The full proof is given in \autoref{app:forward}.

\section{Backward Direction: \texorpdfstring{$d$}{d}-Separation Implies Semantic Separation}\label{sec:backward}

We now prove the backward direction of \autoref{thm:equiv} by contrapositive:
assume that $u$ and $v$ are not semantically separated, so there exists some graph function $f$ and sequence of unobserved-terms assignments $U_0, ..., U_\ell$ satisfying the conditions of \autoref{def:ci}, such that $f_{U_0}(v) \neq f_{U_\ell}(v)$. We then use this sequence to show the existence of a $d$-connected path given $Z$ from $u$ to $v$, which will show that the two nodes are not $d$-separated.

\subsection{Change Originates From Unblocked Ancestors}
While in \autoref{sec:forward}, we leveraged a specific $d$-connected path to propagate a change to $v$, we now need to consider what a change in a function's value must imply about the structure of the graph.

In particular, recall that a node's value depends on its unobserved term and the values of its parents. Thus, if $f_{U}(v) \neq f_{U'}(v)$ for some $U, U'$, then it must be true that either $U(v) \neq U'(v)$, or $f_U(a) \neq f_{U'}(a)$ for some $a \in \pa(v)$. In the latter case, it again must be true that $U(a) \neq U'(a)$, or $f_U(a') \neq f_{U'}(a')$ for some $a' \in \pa(a)$. Since $\GG$ is acyclic, this chain of parents must eventually terminate at a node whose unobserved term in $U$ differs from its unobserved term in $U'$. Furthermore, note that each node in the chain is not in $Z$, since $f(z)$ is fixed for all $z\in Z$. 

We formally describe this conclusion in the following lemma:
\begin{restatable}{lemma}{valueaffectedbyunblocked}\label{lem:value_affected_by_unblocked}
    For any graph function $f$ and two unobserved-terms assignments $U, U'$, such that $f_U$ and $f_{U'}$ both properly condition on $Z$, if $f_U(w) \neq f_{U'}(w)$ for some $w\in \VV$, then there exists a node $a \in \anc{w}$ such that $U(a) \neq U'(a)$.
\end{restatable}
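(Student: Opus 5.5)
We prove the lemma by well-founded induction on the node $w$ along the DAG order. Concretely, we induct on the length of the longest directed path ending at $w$ (equivalently, position in a topological order); the Rocq development would use a fuel or depth measure in the same way. The statement we carry through the induction is exactly the lemma for a fixed pair $U, U'$ that both properly condition on $Z$: for every $w$ with $f_U(w) \neq f_{U'}(w)$, some $a \in \anc{w}$ has $U(a) \neq U'(a)$.

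\textbf{Step 1: $w \notin Z$.} Since $f_U$ and $f_{U'}$ both properly condition on $Z$ (\autoref{def:properly_conditions}), every $z \in Z$ satisfies $f_U(z) = A_Z(z) = f_{U'}(z)$. So $f_U(w) \neq f_{U'}(w)$ forces $w \notin Z$.

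\textbf{Step 2: unfold the node function at $w$.} The value $f_U(w)$ is the node function $g(w)$ applied to $U(w)$ and the list of parent values $(f_U(p))_{p \in \pa(w)}$, and likewise for $U'$. If all these inputs agreed, the outputs would agree, since $g(w)$ is a function. So either $U(w) \neq U'(w)$, or $f_U(p) \neq f_{U'}(p)$ for some $p \in \pa(w)$. In the first case we take $a = w$, which lies in $\anc{w}$ by \autoref{def:unblockedancestor}. In the second case, $p$ is strictly earlier in the DAG order, so the induction hypothesis gives $a \in \anc{p}$ with $U(a) \neq U'(a)$.

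\textbf{Step 3: lift $a \in \anc{p}$ to $a \in \anc{w}$.} If $a = p$, then the single edge $p \to w$ is a directed path from $a$ to $w$. Its only non-endpoint node is $p$, and $p \notin Z$ by Step 1 applied to $p$ (its values differ). Otherwise there is a directed path from $a$ to $p$ whose nodes other than $p$, including $a$, avoid $Z$. Appending the edge $p \to w$ gives a directed path from $a$ to $w$; its internal nodes are the old ones plus $p$, and $p \notin Z$ again by Step 1. In both cases $a \in \anc{w}$.

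The only delicate point, and the one I expect to be the main obstacle (mostly in the mechanization), is Step 3. We must check that the newly internal node $p$ is outside $Z$, and that the endpoint $w$ itself is never required to avoid $Z$. Noting that every node with differing values lies outside $Z$ (Step 1) handles this uniformly. Acyclicity is used only to justify the induction and to ensure that appending $p \to w$ does not create a repeated node; if paths must be simple, we take $a$'s path minimal.
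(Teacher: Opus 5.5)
Your proof is correct and follows essentially the same route as the paper's: strong induction along the topological order, splitting on whether $U(w) \neq U'(w)$ (take $a = w$) or some parent's value differs (apply the hypothesis to that parent). You then extend its unblocked-ancestor path by the edge into $w$, using that a node whose value differs between two properly conditioned worlds cannot lie in $Z$.
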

\begin{proof}
  See \autoref{app:backward}.
\end{proof}

\autoref{lem:value_affected_by_unblocked} already leads us towards a $d$-connected path to $v$ because it points us to a specific unblocked ancestor of $v$. In particular, if two nodes share an unblocked ancestor, then they are automatically $d$-connected, since we can enumerate the possible configurations that can occur: either the shared ancestor coincides with one of the two nodes, or a common unblocked ancestor acts as a confounder connecting the two nodes via two directed paths. In either case, all mediators are not conditioned on. In the second case, the confounder (the unblocked ancestor itself) is also not conditioned on. This result is shown formally in \autoref{app:backward}.


\subsection{\texorpdfstring{$d$}{d}-Connected Paths For Short Sequences}
We sketch the process of finding a $d$-connected path between $u$ and $v$ for short sequences of assignments that satisfy the conditions of \autoref{def:ci} but change the value of~$v$.



Suppose $\ell = 1$, so we change the value of $v$ between $U_0$ and $U_1$. Then, since $U_0$ and $U_1$ both properly condition on $Z$, \autoref{lem:value_affected_by_unblocked} tells us that there exists a node $a \in \anc{v}$ such that $U_0(a) \neq U_1(a)$. However, $U_0$ and $U_1$ are constrained such that they only differ for values in $\anc{u}$. Thus, $u$ and $v$ share an unblocked ancestor, so they must be $d$-connected.

Now, consider the case that $\ell = 2$, so $f_{U_0}(v) \neq f_{U_2}(v)$. Again, by \autoref{lem:value_affected_by_unblocked}, there exists a node $a \in \anc{v}$ such that $U_0(a) \neq U_2(a)$. If $U_1(a) = U_2(a)$, then by identical logic to the case $\ell = 1$, $u$ and $v$ are $d$-connected. Otherwise, if $U_1(a) \neq U_2(a)$, then there must exist $z \in Z$ such that $a \in \anc{z}$, and there is an $a'\in \anc{z}$ such that $U_0(a') \neq U_1(a')$. Then, $a' \in \anc{u}$. We can then construct the path shown in \autoref{fig:twoconfounderpatha}. We must consider the possibility that the directed paths making up the path overlap each other, which we handle formally in \autoref{app:backward_overlap}. One particularly interesting possible overlap is between the directed paths $a' \cdots z$ and $a \cdots z$. Note that if they do not overlap, then $z$ is a collider in the path, and $z \in Z$, so the path is indeed $d$-connected. It is however possible that the paths overlap at some node $c$ and take the same path to $z$. Luckily, this resulting path exactly mimics the setup of $d$-connectedness for a collider that has a descendant in $Z$, as shown in \autoref{fig:twoconfounderpathb}.
\begin{figure}
    \begin{subfigure}[b]{0.48\textwidth}
        \centering
        \begin{tikzpicture}[
        node distance=1cm and 1cm,
        every node/.style={minimum size=1.2em, font=\normalsize},
        circ/.style={draw, thick, circle, inner sep=2pt},
        dots/.style={font=\scriptsize}
        ]
        
        \node (a) at (1,1) {$a'$};
        \node[below left=.25cm of a, dots] (dots1) {\reflectbox{$\ddots$}};
        \node[below right=.25cm of a, dots] (dots2) {$\ddots$};
        \node[below left=.25cm of dots1] (w1) {$u$};
        \node[circ, below right=.25cm of dots2] (z) {$z$};
        \node[above right=.25cm of z, dots] (dots3) {\reflectbox{$\ddots$}};
        \node[above right=.25cm of dots3] (a1) {$a$};
        \node[below right=.25cm of a1, dots] (dots4) {$\ddots$};
        \node[below right=.25cm of dots4] (w2) {$v$};
            
        \draw[->, thick] (a) -- (dots1);
        \draw[->, thick] (dots1) -- (w1);
        \draw[->, thick] (a) -- (dots2);
        \draw[->, thick] (dots2) -- (z);
        \draw[->, thick] (a1) -- (dots3);
        \draw[->, thick] (dots3) -- (z);
        \draw[->, thick] (a1) -- (dots4);
        \draw[->, thick] (dots4) -- (w2);
        \end{tikzpicture}
        \caption{The two-confounder path constructed from unblocked ancestors $a'$ and $a$. Note that by construction, the path is $d$-connected.}
        \label{fig:twoconfounderpatha}
    \end{subfigure}
    \hspace{1em}
    \begin{subfigure}[b]{0.48\textwidth}
        \centering
        \begin{tikzpicture}[
        node distance=1cm and 1cm,
        every node/.style={minimum size=1.2em, font=\normalsize},
        circ/.style={draw, thick, circle, inner sep=2pt},
        dots/.style={font=\scriptsize}
        ]
        
        \node (a) at (1,1) {$a'$};
        \node[below left=.25cm of a, dots] (dots1) {\reflectbox{$\ddots$}};
        \node[below right=.01cm and .25cm of a, dots] (dots2) {$\ddots$};
        \node[below left=.25cm of dots1] (w1) {$u$};
        \node[below right=.01cm and .25cm of dots2] (c) {$c$};
        \node[below=.2cm of c, dots] (vdots) {\raisebox{.1cm}{$\vdots$}};
        \node[circ, below=.2cm of vdots] (z) {$z$};
        \node[above right=.01cm and .25cm of c, dots] (dots3) {\reflectbox{$\ddots$}};
        \node[above right=.01cm and .25cm of dots3] (a1) {$a$};
        \node[below right=.25cm of a1, dots] (dots4) {$\ddots$};
        \node[below right=.25cm of dots4] (w2) {$v$};
            
        \draw[->, thick] (a) -- (dots1);
        \draw[->, thick] (dots1) -- (w1);
        \draw[->, thick] (a) -- (dots2);
        \draw[->, thick] (dots2) -- (c);
        \draw[->, thick] (a1) -- (dots3);
        \draw[->, thick] (dots3) -- (c);
        \draw[->, thick] (a1) -- (dots4);
        \draw[->, thick] (dots4) -- (w2);
        \draw[->, thick] (c) -- (vdots);
        \draw[->, thick] (vdots) -- (z);
        \end{tikzpicture}
        \caption{If the paths overlap at collider $c$, then the path is still $d$-connected since $z$ is a conditioned descendant.}
        \label{fig:twoconfounderpathb}
    \end{subfigure}
    
    \caption{For any $u$, $v$ such that $v$'s value is affected by a change in $u$ with a sequence of unobserved-terms assignments $U_0, U_1, U_2$, we can construct a $d$-connected path from $u$ to $v$ using $z\in Z$ and shared ancestors $a', a$.}
\end{figure}
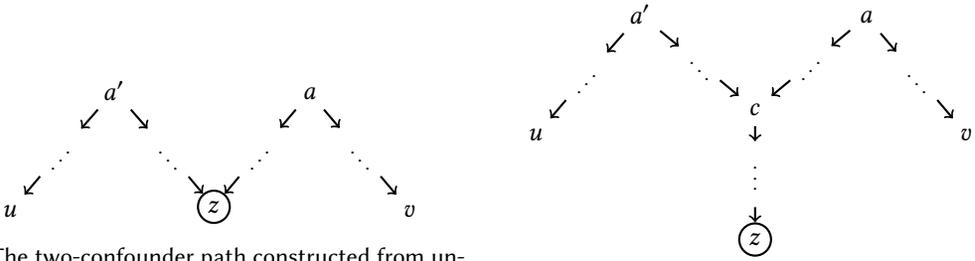

Thus far we have only constructed examples involving paths with at most one collider and two confounders. To establish full equivalence to $d$-separation, we expect to rely on $d$-connected paths of arbitrary structure with any number of mediators, confounders, and colliders.
This observation provides intuition behind why our definition of semantic separation requires a \textit{sequence} of assignments rather than a single step: the change must propagate through potentially many intermediate nodes, depending on the structure of $\GG$. In the following section, we generalize this reasoning to arbitrary-length sequences that satisfy the constraints in \autoref{def:ci}, thereby leveraging the full expressive power of $d$-connectedness.

\subsection{Generalizing to Arbitrary-Length Sequences}\label{sec:arb_length}
Intuitively, for $i \geq 1$, each transition from $U_i$ to $U_{i+1}$ changes an unblocked ancestor whose influence passes through a shared descendant $z \in Z$. Each transition introduces a collider, and concatenating the directed paths from these ancestors to their descendants yields a $d$-connected path, showing that $u$ and $v$ must be $d$-connected.

A natural strategy for generalizing this logic to arbitrary-length sequences $U_0, ..., U_\ell$ is to attempt induction on $\ell$. However, this approach does not directly succeed.
In particular, let the induction hypothesis be that any sequence with $\ell = n$ showing semantic dependence between two nodes implies a $d$-connected path between them.
For a sequence with $\ell = n+1$, one might try applying the hypothesis to the subsequence $(U_1, U_2, \dots, U_{n+1})$ to get a path from some $z \in Z$ to $v$, then prepend a path from $u$ to $z$.
However, this approach fails because $U_1$ may differ from $U_0$ only for ancestors of $u$, while later steps may involve changes across ancestors of multiple $z_i \in Z$. Since there is no guarantee of a $d$-connected path from each of these $z_i$ to $v$, we cannot inductively extend the argument.
Applying the hypothesis to the prefix $(U_0, U_1, \dots, U_n)$ encounters a different problem: $U_{n+1}$ must fully recondition on $Z$, whereas intermediate steps need not, which once again prevents a clean inductive step. 

Importantly, the sequence need not be minimal: $v$ may change earlier than the final step. If $v$ changes after $U_1$, some $z \in Z$ must have changed first, enabling $v$’s downstream change.
This observation motivates the correct inductive strategy. In the base case, $v$ changes directly due to $U_1$, implying a shared unblocked ancestor with $u$. Otherwise, $v$ changes through a node $z \in Z$, and by locating the step where $z$ was first affected, we can apply induction: either $z$ itself shares an unblocked ancestor with $u$, or its change was triggered by the repair of another node $z'$, recursively extending the path toward $u$.
Full details are provided in \autoref{app:cons_path}.

By concatenating this inductively obtained path from $u$ to $z$ with the path from $z$ to $v$, we construct a $d$-connected path from $u$ to $v$. Full details of this construction and its role in completing the backward direction appear in \autoref{app:backward_overlap}, thus establishing the central equivalence.

\section{Related Work} \label{sec:relatedWork}

We build on the formalism of causal diagrams that \citet{Pearl1995CausalDiagrams,pearl1988probabilisticReasoning} introduced.
Geiger, Verma, and Pearl~\cite{geiger1990dSeparation} first established the equivalence between $d$-separation and a semantic notion based on three types of independence: probabilistic, relational, and correlational. 
We offer a deterministic perspective. 
A book~\cite{pearl2009causality} and a survey article~\cite{Pearl2010Introduction} are helpful to overview more-recent developments.
The semantics style that we use in this paper, where graph nodes are given meanings via deterministic functions, comes from viewing causal diagrams as encoding information about conditional independences~\cite{pearl1988probabilisticReasoning}.

\subsection{Program Analysis for Scientific Validity}
PlanOut~\cite{bakshy2014planout} is a DSL for specifying and executing large-scale online experiments.
Planalyzer~\cite{tosch2019planalyzer} checks internal-validity violations in PlanOut scripts and generates statistical-analysis plans for estimand calculations.
These projects are not grounded in world models, while we aim to develop foundations for such grounding through semantics of causal networks.
Similarly, Dagitty~\cite{textor2011dagitty} enables graphical causal analysis but requires manual alignment by users of experiments and statistical analyses with graphs, whereas we aim to build towards formally proving such correspondences.

Research on synthesizing statistical analyses spans HCI and statistics
but is often domain-specific (e.g., Experiscope~\cite{guimbretiere2007experiscope} for sensor logs, Statsplorer~\cite{wacharamanotham2015statsplorer} for teaching statistics) or 
supports limited data-collection strategies (e.g., Touchstone~\cite{mackay2007touchstone}, Touchstone2~\cite{eiselmayer2019touchstone2}). 
The Automatic Statistician~\cite{lloyd2014automatic} is a general-purpose approach but relies solely on data, compromising external and internal validity.

The Tea~\cite{jun2019tea}, Tisane~\cite{jun2022tisane}, and rTisane~\cite{jun2024rtisane} projects are most closely related to our work, as they
(i) assess statistical-analysis synthesis through the lens of scientific validity and
(ii) introduce a formal-methods perspective to reasoning about validity. However, their focus is on establishing that statistical models can be synthesized from input causal graphs in a way that is usable by end-users who are not causality experts. As such, these systems feature higher-level DSLs for eliciting domain-specific causal assumptions and apply recommendations from the statistics literature to construct regression models. They neither reason about the semantic properties of causal diagrams nor reconcile functional and graph-theoretic views of causality. Instead, their implementations operate on graph patterns (e.g., identifying mediators, confounders, and colliders) without specifying what these structures mean semantically about the assumed world.

Our work focuses on establishing a clear semantic foundation of causality. In our envisioned scientific reasoning stack, our work provides lower-level foundations on which systems such as Tea, Tisane, and rTisane could build. In the future, such systems could leverage our framework to obtain provably valid guarantees rather than relying solely on informal statistical ``best practices.''

\subsection{Probability in Semantics and Verification}

One major use of probability in programming languages is in probabilistic programming, introduced with the Church programming language~\cite{Goodman2008Church} and remaining an active area of study in the programming-languages research community (e.g. with Anglican~\cite{Tolpin2016Anglican}) and statistical practice (e.g. with Stan~\cite{Carpenter2017Stan}).
\citet{Kozen1981Semantics} gave an early formal semantics of probabilistic programs, \citet{McIverMorgan2005pGCL} developed program-verification methods in this domain, and a variety of verification work has followed, for instance in relational reasoning for privacy properties~\cite{Barthe2016Relational}.
These two bodies of work use probability explicitly to state their results, while our work shows how a celebrated probabilistic theorem can also be given a deterministic characterization.

\subsection{Program Analysis and Information-Flow Security}

Abstract interpretation~\cite{AbsInt} is an influential early formalism, assigning meanings to summaries of program state computed by dataflow analysis~\cite{Kildall1973Dataflow} with uniform proof that those summaries are accurate.
A starting point for our work was trying to attach similar \emph{semantic} correctness conditions to forms of analysis from causal inference, starting with $d$-separation.

The closest such semantic condition from the literature is noninteference as proposed by \citet{noninterference}, creating the basis for almost all later work on information-flow security.
Dual versions of noninterference capture both confidentiality (an adversary does not learn secret values of variables) and integrity (an adversary cannot influence the values of important variables).
An especially well-developed proof approach is \emph{security types}~\cite{JFlow}, which e.g., associate program variables with levels of secrecy and use static analysis to check that the types are respected.
Similar guarantees can also be provided through run-time analysis, as developed in both operating systems~\cite{Zeldovich2006HiStar} and high-level programming languages~\cite{Stefan2011LIO}.

Our work enables an analogy between, on the one hand, security types and noninterference; and, on the other hand, $d$-separation and our new semantic characterization.
Quantitative information-flow security~\cite{QuantitativeIflow} has also been studied for some time in the context of probabilistic state machines and offers inspiration for our planned extensions toward probability and statistics.

\section{Limitations and Future Work}

Our goal is to develop formal foundations for reasoning about causal diagrams
as part of a broader theory of experiment verification. In particular,
semantic separation captures a basic task in scientific reasoning:
determining when one variable cannot causally influence another under
assumed relationships. In this sense, our work isolates a
\emph{structural} layer of reasoning about causality.

Future work should build up to the explicitly probabilistic and statistical
setting needed to characterize real experiments.
The ``full stack'' of scientific reasoning incorporates not only probability and
statistics (e.g., assuming a normal distribution for each unobserved quantity)
but also further assumptions about possible $f$ values (e.g., perhaps every node
function must be linear). We see several key ingredients of scientific analysis
yet to be justified through future work.
\begin{itemize}
\item Falsifying graphs as world models is a good start, but in general we want
to run experiments to narrow the sets of possible graph functions.  For
instance, if we assume graph functions are linear, we can learn tighter
intervals containing their true coefficients.
\item Real measurements often have continuous domains, making any principle
questionable when it appeals to precise equality or inequality, motivating use
of assumptions about probability distributions behind common statistical
methods.
\item As mentioned above, it may be difficult to convince ourselves that
experimental procedures only modify unobserved terms relevant to particular
graph nodes, or we may need to run more-observational studies that depend on
sampling subjects with particular inherent properties.  Randomized selection of
subjects and assignment to conditions (e.g., treatment vs. control) then become
important. 
\end{itemize}

We are optimistic that the foundations we have begun to develop 
will help characterize the building blocks of future probabilistic frameworks in that vein.
Today, PLDI papers often include mechanized proofs of their key claims, and perhaps some day papers throughout the natural and social sciences can have associated mechanized proofs of soundness for their experiment designs.

\section*{Artifact-Availability Statement}
The Rocq development \cite{zhang_2026_19075958} was evaluated and is publicly available.

\bibliographystyle{ACM-Reference-Format}
\bibliography{paper,emjun}

\newpage

\appendix

\section{Formalizing Causal Models in Rocq}\label{app:formalizing_causal_models}
This appendix presents the foundational work of implementing causal diagrams within the Rocq Prover. We begin by implementing DAGs and build upon this framework to formalize key causal concepts and theorems, combining classic graph theory with domain-specific ideas from causal inference.

\subsection{Causal Diagrams as Directed Acyclic Graphs}
Representing DAGs in Rocq involves designing data structures to represent directed graphs and implementing associated functions, including pathfinding, cycle detection, and topological sort. Importantly, each of these functions should have its correctness formally proven using Rocq, ensuring that it performs as expected under all circumstances.

We choose to represent causal models in terms of \verb|nodes| and \verb|edges|:
\begin{minted}{coq}
Definition node: Type := nat.
Definition nodes := list node.

Definition edge: Type := node * node.
Definition edges := list edge.

Definition graph: Type := nodes * edges.
\end{minted}
Nodes are simply natural numbers. These choices allow us to reuse several \verb|nat|-specific functions and theorems that are available in the Rocq standard library, while still representing nodes in an intuitive way. 

Paths are also a crucial part of causal models, since $d$-connectedness requires the existence of a $d$-connecting path. We represent paths as follows:
\begin{minted}{coq}
Definition path: Type := node * node * nodes.
Definition paths := list path.
\end{minted}
Here, a path is represented as a tuple of its start node, end node, and intermediate nodes. For example, the path $(1, 4, [2;3])$ would represent the path $1 \rightarrow 2 \rightarrow 3 \rightarrow 4$. This choice of representation allows us to access the endpoints of a path easily as well as require that a path has a start and an end node. Note that by this definition, a path has at least one edge; the path $(1, 1, [])$ would denote the self-loop at node 1, rather than the 1-path consisting only of node 1.
We often use more standard notation and write a path $(u, v, l)$ as a list $[u] {++} l {++} [v]$ instead.

Formalizing basic graph operations in Rocq is far from straightforward; the purely functional setting and the need for total, terminating programs mean that even relatively simple procedures require careful construction. 
In our development, standard graph-theoretic operations, including pathfinding, cycle detection, and topological sorting, have been mechanized and verified. This includes formal proofs of classic results that are often assumed in pen-and-paper treatments, such as the existence of an enumeration of all acyclic paths between two nodes and the property that parents precede children in a valid topological sort. The correctness of the system thus follows directly from the verified Rocq development.\footnote{At the time of artifact submission, the development is fully self-contained, with functional extensionality being the only remaining axiom.}


We highlight two major categories of graph functionality: pathfinding and topological sorting. Their implementation demonstrates how complex graph-theoretic properties can be handled within Rocq’s purely functional and constructive framework.

\subsubsection{Pathfinding} \label{sec:pathfinding}
Our pathfinding implementation relies on recursive exploration. Given a graph $\GG$ and two nodes $u,v$, we aim for the function $\verb|find_all_paths_from_start_to_end|(u, v, \GG)$ to return a list of paths that contains all undirected paths between $u$ and $v$ in $\GG$. The function relies on numerous intermediate results. 
\begin{itemize}
    \item \begin{minted}{coq}
Fixpoint edges_as_paths_from_start (u: node) (E: edges): paths
    \end{minted}
    Returns list of paths of length 2 that start from $u$ by transforming edges that are incident to $u$ into paths, i.e., $(u, v) \mapsto (u, v, [])$. 
    \item \begin{minted}{coq}
Fixpoint extend_paths_from_start_by_edge (e: edge) (l: paths): paths
    \end{minted}
    Returns list of paths that contains all paths in \verb|l|, as well as all paths that can be constructed from extending a path in \verb|l| by \verb|e|. An extension is only performed if \verb|e| is incident to the path's endpoint and the resulting path does not contain a duplicate node, thereby preventing the introduction of cycles into the path set.
    \item \begin{minted}{coq}
Fixpoint extend_paths_from_start_by_edges (E: edges) (l: paths): paths
    \end{minted}
    Returns result of repeatedly calling \verb|extend_paths_from_start_by_edge| for all edges in \verb|E| on the result of the previous call, beginning with \verb|l|.
    \item \begin{minted}{coq}
Fixpoint extend_paths_from_start_iter
                (E: edges) (l: paths) (k: nat): paths
    \end{minted}
    Returns result of repeatedly, \verb|k| times, calling \verb|extend_paths_from_start_by_edges| on the result of the previous call, beginning with \verb|l|.
\end{itemize}
Then, given a graph $\GG = (\VV, \EE)$, a start node $u$, and an end node $v$, we can find all undirected and acyclic paths from $u$ to $v$ in $\GG$ by iterating \verb|extend_paths_from_start_iter| $|\VV|$ times:
\begin{minted}{coq}
Definition find_all_paths_from_start_to_end
                            (u v: node) (G: graph): paths :=
  match G with
  | (V, E) => filter (fun p => v =? path_end p)
                (extend_paths_from_start_iter E
                    (edges_as_paths_from_start u E) (length V))
  end.
\end{minted}

While the resulting function is inefficient compared to typical imperative algorithms, these definitions are intended primarily to be used as part of a formal semantics rather than for execution, and they are furthermore sufficient for the relatively small causal models used in practice.

The correctness of this implementation is established by a mechanized theorem, which asserts that a path $p$ is in the result of \verb|find_all_paths_from_start_to_end| if and only if it is a valid, acyclic, undirected path from $u$ to $v$. The proof is structured into two main parts:

\begin{itemize}
    \item \textit{Soundness:} We prove by induction on the number of iterations $k$ that every path generated by \verb|extend_paths_from_start_iter| is a valid path in the graph and contains no repeating nodes. This relies on the check in \verb|extend_paths_from_start_by_edge|, which explicitly rejects extensions that would create a cycle. Since the final step filters these results for paths ending in $v$, any returned path is guaranteed to be a valid acyclic path from $u$ to $v$.
    
    \item \textit{Completeness:} We show that any acyclic path of length $n$ in the graph will be captured by the $n$-th iteration of the extension process. Since an acyclic path in a graph $(\VV, \EE)$ can have at most $|\VV|$ nodes, iterating $|\VV|$ times is sufficient to cover the entire space of possible acyclic paths. The proof uses induction on the structure of a given arbitrary path $p$ to show that each prefix of $p$ is present in the corresponding step of the iterative extension.
\end{itemize}

Building on the pathfinding routine, we can define a function to detect directed paths, which are essential for identifying descendants and ancestors of nodes, both of which are concepts used extensively in our semantic treatment of causal models. We also define a function to detect cycles, which is a necessary safeguard in many downstream functions such as topological sort and tests for $d$-separation.

\subsubsection{Topological Sort}\label{sec:topo_sort}
Topological sorting plays a central role in causal inference, providing an ordering of variables consistent with the direction of causality. This ordering is also crucial for defining the semantics of causal models, which rely on evaluating each node's value based on the values of its parents.

In Rocq, our function for topological sort must account for the possibility that the inputted graph is \textit{not} acyclic. To this end, the return type is wrapped in an \verb|option|; the function returns \verb|Some ts| if a valid topological sort \verb|ts| exists and \verb|None| otherwise:
\begin{minted}{coq}
Definition topological_sort (G: graph): option nodes
\end{minted}

Our implementation is based on indegree counting. We iterate $|\VV|$ times, where $\GG = (\VV, \EE)$ is the inputted graph. In each step, we find a node with indegree zero (a node with no parents), append it to the resulting list, and remove it and all its incident edges from the graph. The algorithm recurses on the remaining subgraph until either all nodes are processed or no indegree-zero node is found (in which case the graph contains a cycle and the function returns \verb|None|).

With this function defined, we prove several important properties, such as the sort containing exactly the graph's nodes with no duplicates. Most notably, we mechanize the existence guarantee: that the function always returns some valid topological order for an acyclic input. While this is a classic result, the mechanized proof requires induction on the number of nodes and several auxiliary lemmas, including that acyclicity is preserved at each step and an application of the pigeonhole principle to derive a contradiction.

\subsection{Formalizing Key Causal Concepts}
With the basic DAG framework established, we shift focus to more specific elements of causal models, such as classifying intermediate nodes in a path and determining $d$-separation and $d$-connectedness.

\subsubsection{Identifying Mediators, Confounders, and Colliders}\label{sec:med_con_col_functions}
We begin by defining functions to extract mediators, confounders, and colliders from a given path. For instance, to identify colliders, we define a function that, given a path $(u, v, l)$ and a graph $\GG$, iterates through the node list $[u] {++} l {++} [v]$ and inspects each triplet of consecutive nodes. If the triplet satisfies the criteria for a collider, namely, the two edges are both incoming into the center node, we append the center node to the result. The output is a list of all colliders in the path.

However, the fact that a node is a collider in a path is much less useful than knowing which neighboring nodes make it a collider. We thus prove the following theorem:
\begin{minted}{coq}
Theorem colliders_vs_edges_in_path: forall (L: nodes) (G: graph) (x: node),
    In x (find_colliders_in_nodes L G)
    <-> exists y z: node, sublist [y; x; z] L = true
            /\ is_edge (y, x) G = true /\ is_edge (z, x) G = true.
\end{minted}
Here, the input \verb|L| would be the full list of nodes in a path, i.e., $[u] {++} l {++} [v]$. This result allows us to extract the specific triplet in the path of which the node is a collider, which is crucial in more involved reasoning about these structures.

We also prove useful structural lemmas, including that the set of mediators, confounders, and colliders is preserved under path reversal and that every internal node of a directed path is a mediator. These are proven by induction on the length of the path.

Importantly, we show that in acyclic graphs and acyclic paths, no node can simultaneously play more than one of these structural roles. The specific case for mediators is shown below:
\begin{minted}{coq}
Theorem if_mediator_then_not_confounder_path:
    forall (G: graph) (u: node) (p: path),
        contains_cycle G = false /\ acyclic_path p
            /\ In u (find_mediators_in_path p G)
        -> ~In u (find_confounders_in_path p G)
           /\ ~In u (find_colliders_in_path p G).
\end{minted}
This theorem is proven by case analysis, showing that the structural criteria for the three categorizations are mutually exclusive in an acyclic setting.

These foundational results are critical for building the semantics of causal inference, which rely on mediators, confounders, and colliders to characterize paths.

\subsubsection{Determining \texorpdfstring{$d$}{d}-Separation and \texorpdfstring{$d$}{d}-Connectedness}
With the structural elements defined, we implement a function to determine whether a path is blocked by a conditioning set $Z$. We define three separate functions that test whether a given path is blocked by a mediator, confounder, or collider, and we combine them into a single predicate:
\begin{minted}{coq}
Definition path_is_blocked_bool (G: graph) (Z: nodes) (p: path): bool :=
    is_blocked_by_mediator p G Z || 
    is_blocked_by_confounder p G Z || 
    is_blocked_by_collider p G Z.
\end{minted}
We can now define the central notion of $d$-separation:
\begin{minted}{coq}
Definition d_separated_bool (u v: node) (G: graph) (Z: nodes): bool :=
    forallb (path_is_blocked_bool G Z)
            (find_all_paths_from_start_to_end u v G).
\end{minted}
This definition checks that all paths between $u$ and $v$ are blocked by $Z$, which corresponds to the classic notion of $d$-separation in causal DAGs.

We also define $d$-connectedness in terms of path structure as stated in \autoref{def:d_conn}:
\begin{minted}{coq}
Definition d_connected (p: path) (G: graph) (Z: nodes): Prop :=
    overlap Z (find_mediators_in_path p G) = false /\
    overlap Z (find_confounders_in_path p G) = false /\
    all_colliders_have_descendants_conditioned_on
        (find_colliders_in_path p G) G Z = true.
\end{minted}
We formalize the relationship between $d$-separation and $d$-connectedness in the following theorem:
\begin{minted}{coq}
Theorem d_separated_vs_connected: forall (u v: node) (G: graph) (Z: nodes), 
    d_separated_bool u v G Z = false <->
    exists (l: nodes), acyclic_path (u, v, l) /\
                       is_path_in_graph (u, v, l) G = true /\ 
                       d_connected (u, v, l) G Z.
\end{minted}
This result, which relies on applications of De Morgan’s Law and properties of path enumeration, allows us to move fluidly between the two perspectives.

In our formalization, we typically require paths considered in settings related to $d$-connectedness and $d$-separation to be \textit{acyclic}. While the original definitions in causal-inference literature do not explicitly require paths to be acyclic, the acyclicity of the underlying graph implicitly suggests that the paths of interest are simple. In practice, cyclic paths introduce ambiguity: a node that appears multiple times may play conflicting roles, such as being both a confounder and a collider along the same path. Such overlap creates logical inconsistencies and undermines the clean structural interpretation that $d$-separation provides. Only considering acyclic paths also significantly simplifies reasoning in Rocq, allowing for the assumption that a node’s role in a path is unique and unambiguous. While ensuring paths are acyclic adds technical burden (especially when concatenating paths that overlap), it ultimately makes the formal system more robust and intuitive.

\subsection{Results From Graph Theory and Causal Theory}
In this subsection, we present a collection of formally verified theorems that bridge structural graph theory and causal reasoning. These results arise repeatedly in downstream proofs involving causal semantics. While some of these results may appear intuitive or straightforward, their formal verification in Rocq typically requires careful definitions and auxiliary lemmas.

Often in formalizing causal reasoning, we must construct or merge paths. To ensure correctness, we need criteria for when $d$-connectedness is preserved under concatenation.

\begin{theorem}\label{thm:concat_d_connected}
Let two paths $(u,m,l_1)$ and $(m, v, l_2)$ be $d$-connected, and assume that the two paths do not share any nodes besides $m$. Then, the concatenated path $(u, v, l_1 {++} [m] {++} l_2)$ is $d$-connected if $m$ satisfies one of the following:
\begin{enumerate}
\item It is a mediator in the concatenated path and not in $Z$.
\item It is a confounder in the concatenated path and not in $Z$.
\item It is a collider in the concatenated path and has a descendant in $Z$.
\end{enumerate}
\end{theorem}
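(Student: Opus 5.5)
The plan is a direct unfolding of \autoref{def:d_conn}, organised around the triples of consecutive nodes in the concatenated path $P = [u] {++} l_1 {++} [m] {++} l_2 {++} [v]$. First I would confirm that $P$ is an acyclic path in $\GG$. It is a valid path because both pieces are paths in $\GG$ and they share the endpoint $m$. It is acyclic because each piece is acyclic and, by hypothesis, the two share no node other than $m$.

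Next I would classify every triple $(a,b,c)$ of consecutive nodes in $P$ by where its middle node $b$ sits.
\begin{itemize}
\item If $b$ lies in $l_1$, then $a$ and $c$ are both nodes of $[u] {++} l_1 {++} [m]$. The triple is therefore a consecutive triple of the first path, and the role of $b$ (mediator, confounder, or collider) depends only on the edges among $a,b,c$. So $b$ has the same role in $P$ as in $(u,m,l_1)$.
\item If $b$ lies in $l_2$, the same argument shows $b$ has the same role in $P$ as in $(m,v,l_2)$.
\item The only remaining case is $b = m$, with neighbours the last node of $l_1$ (or $u$) and the first node of $l_2$ (or $v$).
\end{itemize}
In Rocq this is the sublist lemma: a triple that is a sublist of $L_1 {++} [m] {++} L_2$ either lies inside $L_1 {++} [m]$, lies inside $[m] {++} L_2$, or is centred at $m$. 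I would combine this with \texttt{colliders\_vs\_edges\_in\_path} and its analogues for mediators and confounders. Together these give
\[
\mathrm{roles}(P) \subseteq \mathrm{roles}(u,m,l_1) \cup \mathrm{roles}(m,v,l_2) \cup \{m\}
\]
for each of the three role lists.

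With the classification in hand, I would check the two conditions of \autoref{def:d_conn} for $P$.
\begin{enumerate}
\item Take a mediator or confounder $b$ on $P$. If $b \neq m$, it has the same role in one of the two pieces, so $b \notin Z$ by the $d$-connectedness of that piece. If $b = m$, then $m \notin Z$ by hypothesis~(1) or~(2).
\item Take a collider $b$ on $P$. If $b \neq m$, it is a collider in one of the pieces, so it has a descendant in $Z$. If $b = m$, hypothesis~(3) supplies the descendant.
\end{enumerate}
Here I use that, since $P$ is acyclic and $\GG$ is a DAG, each node's role is unique (\texttt{if\_mediator\_then\_not\_confounder\_path}). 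This ensures the case hypothesis about $m$ matches the role $m$ actually plays in $P$.

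The main obstacle I expect is the list bookkeeping in the middle step, not any conceptual point. The role functions scan triples of $[u] {++} l {++} [v]$, so I would need a lemma describing how the role-finder behaves on an append $L_1 {++} [m] {++} L_2$. My first attempt would be induction on $L_1$, showing that the roles found in the concatenation equal the roles found in $L_1 {++} [m]$, plus those found in $[m] {++} L_2$, plus possibly $m$ itself. The edge cases where $l_1$ or $l_2$ is empty need separate checking, since then $m$'s neighbour is $u$ or $v$ itself.
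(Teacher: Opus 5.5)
Your proposal is correct and follows essentially the same argument as the paper. The paper argues by contradiction that any blocking node must lie in the first piece, in the second piece, or be $m$, and rules out each case as you do, including the appeal to acyclicity so that $m$ plays exactly one role; your direct triple classification is the same argument stated positively, with the list bookkeeping made explicit.
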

\begin{proof}
Assume, for contradiction, that the concatenated path is not $d$-connected. Then there exists some node that blocks the path: either a mediator or confounder in $Z$, or a collider with no descendant in $Z$. This node must lie in the first path, the second path, or at the midpoint $m$. In the first two cases, we contradict the $d$-connectedness of the original paths. In the third case, we contradict the assumption that $m$ satisfies one of the three listed conditions, since by assumption the resulting path is acyclic, so $m$ can only be one of a mediator, confounder, or collider. Hence, the concatenated path must be $d$-connected.
\end{proof}

When composing paths from intersecting segments, we need to avoid cycles. Identifying the first point of overlap between two node lists is thus crucial for disjointness reasoning.

\begin{theorem}\label{thm:first_elt_in_common}
Given two lists of nodes $l_1$ and $l_2$ that share at least one node, there exists a node $x$ such that:
$$l_1 = l_1' {++} [x] {++} l_1'' \text{ and } l_2 = l_2' {++} [x] {++} l_2'',$$
and there is no overlap between $l_1'$ and $l_2'$.
\end{theorem}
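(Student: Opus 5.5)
We choose $x$ to be the \emph{first} element of $l_1$ that also occurs in $l_2$, and argue by induction on $l_1$. Since $l_1$ and $l_2$ share a node, we can walk down $l_1$ until we reach a node belonging to $l_2$.

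\textbf{Base/structure.} Proceed by induction on $l_1$. The case $l_1 = []$ is impossible, because then the two lists would share no node. Now write $l_1 = h :: t$ and split on whether $h$ occurs in $l_2$ (membership is decidable since nodes are natural numbers).
\begin{itemize}
\item \emph{Case $h \in l_2$.} Take $x := h$, $l_1' := []$, $l_1'' := t$. Membership of $h$ in $l_2$ gives a decomposition $l_2 = l_2' {++} [h] {++} l_2''$; this is a standard list lemma (split at an occurrence, e.g.\ \verb|in_split|). The overlap condition holds trivially because $l_1' = []$.
\item \emph{Case $h \notin l_2$.} The shared node must lie in $t$. The induction hypothesis then yields $x$ with $t = t' {++} [x] {++} t''$ and $l_2 = l_2' {++} [x] {++} l_2''$, where $t'$ and $l_2'$ do not overlap. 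Set $l_1' := h :: t'$ and $l_1'' := t''$. It remains to show that $h :: t'$ does not overlap $l_2'$. The node $h$ is not in $l_2'$ because $h \notin l_2$ and $l_2'$ is a sublist (prefix) of $l_2$. The rest, $t'$, does not overlap $l_2'$ by the induction hypothesis.
\end{itemize}

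The only mildly delicate point is the bookkeeping in the second case: we must pass from $h \notin l_2$ to $h \notin l_2'$ and unfold the Boolean \verb|overlap| on a cons cell. Both steps are routine lemmas about membership in appended lists. Choosing the split in $l_2$ at an arbitrary occurrence of $x$ is enough, since the statement asks only that $l_1'$ and $l_2'$ be disjoint. If a symmetric ``first in both'' property were needed, we would instead take the first occurrence in $l_2$ via a minimal-index argument. That is not required here.
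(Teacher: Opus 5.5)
Your proof is correct and follows essentially the same route as the paper. Both argue by induction on $l_1$, take $x$ to be the first element of $l_1$ occurring in $l_2$, split both lists at $x$, and check that the prefixes are disjoint, which in your inductive case rests on $h \notin l_2$ and hence $h \notin l_2'$.
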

\begin{proof}
We proceed via induction on $l_1$, identifying the first point where a node in $l_1$ appears in $l_2$. Once found, we split each list at that node and verify the disjointness of prefixes.
\end{proof}



In many proofs, we begin with a directed path between two nodes (e.g., from a collider to its descendant) but require that the path be acyclic. Fortunately, we can always find such a path without losing essential structure.

\begin{theorem}\label{thm:directed_path_acyclic}
Let $u\neq v$, and suppose there is a directed path from $u$ to $v$. Then there exists an acyclic directed path from $u$ to $v$ consisting of a sublist of the original nodes.
\end{theorem}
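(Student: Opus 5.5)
We prove \autoref{thm:directed_path_acyclic} by strong induction on the length of the node list of the given directed path, repeatedly excising cycles. Write the directed path as the list $L = [u] {++} l {++} [v]$, where every consecutive pair $(x_i, x_{i+1})$ is an edge of $\GG$. If $L$ has no repeated node, it is already acyclic and we are done, taking the sublist to be $L$ itself.

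Otherwise some node $x$ occurs at two positions $i < j$ in $L$. We form $L' = x_0, \dots, x_i, x_{j+1}, \dots, x_n$, deleting the segment strictly after position $i$ up to and including position $j$. The key check is that $L'$ is still a directed path from $u$ to $v$. Every consecutive pair in $L'$ was consecutive in $L$, except the junction pair $(x_i, x_{j+1})$; since $x_i = x_j$, this pair equals the edge $(x_j, x_{j+1})$. The endpoints need separate attention. The start is preserved because $x_0$ is kept. The end $x_n$ is kept unless $j = n$. In that case $x_i = v$, and we instead truncate to $x_0, \dots, x_i$. This prefix is nonempty as a path because $u \neq v$ forces $i \geq 1$, so the truncation is fine. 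Symmetrically, $u \neq v$ rules out $L'$ collapsing to the single node $u = v$, which matters because the Rocq path type requires at least one edge.

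In all cases $L'$ is strictly shorter than $L$ and is a sublist of it, so the induction hypothesis gives an acyclic directed path from $u$ to $v$ that is a sublist of $L'$. The sublist relation is transitive, so this path is also a sublist of $L$, which completes the induction.

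The main obstacle is bookkeeping in the Rocq path representation $(u, v, l)$. Locating a duplicate and splitting the list as $l_1 {++} [x] {++} l_2 {++} [x] {++} l_3$ is the natural tool; this is in the spirit of \autoref{thm:first_elt_in_common}, and any ``not NoDup implies such a split'' lemma would serve. The cases where the duplicate involves $u$ or $v$ themselves need separate handling. I would first prove a general list lemma: excising such a segment preserves the property ``every adjacent pair is an edge''. I would then recover the start/end form using $u \neq v$.
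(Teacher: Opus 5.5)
Your proof is correct and takes essentially the same route as the paper: strong induction on the path length, deleting the segment between two occurrences of a repeated node so that the start and end nodes are kept, and checking that the shortened list is still a directed path and a sublist of the original. Your explicit treatment of the cases where the repeated node is $u$ or $v$, using $u \neq v$ to guarantee at least one edge remains, fills in bookkeeping that the paper's brief sketch leaves implicit.
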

\begin{proof}
We perform strong induction on the length of the path. Whenever a cycle is detected (i.e., a repeated node), we remove it, maintaining the original start and end nodes. We prove that subpaths of directed paths remain directed and, by construction, acyclic.
\end{proof}
This result ensures that path-based properties involving descendants can always be reasoned about within the class of acyclic paths.

Continuing with the ideas of intersection points and directed paths, it is often useful to construct new paths by following one directed path to a shared node and continuing along the reverse of the second directed path. This node will always be a collider.

\begin{theorem}\label{thm:directed_paths_intersect_at_collider}
    Let $(u_1, v_1, l_1)$ and $(u_2, v_2, l_2)$ be two directed paths that intersect. Let $x$ be the intersection point guaranteed by \autoref{thm:first_elt_in_common}. Let $P$ be the path from $u_1$ to $u_2$ constructed from following the first path from $u_1$ to $x$ and continuing along the reverse of the second path to $u_2$. Then, $P$ is acyclic, and $x$ is a collider in $P$.
\end{theorem}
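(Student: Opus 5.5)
The plan is to apply \autoref{thm:first_elt_in_common} to the node lists of the two directed paths, write them as $L_1 = [u_1] {++} l_1 {++} [v_1]$ and $L_2 = [u_2] {++} l_2 {++} [v_2]$, and split each at the first common node $x$:
\[
L_1 = L_1' {++} [x] {++} L_1'', \qquad L_2 = L_2' {++} [x] {++} L_2''.
\]
The constructed path $P$ then has node list $L_1' {++} [x] {++} \mathrm{rev}(L_2')$. Acyclicity and the collider property are checked separately from this decomposition.

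For acyclicity, I would first note that we may take both directed paths to be acyclic, which is the setting the theorem is used in; otherwise \autoref{thm:directed_path_acyclic} can be invoked first. Then $L_1'$ and $L_2'$ each contain no repeated nodes and do not contain $x$, since they are strict prefixes of duplicate-free lists ending before $x$. Reversal preserves having no duplicates. By \autoref{thm:first_elt_in_common}, $L_1'$ and $L_2'$ share no node. Hence $L_1' {++} [x] {++} \mathrm{rev}(L_2')$ is duplicate-free, which is exactly acyclicity of $P$.

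It also has to be checked that $P$ is a path in the graph. Consecutive pairs inside $L_1' {++} [x]$ are edges of the first directed path, oriented forward. Consecutive pairs inside $[x] {++} \mathrm{rev}(L_2')$ are reversed edges of the second directed path, which is allowed because undirected paths may traverse edges backwards.

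For the collider claim, the predecessor $y$ of $x$ in $P$ is the last element of $L_1'$, and $(y,x)$ is an edge because it is consecutive in the directed first path. The successor $z$ of $x$ in $P$ is the last element of $L_2'$, and $(z,x)$ is likewise an edge of the second directed path. By \texttt{colliders\_vs\_edges\_in\_path}, the sublist $[y;x;z]$ together with these two edges makes $x$ a collider.

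The main obstacle is the degenerate cases where $L_1'$ or $L_2'$ is empty, that is, where $x = u_1$ or $x = u_2$. There $x$ is an endpoint of $P$ and cannot be an internal collider. I would handle this by making explicit the side condition implicit in the intended uses, namely that $x \neq u_1$ and $x \neq u_2$ (for instance when neither start node lies on the other path). Alternatively, I would state the collider conclusion only when both prefixes are nonempty. Under that hypothesis the argument above goes through. Most of the formal effort would be list bookkeeping: relating ``last of prefix'' to the adjacency in the concatenated list, and relating the edge orientation of reversed segments to the original ones.
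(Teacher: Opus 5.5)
Your proposal is correct and follows the paper's proof. You reduce to acyclic directed paths via \autoref{thm:directed_path_acyclic}, split both at the first common node $x$ from \autoref{thm:first_elt_in_common}, and get acyclicity of $L_1' {++} [x] {++} \mathrm{rev}(L_2')$ from disjointness of the prefixes; the collider property then follows from both adjacent edges pointing into $x$ via \texttt{colliders\_vs\_edges\_in\_path}. The side condition you flag ($x \neq u_1$ and $x \neq u_2$) is a genuine requirement that the paper leaves implicit by writing $x$ as an intermediate node of $P$, so stating it explicitly is an improvement rather than a deviation.
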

\begin{proof}
    By \autoref{thm:directed_path_acyclic}, we can assume the two directed paths are both acyclic. Note that using the notation of \autoref{thm:first_elt_in_common}, $P = (u_1, u_2, l_1' {++} [x] {++} \verb|rev|(l_2'))$. Since there is no overlap between $l_1$ and $l_2'$, $P$ is acyclic.

    Furthermore, since both original paths are directed, both edges into $x$ in the resulting path are incoming. Thus, we prove that $x$ satisfies the collider definition using \verb|colliders_vs_edges_in_path|, as stated in \autoref{sec:med_con_col_functions}.
\end{proof}

\section{Implementation of Function-Based Semantics}\label{app:semantics_impl}
Here we mix in Rocq syntax from our code artifact to give fully rigorous formalizations of key concepts introduced in \autoref{sec:function_based_formal_semantics}.

We choose to capture \textit{node functions} using \nf{}s. A \nf\ assigns a value to a node based on its unobserved term and the values of its parents, which are passed as a pair in our formalization.
\begin{minted}{coq}
Definition nodefun (X: Type): Type := X * list X -> X.
\end{minted}
We represent a \textit{graph function} as a \gf, which maps each node in the graph to its corresponding \nf: 
\begin{minted}{coq}
Definition graphfun {X: Type}: Type := node -> nodefun X.
\end{minted}
Here, \verb|X| denotes the type of values assigned to the nodes, e.g., \verb|bool|, \verb|nat|, or more complex types like \verb|list nat| or finite enumerations.

To track the values assigned to nodes during evaluation, we define the following types:
\begin{minted}{coq}
Definition assignment {X: Type}: Type := node * X.
Definition assignments {X: Type}: Type := list assignment.
\end{minted}
An \verb|assignments X| object functions like a dictionary, mapping nodes to their values. Internally, it is a list of node-value tuples where a lookup returns the most-recent match.

In practical applications, a causal model will likely have known values for some of its nodes but not all. To evaluate others, we define the function \verb|find_value|, which computes the value of a node $u$ in a given graph $\GG$ with \gf\ $g$, using unobserved-term assignments $U$:
\begin{minted}{coq}
Definition find_value {X: Type} (G: graph) (g: graphfun)
                        (u: node) (U: assignments X): option X
\end{minted}

The implementation of \verb|find_value| proceeds in stages. The core function is:
\begin{minted}{coq}
Definition get_value_of_node {X: Type} (u: node) (G: graph) (g: graphfun)
                            (U A: assignments X): option X
\end{minted}
The function \verb|get_value_of_node| takes in $A$, a set of precomputed values. It searches for the parent values of $u$ within $A$. If any parent is not already assigned, it returns \verb|None|. Otherwise, it evaluates \gf\ $g$ on $u$, then evaluates the resulting \nf\ on the precomputed parent values and the unobserved term assigned to $u$ in $U$.

We then define \verb|get_values|, which computes the topological sort of the graph and iteratively computes the value of each node in order using \verb|get_value_of_node|. The resulting list of assignments accumulates values as they are computed. After going through the entire topological sort, \verb|get_values| returns the computed values of all nodes as a set of assignments. Finally, the \verb|find_value| function invokes \verb|get_values| and looks up the value for the desired node.

In the Rocq formalization, \autoref{thm:equiv} is stated generically over all types \verb|X| from which node values may be drawn, subject to the two constraints described in \autoref{sec:main_theorem}. These requirements are captured using a Rocq type class \verb|EqType|, which packages together the domain \verb|X| and its associated decidable equality function.

\section{Finding Disjoint Descendant Paths}\label{app:disjoint_desc_paths}
In \autoref{sec:fpath_intro}, we define $S_4$, the set of nodes that appear on the descendant path from any collider in the $d$-connecting path from $u$ to $v$. However, while $d$-connectedness guarantees the existence of such descendant paths, it does not prevent problematic \textit{overlaps}, such as:
\begin{itemize}
    \item A collider's descendant path might intersect the path itself (see \autoref{fig:desc_path_intersects_path_bad}).
    \item Descendant paths from two different colliders could intersect (see \autoref{fig:desc_paths_intersects_bad}).
\end{itemize}
\begin{figure}
    \begin{subfigure}[b]{.45\textwidth}
        \centering
        \begin{tikzpicture}[
        node distance=1cm and 1cm,
        every node/.style={minimum size=1.2em, font=\large},
        circ/.style={draw, thick, circle, inner sep=2pt},
        dots/.style={font=\scriptsize}
        ]
        
        \node[blue] (u) at (1,1) {$u$};
        \node[right=.5cm of u, blue] (q) {$q$};
        \node[right=.5cm of q, blue] (r) {$r$};
        \node[right=.5cm of r, blue] (s) {$s$};
        \node[right=.5cm of s, circ, blue] (t) {$t$};
        \node[right=.5cm of t, blue] (v) {$v$};
        \node[below=.5cm of r] (x) {$x$};
        \node[above=.5cm of s, circ] (y) {$y$};
            
        \draw[->, thick, blue] (u) -- (q);
        \draw[->, thick, blue] (r) -- (q);
        \draw[->, thick, blue] (r) -- (s);
        \draw[->, thick, blue] (s) -- (t);
        \draw[->, thick, blue] (v) -- (t);
        \draw[->, thick] (q) -- (x);
        \draw[->, thick] (x) -- (s);
        \draw[->, thick] (s) -- (y);
        \end{tikzpicture}
        \caption{The path highlighted in blue is a $d$-connected path, but the descendant path of collider $q$ overlaps with the path at $s$.}
        \label{fig:desc_path_intersects_path_bad}
    \end{subfigure}
    \hfill
    \begin{subfigure}[b]{.45\textwidth}
        \centering
        \begin{tikzpicture}[
        node distance=1cm and 1cm,
        every node/.style={minimum size=1.2em, font=\large},
        circ/.style={draw, thick, circle, inner sep=2pt},
        dots/.style={font=\scriptsize}
        ]
        
        \node[blue] (u) at (1,1) {$u$};
        \node[right=.5cm of u, blue] (q) {$q$};
        \node[right=.5cm of q] (r) {$r$};
        \node[right=.5cm of r, blue] (s) {$s$};
        \node[right=.5cm of s, circ, blue] (t) {$t$};
        \node[right=.5cm of t, blue] (v) {$v$};
        \node[below=.5cm of r, blue] (x) {$x$};
        \node[above=.5cm of s, circ] (y) {$y$};
            
        \draw[->, thick, blue] (u) -- (q);
        \draw[->, thick] (r) -- (q);
        \draw[->, thick] (r) -- (s);
        \draw[->, thick, blue] (s) -- (t);
        \draw[->, thick, blue] (v) -- (t);
        \draw[->, thick, blue] (q) -- (x);
        \draw[->, thick, blue] (x) -- (s);
        \draw[->, thick] (s) -- (y);
        \end{tikzpicture}
        \caption{The alternate path in blue is still $d$-connected, since $q$ is now a mediator. This path is a clean $d$-connected path.}
        \label{fig:desc_path_intersects_path_fix}
    \end{subfigure}
    
    \caption{Given a $d$-connected path from $u$ to $v$ where a descendant path of a collider intersects the path itself, we can construct an alternate $d$-connected path that satisfies \autoref{def:clean_d_path}.}
\end{figure}
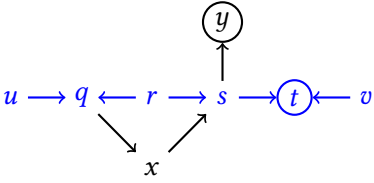
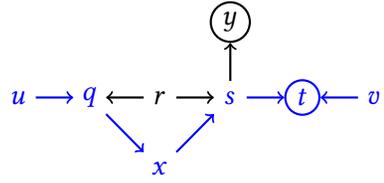

\begin{figure}
    \begin{subfigure}[b]{.45\textwidth}
        \centering
        \begin{tikzpicture}[
        node distance=1cm and 1cm,
        every node/.style={minimum size=1.2em, font=\large},
        circ/.style={draw, thick, circle, inner sep=2pt},
        dots/.style={font=\scriptsize}
        ]
        
        \node[blue] (u) at (1,1) {$u$};
        \node[right=.5cm of u, blue] (q) {$q$};
        \node[right=.5cm of q, blue] (r) {$r$};
        \node[right=.5cm of r, blue] (s) {$s$};
        \node[right=.5cm of s, blue] (t) {$t$};
        \node[right=.5cm of t, blue] (v) {$v$};
        \node[below=.5cm of r] (x) {$x$};
        \node[below=.75cm of s] (y) {$y$};
        \node[below=.5cm of y, circ] (z) {$z$};
            
        \draw[->, thick, blue] (u) -- (q);
        \draw[->, thick, blue] (r) -- (q);
        \draw[->, thick, blue] (r) -- (s);
        \draw[->, thick, blue] (s) -- (t);
        \draw[->, thick, blue] (v) -- (t);
        \draw[->, thick] (q) -- (x);
        \draw[->, thick] (x) -- (y);
        \draw[->, thick] (t) -- (y);
        \draw[->, thick] (y) -- (z);
        \end{tikzpicture}
        \caption{The path highlighted in blue is a $d$-connected path, but the two descendant paths of colliders $q$ and $t$ overlap each other at $y$.}
        \label{fig:desc_paths_intersects_bad}
    \end{subfigure}
    \hfill
    \begin{subfigure}[b]{.45\textwidth}
        \centering
        \begin{tikzpicture}[
        node distance=1cm and 1cm,
        every node/.style={minimum size=1.2em, font=\large},
        circ/.style={draw, thick, circle, inner sep=2pt},
        dots/.style={font=\scriptsize}
        ]
        
        \node[blue] (u) at (1,1) {$u$};
        \node[right=.5cm of u, blue] (q) {$q$};
        \node[right=.5cm of q] (r) {$r$};
        \node[right=.5cm of r] (s) {$s$};
        \node[right=.5cm of s, blue] (t) {$t$};
        \node[right=.5cm of t, blue] (v) {$v$};
        \node[below=.5cm of r, blue] (x) {$x$};
        \node[below=.75cm of s, blue] (y) {$y$};
        \node[below=.5cm of y, circ] (z) {$z$};
            
        \draw[->, thick, blue] (u) -- (q);
        \draw[->, thick] (r) -- (q);
        \draw[->, thick] (r) -- (s);
        \draw[->, thick] (s) -- (t);
        \draw[->, thick, blue] (v) -- (t);
        \draw[->, thick, blue] (q) -- (x);
        \draw[->, thick, blue] (x) -- (y);
        \draw[->, thick, blue] (t) -- (y);
        \draw[->, thick] (y) -- (z);
        \end{tikzpicture}
        \caption{The alternate path in blue is a clean $d$-connected path, since $q$ and $t$ are now mediators, and $y$ has a descendant path to $z$.}
        \label{fig:desc_paths_intersects_fix}
    \end{subfigure}
    
    \caption{Given a $d$-connected path from $u$ to $v$ where the descendant paths of two colliders intersects each other, we can construct an alternate $d$-connected path that satisfies \autoref{def:clean_d_path}.}
\end{figure}
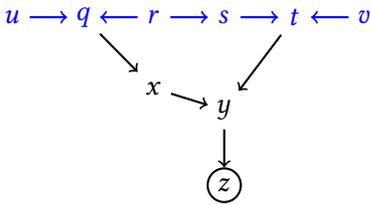
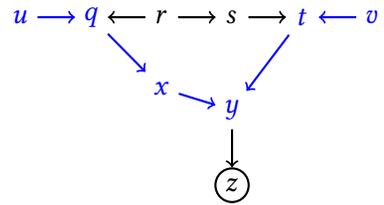
Our construction of $\fpath$ breaks if a node is shared between multiple descendant paths (with multiple predecessors) or if the node is simultaneously a path node (e.g., an $S_2$ node). For instance, in \autoref{fig:desc_path_intersects_path_bad}, node $s$ would be assigned conflicting roles in $\fpath$.

To address this issue, we define a stricter notion of a $d$-connected path:
\begin{definition}\label{def:clean_d_path}
    Let $Z \subseteq \VV$, and let $P$ be a path. Let $c_1, ..., c_k$ be the colliders in $P$. We say $P$ is a \textbf{clean \boldmath $d$-connected path} if $P$ is $d$-connected given $Z$, and for each $c_i$, either $c_i \in Z$, or $c_i \not\in Z$ and there exists a path $Q_i := (c_i, d_n, [d_1, ..., d_{n-1}])$, where $n \geq 1$, satisfying the following conditions:
    \begin{enumerate}
        \item $d_n \in Z$.
        \item $d_i \not\in Z$ for $i < n$.
        \item $d_i \not\in P$ for $i = 1, ..., n$.
        \item For all $c_j \neq c_i$ such that $c_j \not\in Z$, $Q_i$ does not intersect $Q_j$.
    \end{enumerate}
\end{definition}

Our formulation of $\fpath$ requires a clean $d$-connected path. 
Fortunately, when descendant paths intersect either the path or each other, we can construct an alternative clean path using the strategies shown in Figures \ref{fig:desc_path_intersects_path_fix} and \ref{fig:desc_paths_intersects_fix}. These ideas motivate the following result.
\begin{theorem} \label{thm:desc_paths_disjoint}
    If $u$ and $v$ are $d$-connected, then there exists a clean $d$-connected path from $u$ to $v$. 
\end{theorem}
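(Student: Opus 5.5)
We argue by a well-founded minimality choice rather than by repeatedly patching a single path. Among all acyclic $d$-connected paths from $u$ to $v$ (nonempty by \autoref{def:d_conn} and \texttt{d\_separated\_vs\_connected}), fix one, $P$, minimizing a measure $\mu(P)$: first the number of colliders of $P$ not in $Z$, then the length of $P$. For each collider $c\notin Z$ on $P$, choose a descendant path to $Z$ of minimal length. By \autoref{thm:directed_path_acyclic} this path is acyclic. By minimality, its intermediate nodes $d_1,\dots,d_{n-1}$ avoid $Z$, since otherwise we could truncate at the first conditioned node. This gives conditions (1) and (2) of \autoref{def:clean_d_path}. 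It remains to derive conditions (3) and (4), and we show that any violation yields a $d$-connected path with strictly smaller $\mu$, a contradiction.

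\emph{Violation of (3).} Suppose $Q_i$ meets $P$. Using \autoref{thm:first_elt_in_common}, let $d_j$ be the first node of $Q_i$ lying on $P$. Then $d_j\notin Z$, or $d_j=d_n\in Z$. Following the pattern of \autoref{fig:desc_path_intersects_path_fix}, $d_j$ lies on one side of $c_i$ along $P$, say the $v$-side. We splice: follow $P$ from $u$ to $c_i$, then $Q_i$ from $c_i$ to $d_j$, then $P$ from $d_j$ to $v$. On the new path $c_i$ is a mediator and not in $Z$. The nodes of $Q_i$ strictly between $c_i$ and $d_j$ are mediators outside $Z$, and they are off $P$, so the path is acyclic. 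The node $d_j$ acquires an incoming edge from $Q_i$, so its role depends on the edge of $P$ leaving it toward $v$. If $d_j$ becomes a mediator or confounder, we need $d_j\notin Z$. If instead $d_j\in Z$, it becomes a collider that lies in $Z$, which is fine. If $d_j$ becomes a collider, it has a descendant in $Z$, namely itself or the tail of $Q_i$. All other nodes keep their roles from $P$, so \autoref{thm:concat_d_connected}, applied twice, gives $d$-connectedness. The collider $c_i\notin Z$ has disappeared, and at most one new collider $d_j$ has appeared. The delicate case is when $d_j\notin Z$ becomes a collider while $d_j$ was a noncollider on $P$, because then the count of colliders outside $Z$ need not drop. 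In that case, however, the new path skips the segment of $P$ between $c_i$ and $d_j$. That segment contains no collider outside $Z$ besides $c_i$, so the collider count does not increase, and the path is strictly shorter because $Q_i$'s prefix is shorter than the skipped segment. To make that last claim true, I would refine $\mu$ to compare the path length together with the total length of the chosen descendant paths. Getting this measure exactly right is the main obstacle I anticipate. As a fallback, I would choose $d_j$ as the node on $P$ closest to $c_i$ on the appropriate side, and check the cases exhaustively, as in the Rocq style.

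\emph{Violation of (4).} Suppose $Q_i$ and $Q_j$ intersect, where $c_i$ precedes $c_j$ on $P$. Using \autoref{thm:directed_paths_intersect_at_collider}, let $x$ be their first common node. Build the path $u \leadsto c_i \to\cdots\to x \leftarrow\cdots\leftarrow c_j \leadsto v$, as in \autoref{fig:desc_paths_intersects_fix}. Here $c_i$ and $c_j$ become mediators outside $Z$, and $x$ is a collider whose descendant in $Z$ comes from the tail of $Q_i$. We may assume (3) already holds, so the $Q$'s avoid $P$ and the new path is acyclic. The colliders outside $Z$ drop from two, $c_i$ and $c_j$, to at most one, $x$, and those on the discarded segment of $P$ are removed. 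So $\mu$ strictly decreases, again using \autoref{thm:concat_d_connected} for $d$-connectedness.

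Since $\mu$ is well-founded, the minimal $P$ with minimal descendant paths satisfies all four conditions, and the theorem follows. I expect the formal difficulty to lie entirely in the case analysis of roles at the splice points and in keeping the spliced paths acyclic, which \autoref{thm:first_elt_in_common} handles.
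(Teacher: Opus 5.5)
Your minimality strategy is a genuinely different route from the paper's. The paper inducts on the length of $P$, prepends $u$ to an inductively obtained clean path, and carries the direction invariants of \autoref{lem:desc_paths_disjoint_helper} to repair overlaps. The problem is that the measure $\mu$ you chose does not work, and the step you flag as delicate fails. Nothing justifies the claim that the prefix $c_i\to d_1\to\cdots\to d_j$ of $Q_i$ is shorter than the skipped segment of $P$, and in fact a $\mu$-minimal path need not be clean.

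Take the DAG with edges $u\to c$, $w\to c$, $w\to z$, $d\to z$, $v\to d$, $c\to e_1\to e_2\to e_3\to d$, and $Z=\{z\}$. The only $d$-connected paths are $P_1\colon u\to c\leftarrow w\to z\leftarrow d\leftarrow v$ and $P_2\colon u\to c\to e_1\to e_2\to e_3\to d\leftarrow v$. Each has exactly one collider outside $Z$, so $\mu$ selects the shorter $P_1$. However, the only directed path from $c$ into $Z$ passes through $d$ and $z$, both of which lie on $P_1$. So condition (3) fails for every choice of $Q$. Your splice turns $P_1$ into $P_2$, which is longer, so minimality gives no contradiction.

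The missing observation is this: when $d_j\notin Z$ becomes a new collider, it inherits the tail $d_j\to\cdots\to d_n$ of $Q_i$. Taking each $Q$ to be a shortest directed path into $Z$, this gives $\mathrm{dist}(d_j,Z)\le n-j<n=\mathrm{dist}(c_i,Z)$. So you should minimize the potential $\Phi(P)=\sum_c \mathrm{dist}(c,Z)$, summing over the colliders $c\notin Z$ of $P$.
\begin{itemize}
\item Under a violation of (3), $\Phi$ loses $n$ for $c_i$ and gains at most $n-j$ for $d_j$. Retained colliders keep their contribution, and colliders on the skipped segment only lower $\Phi$.
\item Under a violation of (4), $\Phi$ loses $\mathrm{dist}(c_i,Z)+\mathrm{dist}(c_j,Z)$ and gains $\mathrm{dist}(x,Z)<\mathrm{dist}(c_i,Z)$.
\end{itemize}
Your suggested refinement also works for the same reason. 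If, after the collider count, you compare $|P|$ plus the total descendant length, that quantity drops by at least the length of the skipped segment in the delicate case. In the example, $\Phi(P_1)=5$ and $\Phi(P_2)=1$, and $P_2$ is clean.

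With either fix, the rest of your argument goes through: acyclicity of the splices, the role analysis at $c_i$, $d_j$ and $x$, and the appeal to \autoref{thm:concat_d_connected}. The resulting proof is shorter than the paper's, because no invariant about changed path directions has to be threaded through an induction.
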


The proof involves extensive case analysis on the structure of the path and descendant relationships. Since the proof is already fully mechanized in Rocq, we present here only a high-level sketch to convey the main ideas.

We introduce the following helper function, which determines the direction of a path immediately after a specific node:
\begin{definition}\label{def:path_out_dir}
    For an acyclic path $P = (u, v, l)$ and a node $w$, let $\dir_P(w) \in \{ \leftarrow, \rightarrow, \bot \}$ represent the direction of the path immediately after the node $w$ in $P$. Concretely, if $w \not\in P$ or $w = v$, then $\dir_P(w) = \bot$. Otherwise, let $w'$ be the node immediately following $w$ on $P$. Then, 
    \[ \dir_P(w) := \begin{cases}
        \leftarrow & \text{if } (w', w) \in \EE \\
        \rightarrow & \text{if } (w, w') \in \EE.
    \end{cases} \]
\end{definition}
Note that since we restrict attention to acyclic paths within acyclic graphs, there is no ambiguity in the $\dir$ function, since each node appears at most once in the path, and for any adjacent pair of nodes $w,w'$, only one of $(w',w)$ or $(w,w')$ can be an edge in the graph. This notion plays a crucial role in the proof; for example, for a directed path $P$, every intermediate node $w$ satisfies $\dir_P(w) =\, \rightarrow$. More generally, in any acyclic path $P$, if $\dir_P(w) =\, \rightarrow$, then $w$ cannot be a collider on $P$.

To prove \autoref{thm:desc_paths_disjoint}, we proceed by induction. At each step, we require the path to satisfy specific structural conditions in order to construct the induction result from the hypothesis. Thus, we establish a stronger, more precisely formulated version of \autoref{thm:desc_paths_disjoint} that incorporates these additional constraints.
\begin{lemma}\label{lem:desc_paths_disjoint_helper}
    Let $P=(u,v,l)$ be a $d$-connected path. Then there exists a clean $d$-connected path $P'=(u,v,l')$ such that for any node $w$ satisfying either of:
    \begin{enumerate}[label=\Roman*.]
        \item $w\in P'$ and $w\not\in P$.
        \item $\dir_P(w) \neq \dir_{P'}(w)$.
    \end{enumerate}
    the following conditions hold:
    \begin{enumerate}
        \item One of:
        \begin{enumerate}
            \item $w\not\in Z$, and there exists a directed path $(w, d, p)$ with $d \in Z$.
            \item $w \in Z$, and $\dir_{P'}(w) \in \{ \leftarrow, \bot \}$.
        \end{enumerate}
        \item If $\dir_{P'}(w) = \, \rightarrow$, then $w\not\in Z$.
    \end{enumerate}
\end{lemma}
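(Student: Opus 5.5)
The plan is to prove the lemma by well-founded induction, repairing one overlap at a time. The invariant in the statement is what makes repeated repairs compose. Every node that is new to the path, or whose outgoing direction changed, is either a non-$Z$ node with a directed path into $Z$, or a $Z$-node at which the path does not leave forward. That is exactly what is needed to keep the path $d$-connected when such nodes become mediators, confounders or colliders.

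\textbf{Normalizing descendant paths.} For each collider $c\notin Z$ of $P$, fix a directed path to $Z$ and truncate it at its first $Z$-node. By \autoref{thm:directed_path_acyclic}, the resulting $Q_c$ is acyclic and satisfies conditions (1) and (2) of \autoref{def:clean_d_path}.

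\textbf{The measure.} Induction is on the lexicographic pair
\[
\bigl(\text{number of colliders of } P \text{ not in } Z,\ \textstyle\sum_c |Q_c|\bigr).
\]
If every $Q_c$ avoids $P$ and the $Q_c$ are pairwise disjoint, then $P$ is already clean. In that case I take $P'=P$, and conditions I/II hold vacuously.

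\textbf{Case 1: some $Q_c$ meets $P$.} Let $w$ be the first node of $Q_c$ that lies on $P$, found via \autoref{thm:first_elt_in_common}. Replace the segment of $P$ between $c$ and $w$ by the prefix of $Q_c$ from $c$ to $w$, reversed if $w$ precedes $c$ on $P$. The result is acyclic because the prefix meets $P$ only at its endpoints.

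Now check the roles.
\begin{itemize}
\item The new internal nodes lie strictly inside $Q_c$. Hence they are not in $Z$ and have a directed path to $Z$, which gives (1a). They are mediators oriented along $Q_c$, which gives (2).
\item The node $c$ now has $\dir=\rightarrow$ (or its direction is otherwise changed). Since $c\notin Z$ and has $Q_c$, both (1a) and (2) hold.
\item At $w$, the incoming edge comes from $Q_c$. So $w$ is either a mediator or a collider on the new path. If $w\in Z$, then $w$ is the endpoint $d$ of $Q_c$, and its outgoing direction on the new path is $\leftarrow$ or $\bot$; otherwise it would be a mediator in $Z$ on the new path. This gives (1b). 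If $w\notin Z$, the rest of $Q_c$ supplies (1a).
\end{itemize}
Every collider of the new path is either an old collider (kept with its $Q$), or $w$, which has a descendant in $Z$. Thus the new path is $d$-connected by \autoref{thm:concat_d_connected}. The collider $c$ has disappeared, and at most $w$ has been added. I will argue that either the collider count does not increase, or that $w$ gets a strictly shorter descendant path (a suffix of $Q_c$), so the measure drops.

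\textbf{Case 2: $Q_{c_i}$ and $Q_{c_j}$ intersect.} Take their first common point $y$ and apply \autoref{thm:directed_paths_intersect_at_collider}. Replace the segment of $P$ from $c_i$ to $c_j$ by $c_i\to\cdots\to y\leftarrow\cdots\leftarrow c_j$. Then $c_i$ and $c_j$ become mediators or confounders that are not in $Z$, satisfying (1a) and (2). The new node $y$ is a collider whose descendant path is the shared suffix. Two colliders are replaced by one, so the first component of the measure strictly decreases.

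\textbf{Composing with the induction hypothesis.} Apply the hypothesis to the new path to obtain $P'$. Conditions I/II for $P'$ relative to $P$ follow by transitivity. A node that changed direction or appeared between $P$ and $P'$ did so in one of the two stages, and the conditions are properties of $w$ and $\dir_{P'}(w)$ alone. The one subtlety is a node whose direction changed in the first stage and was restored in the second. It then fails to trigger II for the composite, so nothing needs to be shown for it.

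\textbf{Main obstacle.} I expect the hardest part to be the choice of decreasing measure in Case 1 when $w$ becomes a fresh collider. There the collider count may stay equal, and I must show that the secondary component decreases, or else switch to a measure such as the number of nodes of $P$ lying on some $Q_c$. Handling the orientation sub-cases at $w$ and at the path endpoints $u,v$ is likely to be the bulk of the case analysis.
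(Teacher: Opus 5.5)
\textbf{How your route differs from the paper's.} The paper inducts on the length of $P$. It applies the hypothesis to the suffix starting at the node after $u$, prepends $u$, and repairs only the overlaps this creates: $u$ lying on the rerouted suffix or on one of its descendant paths, or a new collider whose descendant path hits them. There, conditions 1--2 are consumed as hypotheses, because the prepending step must know how the rerouting changed the nodes $u$ now touches.

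In your scheme, each surgery is justified directly from the $d$-connectedness of the current path and its current descendant paths. So 1--2 are only carried along as output, by your transitivity argument. That argument is correct: a node that triggers I or II for $(P,P')$ but not for $(P_1,P')$ has $\dir_{P_1}(w)=\dir_{P'}(w)$ and triggers for $(P,P_1)$. Your route is more self-contained and needs only two kinds of surgery. The price is a well-founded induction on (path, descendant map) pairs instead of the structural recursion on node lists that suits the Rocq development.

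Your worry about the measure is unnecessary. In Case 1 the non-$Z$ collider count loses $c$, plus any non-$Z$ colliders in the excised segment, and gains at most $w$. If the count stays equal, then $w$ is a fresh non-$Z$ collider whose descendant path is a proper suffix of $Q_c$, and no other $Q$ changes. So the second component strictly drops.

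\textbf{The gap: deleted nodes.} By \autoref{def:path_out_dir}, $\dir_{P_1}(x)=\bot$ for $x\notin P_1$. Hence every node $x\neq v$ of $P$ excised by a surgery satisfies II. If $x\in Z$, 1b holds immediately; if $x\notin Z$, you must exhibit a directed path from $x$ into $Z$. You never check this, and your transitivity step needs it for nodes excised at the first stage and never reinserted.

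It does hold. The excised nodes lie strictly between $c$ and $w$ (Case 1), or strictly between $c_i$ and $c_j$ (Case 2). Take such an $x$. If both of its path-edges point into it, $x$ is a collider of $P$. Otherwise, follow a path-edge leaving $x$ and keep walking along $P$ in that direction while the edges keep pointing forward. You stop either at a collider of $P$ (at the latest $c$, $c_i$ or $c_j$), which is in $Z$ or has a descendant in $Z$. Or, in Case 1, you reach $w$ via an edge from the segment, and $w\in Q_c$ either lies in $Z$ or continues along $Q_c$ into $Z$. Add this verification to both surgeries.

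\textbf{Two smaller repairs.}
\begin{itemize}
\item Your justification at $w$ (``otherwise it would be a mediator in $Z$ on the new path'') presupposes the $d$-connectedness you are proving. Argue instead as follows.
  \begin{itemize}
  \item If $w$ follows $c$, then $\dir_{P_1}(w)=\dir_P(w)$. Having $\dir_P(w)=\rightarrow$ would make $w$ a non-collider of $P$, hence $w\notin Z$. In this case $w$ triggers neither I nor II anyway.
  \item If $w$ precedes $c$, then $\dir_{P_1}(w)=\leftarrow$ by construction. $w$ is a mediator of $P_1$ only when its edge to its predecessor points away from $w$. Then $w$ was already a non-collider of $P$ (or $w=u$), so $w\notin Z$.
  \end{itemize}
\item Case 2 must be applied only when Case 1 does not apply. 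Only then do both prefixes to $y$ avoid $P$, which makes the new path acyclic.
\end{itemize}
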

\begin{proof}[Proof sketch.]
    We perform induction on the length of $P$. In particular, we identify the node immediately after $u$ in $P$ and invoke the induction hypothesis to construct a clean $d$-connected path from that node to $v$. We then prepend $u$ and analyze the overlaps that are introduced:
    \begin{itemize}
        \item Node $u$ could overlap the induction path or its descendant paths.
        \item If the addition of $u$ makes the next node a collider, then its new descendant path could intersect the induction path or any of its descendant paths.
    \end{itemize}
    For each of these intersections, we construct alternate paths as in Figures \ref{fig:desc_path_intersects_path_fix} and \ref{fig:desc_paths_intersects_fix}.
    
    Conditions 1 and 2 on the induction path allow us to track how path direction and collider status changed from $P$. We must reason about the orientations of the intersection nodes in the original and new paths to ensure that the new path remains $d$-connected, acyclic, clean, and satisfying Conditions 1 and 2.

    Intuitively, nodes that satisfy Condition I are those that originally appeared on a descendant path of a collider in the original path. For example, node $x$ in Figures \ref{fig:desc_path_intersects_path_fix} and \ref{fig:desc_paths_intersects_fix} falls into this category. These nodes satisfy Conditions 1(a) and 2. 

    In the case of nodes appearing on the second path of two intersecting descendant paths, such as $y$ in \autoref{fig:desc_paths_intersects_fix}, the rerouted path $P'$ enters with direction $\leftarrow$, so Condition 2 does not apply. Depending on whether the node is the conditioned descendant or not, it will satisfy either 1(b) or 1(a), respectively.

    Condition II is satisfied by nodes that were colliders in the original path and for which the rerouting process altered the path structure to avoid overlaps. This set includes two types of colliders: those whose descendant paths intersected the original path and the first of a pair of colliders whose descendant paths intersected each other. In both cases, the collider $w$ satisfies Condition 1(a) through its original descendant path. 
    Additionally, since $\dir_P(w) = \,\leftarrow$ and $\dir_{P'}(w) = \,\rightarrow$,as with node $q$ in  Figures \ref{fig:desc_path_intersects_path_fix} and \ref{fig:desc_paths_intersects_fix}, we check Condition 2, which is satisfied because $w\not\in Z$ (since it has a descendant path). For the second collider $w'$ in such a pair, the path direction does not change at $w'$, i.e., $\dir_P(w') = \dir_{P'}(w')$, since the rerouted path reuses the same outgoing edge from $w'$.

    At a high level, by preserving the roles and directions of nodes carefully and redirecting overlaps only when safe, we can maintain the required structure and ensure all conditions of cleanliness hold. The mechanized proof in Rocq is available for full detail.
\end{proof}
It is clear that with \autoref{lem:desc_paths_disjoint_helper}, \autoref{thm:desc_paths_disjoint} is true.

We can now carry out our implementation of $\fpath$, since we can assume by \autoref{thm:desc_paths_disjoint} that there exists a clean $d$-connected path between $u$ and $v$.

In Rocq, we must be able to express not just the path portion of a clean $d$-connected path but also the disjoint descendant paths and the collider that each corresponds to, since we must access the nodes in the descendant paths in order to assign specific \nf{}s.

Let $D$ be of type \verb|assignments (nodes * node)|, mapping collider $c_i$ to a tuple \verb|nodes * node|. If $c_i \in Z$, then $D$ at $c_i$ is simply $([], c_i)$. Otherwise, $D$ at $c_i$ is a tuple $(p_i, d_i)$, where $p_i$ represents the (possibly empty) intermediate nodes on the directed path from $p_i$ to $d_i$, and $d_i \in Z$.

We can then formalize \autoref{def:clean_d_path} to be a \verb|Prop|, given such a set $D$ representing the descendant paths of clean $d$-connected path $(u, v, l)$ in graph $\GG$ with conditioning set $Z$:
\begin{minted}[xleftmargin=2em,linenos]{coq}
Definition descendant_paths_disjoint (D: assignments (nodes * node))
                    (u v: node) (l: nodes) (G: graph) (Z: nodes): Prop :=
  forall (c: node), In c (find_colliders_in_path (u, v, l) G)
      -> get_assigned_value D c = Some ([], c) /\ In c Z
         \/
         exists (p: nodes) (d: node), get_assigned_value D c = Some (p, d)
           /\ In d Z /\ is_directed_path_in_graph (c, d, p) G = true
                     /\ acyclic_path_2 (c, d, p)
           /\ overlap (c :: p) Z = false
           /\ overlap (p ++ [d]) (u :: l ++ [v]) = false
           /\ forall (c' d': node) (p': nodes),
              c =? c' = false /\ get_assigned_value D c' = Some (p', d')
              -> overlap (c :: p ++ [d]) (c' :: p' ++ [d']) = false.
\end{minted}
In the above, line 4 is the case that the collider $c$ is conditioned on, and thus we do not need a descendant path. Lines 6-8 describe the alternate case, that instead there is an acyclic, directed path from $c$ to a conditioned descendant $d$ along $p$. Line 9 ensures that $d$ is the first node in the path that is in $Z$, and line 10 ensures that the descendant path does not intersect the path from $u$ to $v$ itself. Lines 11-13 ensure that no two different descendant paths given by $D$ overlap each other. 

\begin{definition}\label{def:descendant_map}
    Given a clean $d$-connected path $P = (u, v, l)$ conditioned on $Z$, then $D$ is a \textbf{descendant map} for $P$ if $\verb|descendant_paths_disjoint|(D, u, v, l, \GG, Z)$ holds.
\end{definition}

It is clear by \autoref{def:clean_d_path} that any clean $d$-connected path will have at least one descendant map.

\section{Existence of Node Set Assignments}\label{app:exset}
Here, we provide the Rocq definitions for the \nf{}s described in \autoref{sec:fpath_intro} based on the set into which each node of $\VV$ falls:
\begin{itemize}
    \item $S_1$ (sources). Use unobserved term directly:
\begin{minted}{coq}
Definition f_unobs (X: Type) (val: X * list X): X := fst val.
\end{minted}
    \item $S_2$ (transmitters). Use value of parent on $P$, where $i$ is the index of the designated parent node in the transmitter's parent list:
\begin{minted}{coq}
Definition f_parent_i (X: Type) (i: nat) (val: X * list X): X :=
    nth_default (fst val) (snd val) i.
\end{minted}
    \item $S_3$ (colliders). Enforce equality of two parents on $P$, where $x = A_Z(d)$, $d$ is the descendant of the collider in $Z$ (possibly the collider itself), $y \neq x$, and $i$ and $j$ are the indices of the two parents on $P$ in the collider's parent list:
\begin{minted}{coq}
Definition f_equate_ij (X: Type) `{EqType X} (i j: nat) (x y: X)
                       (val: X * (list X)): X :=
    if eqb (nth_default (fst val) (snd val) i)
           (nth_default (fst val) (snd val) j) then x else y.
\end{minted}
    \item $S_4$ (descendants). Use \verb|f_parent_i X i| as in $S_2$, where the parent in the descendant path is at index $i$ in the descendant's parent list.
    \item $S_5$ ($Z$-residual). Force to assigned value in $A_Z$:
\begin{minted}{coq}
Definition f_constant (X: Type) (res: X) (val: X * (list X)): X := res.
\end{minted}
    \item $S_6$ (residual). Assign any arbitrary \nf.
\end{itemize}

We define the following assignments $A_2, A_3, A_4$, corresponding to the nodes of $S_2, S_3$, and $S_4$ respectively:
\begin{itemize}
    \item $A_2$: \verb|assignments nat|. Maps node $w \in S_2$ to $i$, corresponding to the index of $w$'s single parent in the path in $w$'s parent list.
    \item $A_3$: \verb|assignments (nat * nat * X * X)|. Maps node $w \in S_3$ to $(i, j, x, y)$, where $i$ and $j$ are the indices of $w$'s two parents in the path in $w$'s parent list, $x$ is the value of $w$'s conditioned descendant given by $A_Z$, and $y$ is some value unequal to $x$.
    \item $A_4$: \verb|assignments nat|. Maps node $w \in S_4$ to $i$, corresponding to the index of the previous node in the descendant path in $w$'s parent list. Here we use any non-overlapping descendant path described above.
\end{itemize}

For example, if for the path $p_w \rightarrow w \rightarrow \cdots$, $A_2(w) = i$, then \verb|f_parent_i X i| returns $f(p_w)$. We show how to prove that such assignments exist, as well as how to assemble them into a unified graph function $\fpath$.

For the clean $d$-connected path, as described in \autoref{app:disjoint_desc_paths}, from $u$ to $v$, given the set $S_1$ and the assignments $A_2$, $A_3$, $A_4$, as well as some arbitrary default \nf\ $h_\textsf{def}$, and the conditioned assignments $A_Z$, we can implement the \gf\ $\fpath$:
\begin{equation} \label{eq:fpath}
    \fpath := \verb|g_path|(S_1, A_2, A_3, A_4, A_Z, h_\text{def}),
\end{equation}
where
\begin{minted}{coq}
Definition g_path (X: Type) `{EqType X} (S1: nodes) (A2: assignments nat)
                  (A3: assignments (nat * nat * X * X)) (A4: assignments nat)
                  (AZ: assignments X) (default: nodefun X)
                  (w: node): nodefun X :=
  match member S1 w with
  | true => f_unobs X
  | false =>
          match get_assigned_value A2 w with
          | Some i => f_parent_i X i
          | None =>
                 match get_assigned_value A3 w with
                 | Some (i, j, x, y) => f_equate_ij X i j x y
                 | None =>
                        match get_assigned_value A4 w with
                        | Some i => f_parent_i X i
                        | None =>
                               match get_assigned_value AZ w with
                               | Some x => f_constant X x
                               | None => default
                               end
                        end
                 end
          end
  end.
\end{minted}
We now must show the existence of these sets and assignments for arbitrary $d$-connected nodes $u$ and $v$. For the following lemmas, let $P$ be a clean $d$-connected path conditioned on $Z$ between $u$ and $v$ in $\GG$. Let $S_1, S_2, S_3, S_4, S_5$, and $S_6$ be the sets described by the partition for $\fpath$ given $P$. Let $D$ be any descendant map for $P$, as described in \autoref{def:descendant_map}.

\begin{lemma}\label{lem:A2_existence}
     There exists $A_2$, assignments for the nodes of $S_2$ to $\mathbb{N}$, satisfying that for all $(w, i) \in A_2$, there exists a node $a$ such that $a$ is the $i$-th node in $\textsf{Pa}(w)$, and either $a \rightarrow w$ or $w \leftarrow a$ is a subpath in $P$.
\end{lemma}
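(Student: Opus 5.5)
The plan is to build $A_2$ explicitly by recursion on the node list $[u] {++} l {++} [v]$ of the clean path $P$, working through consecutive triples (and the two endpoint pairs). We then verify the required property by induction on that list. For each node $w$ on $P$, we look at its (at most two) neighbours on $P$ and determine which of them are parents of $w$ in $\GG$, i.e.\ which incident path edges point into $w$. By the definition of $S_2$, $w$ is a transmitter exactly when precisely one such neighbour $a$ is a parent. For each such $w$ we emit the pair $(w, i)$, where $i$ is the index of $a$ in $\pa(w)$. Nodes in $S_1$ or $S_3$ emit nothing.

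The first step is a small helper lemma: if $(a, w) \in \EE$, then $a \in \pa(w)$, so there exists $i$ with $a$ the $i$-th element of $\pa(w)$. This follows from the definition of the parent list as a filter over the edge list, together with a standard lemma that every element of a list has an index; we can take $i$ to be the first occurrence. The second step defines the recursive function that scans the path and produces $A_2$. The third step proves, by induction on the list, that every entry $(w,i)$ satisfies the property: $\pa(w)$ at position $i$ is some $a$, and $a$ sits immediately before or after $w$ on $P$ with the edge directed $a \rightarrow w$. For the subpath condition, we use the existing \verb|sublist| machinery, much as \verb|colliders_vs_edges_in_path| extracts a triple from a path.

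The main obstacle is the lookup semantics. An \verb|assignments| list returns the most recent match, so we must make sure that a lookup of $w$ returns the entry we produced for $w$ and not a stale one. Acyclicity of $P$ resolves this, since each node appears at most once and so is emitted at most once. The statement only quantifies over members $(w,i) \in A_2$, so the membership-based property needs no uniqueness. Uniqueness matters only later, when the lookup is used in $\fpath$, and there we invoke \verb|acyclic_path| to show the emitted keys are distinct. A secondary nuisance is the endpoints $u$ and $v$, which have a single neighbour on $P$. We handle them as separate base cases of the recursion: $u$ is a transmitter iff its successor points into it, and symmetrically for $v$.
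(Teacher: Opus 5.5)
Your proposal is correct, but it organizes the recursion differently from the paper. The paper never defines a standalone scanning function. It does strong induction on the length of $P$, looks at the orientation of the first one or two edges, and applies the induction hypothesis to a suffix ($w \cdots v$, or $w' \cdots v$ when the successor $w$ of $u$ is a collider). It must then account for how the suffix's partition differs from that of $P$ at the junction. In the case $u \rightarrow w \rightarrow \cdots v$, the node $w$ is a source of the suffix but a transmitter of $P$, so the entry for $w$ is added by hand. In the collider case the paper skips to $w'$ precisely because $w$ would be a transmitter of the suffix $w \cdots v$ yet is a collider of $P$.

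Your construction classifies every node from its two neighbours in $P$ itself, so a node's membership in $S_2$ never changes along the recursion. Verification then reduces to a plain list induction plus the helper that an edge $(a,w)$ yields an index of $a$ in $\pa(w)$. The cost is a recursion that carries the previous node and treats the endpoints separately. The paper's suffix-based scheme buys uniformity: it is the same induction it reuses for $A_3$ and for Lemmas~\ref{lem:first_S1_node_anc} and~\ref{lem:consecutive_S1_nodes}, where relating the sets of a suffix to those of $P$ cannot be avoided.

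A minor point: every entry you emit satisfies the property, so a most-recent-match lookup returns a valid index even if keys were repeated. Distinctness of keys is therefore not what the later use in $\fpath$ depends on.
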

\begin{proof}
    We perform strong induction on the length of $P$. In the base case, $P$ is simply an edge between $u$ and $v$. Suppose $P$ is $u \rightarrow v$. Then, $S_2 = \{ v \}$. Since $u$ is a parent of $v$, we can determine $i_u$ such that $u$ is the $i_u$-th node in $\textsf{Pa}(v)$. Let $A_2 := \{ v : i_u \}$. Similarly, if $P$ is $u \leftarrow v$, then $S_2 = \{u\}$. Let $i_v$ be the index of $v$ in $\textsf{Pa}(u)$. Then, let $A_2 := \{ u : i_v \}$. 

    For the induction step, suppose the statement is true for paths of length $k$, where $2 \leq k < n$. Suppose $P$ has length $n > 2$. Let $w$ be the node following $u$ in $P$. We consider different cases for the edge orientations at the start of the path surrounding $w$:

    \textbf{\boldmath Case 1: $u \leftarrow w  \cdots v$}. By the induction hypothesis, there is a set $A_2'$ corresponding to $S_2'$ for the path $w \cdots v$. Note that if the path continues into $w$, e.g., $w \leftarrow \cdots v$, then $w \in S_2'$, and $w \in S_2$. If the path continues out of $w$, e.g., $w \rightarrow \cdots v$, then $w\in S_1'$, and $w \in S_1$. Thus, $S_2 = \{ u \} \cup S_2'$. Let $i_w$ be the index of $w$ in $\pa(u)$. Then, we define $A_2 := \{ u: i_w\} \cup A_2'$.

    \textbf{\boldmath Case 2: $u \rightarrow w \rightarrow \cdots v$}. By the induction hypothesis, there is a set $A_2'$ corresponding to $S_2'$ for the path $w \rightarrow \cdots v$. Note that $w \in S_1'$, but $w \in S_2$. Furthermore, $u \in S_1$. Thus, $S_2 = \{ w \} \cup S_2'$. Let $i_u$ be the index of $u$ in $\pa(w)$. Then, we define $A_2 := \{ w: i_u \} \cup A_2'$.

    \textbf{\boldmath Case 3a: $u \rightarrow w \leftarrow v$}. Consider the case of a three-node path, in which $w$ is a collider. Then, $S_2 = \emptyset$, so let $A_2 := \emptyset$.
    
    \textbf{\boldmath Case 3b: $u \rightarrow w \leftarrow w' \cdots v$}. Now suppose there are more than three nodes in the path. By the induction hypothesis, there is a set $A_2'$ corresponding to $S_2'$ for the path $w' \cdots v$. Note that $w \in S_3$, and $u \in S_1$. Thus, $S_2 = S_2'$. So, we define $A_2 := A_2'$.

    Thus, we can define $A_2$ as desired for a path of length $n$ for all cases of edge orientations.
\end{proof}

\begin{lemma}\label{lem:A3_existence}
    There exists $A_3$, assignments for the nodes of $S_3$ to $\mathbb{N} \times \mathbb{N} \times \verb|X| \times \verb|X|$, satisfying that for all $(w, (i, j, x, y)) \in A_3$, there exist nodes $a, b$, such that $a$ and $b$ are the $i$-th and $j$-th nodes in $\pa(w)$, respectively, and $a \rightarrow w \leftarrow b$ is a subpath in $P$. Furthermore, there exist list of nodes $p$ and node $d$ such that $(w, (p, d)) \in D$, $A_Z(d) = x$, and $x \neq y$.
\end{lemma}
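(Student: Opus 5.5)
We mirror the proof of \autoref{lem:A2_existence} and proceed by strong induction on the length of the clean $d$-connected path $P$, building $A_3$ by walking the path from $u$. The one new ingredient is the value component $(x, y)$. For each collider $w$ on $P$, the descendant map $D$ provides $(p, d)$ with $d \in Z$: either $d = w$ when $w \in Z$, or the chosen conditioned descendant otherwise. Since $d \in Z$, we can look up $x := A_Z(d)$; the properness of $A_Z$ as an assignment for $Z$ is used only here. Next, using the assumption that $\mathcal{X}$ has at least two distinct elements together with decidable equality, we pick $y$ as whichever of the two distinguished elements satisfies $\texttt{eqb}(x, \cdot) = \texttt{false}$. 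This choice is effective because equality is decidable. The indices $i$ and $j$ are the positions in $\pa(w)$ of the two path-neighbours of $w$. They exist because both neighbours are parents of $w$, and a list index of a member can always be computed.

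For the base case, $P$ is a single edge, so $S_3 = \emptyset$ and we take $A_3 := \emptyset$. For the inductive step, let $w$ be the node after $u$ on $P$. In Cases 1 and 2 of \autoref{lem:A2_existence} ($u \leftarrow w$, or $u \rightarrow w \rightarrow \cdots$), $w$ is not a collider and $u$ is an endpoint, so $S_3 = S_3'$ for the suffix path starting at $w$. We then reuse $A_3'$ from the induction hypothesis. In Cases 3a and 3b ($u \rightarrow w \leftarrow w' \cdots$), $w \in S_3$. Here we set $A_3 := \{ w : (i_u, i_{w'}, A_Z(d), y)\} \cup A_3'$, where $A_3'$ comes from the suffix starting at $w'$ (empty in Case 3a) and $(\,\cdot\,, d) = D(w)$.

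Applying the induction hypothesis to a suffix requires that suffix to again be a clean $d$-connected path with a descendant map. The argument is as follows. Every collider of the suffix is a collider of $P$, because the orientation of a node's two incident path edges is unchanged whenever neither neighbour is cut off; $w'$ itself becomes an endpoint. Consequently the restriction of $D$ satisfies $\texttt{descendant\_paths\_disjoint}$ for the suffix, since each of its clauses only becomes weaker on a subpath. In the same way, mediators and confounders of the suffix are mediators and confounders of $P$, so the suffix remains $d$-connected.

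We expect the main obstacle to be this bookkeeping about roles being preserved under taking suffixes, i.e.\ that $\texttt{find\_colliders\_in\_path}$ on the suffix is contained in that of $P$. We also need that the path-neighbour $a$ and $b$ witnessing $a \rightarrow w \leftarrow b$ is the actual parent selected by the index. For the first point we plan to use \texttt{colliders\_vs\_edges\_in\_path}: a collider triple $[y; x; z]$ in the suffix's node list is also a sublist of $P$'s node list. For the second point we plan a small lemma saying that if $a \in \pa(w)$, then $\texttt{nth}$ of its index returns $a$. The remaining obligations, namely $(w,(p,d)) \in D$, $A_Z(d) = x$, and $x \neq y$, hold directly by construction.
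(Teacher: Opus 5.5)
Your proposal is correct and matches the paper's proof step for step: strong induction on the length of $P$, reusing $A_3'$ from the suffix at $w$ when $w$ is not a collider, and otherwise adding the entry $\{w : (i_u, i_{w'}, A_Z(d), y)\}$ to the $A_3'$ obtained from the suffix at $w'$ (empty for the three-node path), with $x$ read off from $D$ and $y \neq x$ coming from the two-element assumption. Your extra bookkeeping just spells out what the paper dismisses as ``clear'': that the suffix is still a clean $d$-connected path with $D$ (or its restriction) as a descendant map, and how to choose $y$ constructively.
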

\begin{proof}
    Again, we proceed via strong induction on the length of $P$. For the base case, $P$ has only two endpoints and no intermediate nodes, so $S_3 = \emptyset$. Thus, $A_3 := \emptyset$.

    For the induction step, we follow the same setup as \autoref{lem:A2_existence}: suppose the statement is true for paths of length $k$, where $2 \leq k < n$ and that $P$ has length $n$. Let $w$ be the node following $u$ in $P$. By the induction hypothesis, there is a set $A_3'$ corresponding to $S_3'$ for the path $w \cdots v$. Suppose $w$ is not a collider in $P$. Then, $S_3 = S_3'$, and $D$ is also a descendant map for the path $w \cdots v$. Thus, we define $A_3 := A_3'$.

    Now suppose $w$ is a collider in $P$. We consider two cases:

    \textbf{Case 1: $u \rightarrow w \leftarrow v$}. Here, $P$ is the three-node path in which $w$ is a collider. Then, $S_3 = \{ w \}$. Since $u$ and $v$ are parents of $w$, they must appear in $\pa(w)$. Let $i_u$ and $i_v$ be the indices of $u$ and $v$ in $\pa(w)$, respectively. If $w\in Z$, let $x := A_Z(w)$. Otherwise, there exists $p,d$ such that $(w, (p, d)) \in D$. Let $x := A_Z(d)$. Since we assume that \verb|X| has at least two distinct elements, we can choose a $y$ such that $y \neq x$. Let $A_3 := \{ w: (i_u, i_v, x, y) \}$.

    \textbf{Case 2: $u \rightarrow w \leftarrow w' \cdots v$}. Now there are more than three nodes in the path. Consider the path $w' \cdots v$. It is clear that $D$ is also a descendant map for this path, since any overlaps in the descendant paths with each other or with the path itself would also occur for $P$. Then, by the induction hypothesis, there exists $A_3'$ corresponding to $S_3'$ for the path $w' \cdots v$. Note that $S_3 = \{ w \} \cup S_3'$. Thus, we only need to add on an entry for $w$ to $A_3'$, and we do so using the same methods as Case 1, defining $A_3 := \{ w: (i_u, i_v, x, y)\} \cup A_3'$.
\end{proof}

\begin{lemma}\label{lem:A4_existence}
    There exists $A_4$, assignments for the nodes of $S_4$ to $\mathbb{N}$, satisfying that for all $(w, i_a) \in A_4$, there exists $i_D$, nodes $c, d, a$ and list of nodes $p$, such that $(c, (p, d)) \in D$ where $c \neq d$, $w$ is at index $i_D$ in path $(c, d, p)$, $a$ is at index $i_D - 1$ in $(c, d, p)$, and $a$ is the $i_a$-th node in $\pa(w)$.
\end{lemma}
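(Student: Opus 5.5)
Unlike Lemmas~\ref{lem:A2_existence} and~\ref{lem:A3_existence}, which recurse on the structure of $P$, I would build $A_4$ by induction on the descendant map $D$ itself, viewed as the list of entries $(c,(p,d))$. This is natural because $S_4$ is exactly the union, over entries of $D$ with $c \neq d$, of the nodes of $p \mathbin{{+}{+}} [d]$, i.e.\ every node on the directed path $(c,d,p)$ except $c$.

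For a single entry $(c,(p,d))$ with $c\neq d$, write the directed path as the list $[c] \mathbin{{+}{+}} p \mathbin{{+}{+}} [d] = [w_0, w_1, \dots, w_n]$ with $w_0 = c$. I would do an inner induction on this list to produce the assignments $\{w_k : i_k\}_{k=1}^{n}$. Here $i_k$ is the index of $w_{k-1}$ in $\pa(w_k)$. That index exists because the path is directed, so $(w_{k-1}, w_k) \in \EE$ and hence $w_{k-1} \in \pa(w_k)$; I would use a membership-to-index lemma for lists. Taking $i_D := k$ and $a := w_{k-1}$ then gives the required witness, since $a$ sits at index $k-1$ in $(c,d,p)$. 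For entries with $c = d$, i.e.\ the $([],c)$ case where $c \in Z$, nothing is added.

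The outer induction concatenates these per-collider assignment lists. The main obstacle is that \verb|get_assigned_value| returns the most recent match. If a node $w$ appeared in two descendant paths, the lookup could return an index belonging to the wrong path, and the witness would fail. Condition~(4) of Definition~\ref{def:clean_d_path}, encoded in lines~11--13 of \verb|descendant_paths_disjoint|, rules this out: the lists $c :: p \mathbin{{+}{+}} [d]$ are pairwise disjoint for distinct colliders. Acyclicity of each $(c,d,p)$ (\verb|acyclic_path_2|) rules out duplicates within a single path.

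I also need $D$ to have at most one entry per collider key, or at least that the lookups I use agree. I would handle this by iterating over the list of colliders of $P$ and querying $D$ at each one, rather than iterating over $D$ directly. That way every entry used is the one that \verb|get_assigned_value| itself returns. The invariant to carry through the outer induction is this: every $(w,i_a)$ in the accumulated list satisfies the witness property, and the domains of the accumulated lists are pairwise disjoint. With that invariant, membership $(w,i_a)\in A_4$ can be traced back to a unique per-path block, and the statement follows.
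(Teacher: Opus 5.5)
Your proposal is correct and follows essentially the same route as the paper. The paper also builds each collider's block by recursion along $[c] \mathbin{{+}{+}} p \mathbin{{+}{+}} [d]$ (its \texttt{get\_A4\_nodes\_for\_path}, which exists because the path is directed), then iterates over the colliders of $P$, queries $D$, and appends the blocks. Your extra pairwise-disjointness invariant is not needed for the lemma as stated, since the lemma only concerns membership $(w,i_a)\in A_4$ and any block containing the pair is a witness; disjointness matters only later, when \texttt{g\_path} looks $w$ up in $A_4$ in the proof of Theorem~\ref{thm:condition_on_Z}.
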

\begin{proof}
    If $w \in S_4$, then $w$ must belong to the descendant path of some collider in $D$. We define the following function which, given a descendant path $(c, d, p)$, assigns the correct index bindings for the nodes on that path.

\begin{minted}{coq}
Fixpoint get_A4_nodes_for_path (c d: node) (p: nodes) (G: graph):
                               option (assignments nat) :=
  match p with
   | [] => match index (find_parents d G) c with
           | Some i => Some [(d, i)]
           | None => None
           end
   | h :: t => match index (find_parents h G) c with
               | Some i => match get_A4_nodes_for_path h d t G with
                           | Some r => Some ((h, i) :: r)
                           | None => None
                           end
               | None => None
               end
   end.
\end{minted}
Induction on the length of $p$ will show that $\verb|get_A4_nodes_for_path|(c, d, p, \GG)$ exists as long as $(c, d, p)$ is a directed path. Then, given $D$, we can go through each node of $S_3$ (colliders) and append the results of \verb|get_A4_nodes_for_path| together to get $A_4$. Induction on the length of the path (which affects the number of nodes in $S_3$) shows that this result exists as long as $D$ is a descendant map.

By construction, it is easy to see that any $(w, i_a) \in A_4$ computed as described above must correspond to a node in a descendant path and the index of the previous node in the path in $\pa(w)$. Mechanistically, induction on the colliders of $P$ will show that this result is true.
\end{proof}

Thus, we have shown the existence of $A_2, A_3$, and $A_4$ for arbitrary clean $d$-connected path $P$ from $u$ to $v$. We can then use \autoref{eq:fpath} to define the \gf\ $\fpath$.

\section{Equating Node Values and Conditioning on \texorpdfstring{$Z$}{Z}} \label{app:equate_node_values}
We will now show that for any choice of source-fixed unobserved-terms assignments $U$, $\fpath$ indeed equates noncollider values and properly conditions on $Z$. 

Let $P$ be a clean $d$-connected path given $Z$ and $A_Z$ from $u$ to $v$ in $\GG$. Let $S_1, ..., S_6$ be the partition described in \autoref{sec:fpath_intro}. Let $U$ be any unobserved-terms assignments source-fixed to $\alpha$, as defined in \autoref{def:source-fixed}. Let $A_2, A_3, A_4$, and $\fpath$ be as described in \autoref{app:exset}.

\begin{theorem}\label{thm:equate_values}
    For all noncollider nodes $w \in P$, $\fpath_U(w) = \alpha$.
\end{theorem}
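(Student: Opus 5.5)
We argue by strong induction along the path $P=[w_0,\dots,w_n]$ with $w_0=u$, $w_n=v$. The key observation is that every noncollider on $P$ is either a source ($S_1$) or a transmitter ($S_2$). For a source, \verb|f_unobs| gives $\fpath_U(w)=U(w)=\alpha$ directly, since $U$ is source-fixed to $\alpha$.

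For a transmitter $w$, Lemma~\ref{lem:A2_existence} provides the index $A_2(w)=i$. By the partition, $w\notin S_1$, so \verb|g_path| selects \verb|f_parent_i X i|. Hence $\fpath_U(w)=\fpath_U(a)$, where $a$ is the unique neighbor of $w$ on $P$ that is a parent of $w$.

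We then show that $a$ is itself a noncollider. Since $a\to w$ is an edge of $P$, $a$ has an outgoing edge along the path. A collider has both path-edges incoming, so $a$ is not a collider; $a$ may also be an endpoint, which is always a noncollider. This gives a chain $w \leftarrow a \leftarrow a' \leftarrow\cdots$ in which each step moves to a strictly earlier node in the topological order and stays among the noncolliders of $P$.

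The chain terminates because $P$ is finite and acyclic, and because in an acyclic graph such a directed chain cannot revisit a node. So the chain must reach a node of $P$ with no parent neighbor on $P$, which is a source and therefore has value $\alpha$. Formally, we run strong induction on the topological position of $w$, or equivalently on the distance from $w$ to the nearest source reached by following incoming path-edges backwards. Each transmitter's value then equals its parent's value, which is $\alpha$ by the induction hypothesis.

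The main obstacle is justifying that \verb|find_value| actually unfolds to the node function applied to the parents' computed values. This requires a lemma that, for acyclic $\GG$, $\fpath_U(w)=\fpath(w)(U(w),[\fpath_U(p)\mid p\in\pa(w)])$. That lemma follows from the topological-sort guarantee that parents precede children (Section~\ref{sec:topo_sort}).

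We also need the \verb|g_path| dispatch to be unambiguous, that is, that $w\in S_2$ is not captured earlier as a member of $S_1$. This holds because $S_1,S_2,S_3$ are disjoint, using Theorem~\ref{thm:desc_paths_disjoint} and the role-exclusivity theorem \verb|if_mediator_then_not_confounder_path|. Finally, the \verb|nth_default| lookup must return the designated parent, which follows from Lemma~\ref{lem:A2_existence} and the index chosen being within $\pa(w)$.
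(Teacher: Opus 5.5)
Your proof is correct and follows the paper's argument: a source gets $\alpha$ directly from \texttt{f\_unobs}, and a transmitter copies its unique path-parent, which has an outgoing path edge and so is again a noncollider, so the chain ends at a source. The only difference is the induction measure: the paper inducts on the position of $w$ along $P$ (and along the reversed path when the parent lies after $w$), while your induction on topological index covers both orientations at once; also, disjointness of $S_1,S_2,S_3$ needs only the definitions and acyclicity of $P$, not \autoref{thm:desc_paths_disjoint}.
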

\begin{proof}
    Note that all noncollider nodes $w \in P$ must be in $S_1$ or $S_2$. It is clear that for all $w \in S_1$, $\fpath_U(w) = \alpha$, since $\fpath_{U}(w) = U(w) = \alpha$ (recall that the \nf\ corresponding to a source is \verb|f_unobs|).

    Now consider any transmitter $w\in S_2$. Let $i$ be the mapping of $w$ given by $A_2$. Let $a$ be the $i$-th node in $\pa(w)$. Then, either $a \rightarrow w$ or $w \leftarrow a$ is a subpath of $P$. Note that since the \nf\ corresponding to $w$ is \verb|f_parent_i|, $\fpath_U(w) = \fpath_U(\pa(w)_i) = \fpath_U(a)$.

    Consider first the case that $a \rightarrow w$ is a subpath of $P$. Let the index of $w$ in $P$ be $i_w$. We perform induction on $i_w$: in the base case, $i_w = 1$ with zero-based indexing ($a$ is the first node, and $w$ is the second node). Then, $a$ must be in $S_1$, so we know that $\fpath_U(a) = \alpha$. Thus, $\fpath_U(w) = \fpath_U(a) = \alpha$, as desired.
    
    For the induction step, assume that $i_w > 1$. Our induction hypothesis is the following: for any two nodes $a'$ and $w'$ such that $a' \rightarrow w'$ is a subpath of $P$ and the index of $w'$ in $P$ is $i_w - 1$, we must have $\fpath_U(a') = \alpha$. Note that the index of $a$ in $P$ is exactly $i_w-1 \geq 1$, so there must be a node preceding $a$ in $P$: call it $a'$. If the edge orientation is $a' \rightarrow a$, then we can apply the induction hypothesis to get that $\fpath_U(a') = \alpha$, and thus $\fpath_U(a) = \alpha$. If the edge orientation is $a' \leftarrow a$, then $a \in S_1$, so we know that $\fpath_U(a) = \alpha$. In all cases, $\fpath_U(a) = \alpha$, so $\fpath_U(w) = \alpha$.

    The case of $w \leftarrow a$ being the subpath of $P$ is very similar, except that we would induct on the index of $w$ in the \textit{reverse} of $P$; in other words, the base case would be when $a$ and $w$ are the first and second nodes, respectively, in the reverse path; thus $w$ and $a$ are the second-to-last and last nodes, respectively, in $P$. This allows for almost identical logic to the above case.
\end{proof}
We have now shown that for all noncollider nodes $w_i, w_j \in P$, $\fpath_U(w_i) = \fpath_U(w_j)$. Importantly, this result means that $\fpath_U(u) = \fpath_U(v)$, which intuitively tells us that $u$ and $v$ cannot be semantically separated, as their values are equal.

Of course, $\fpath$ is of no use if it does not properly condition on $Z$ given assignments $A_Z$. Thus, we must show that all nodes in $Z$ evaluate to the correct values. Recall that $U$ is source-fixed to $\alpha$. 
\begin{theorem}\label{thm:condition_on_Z}
    For all nodes $z\in Z$, $\fpath_U(z) = A_Z(z)$.
\end{theorem}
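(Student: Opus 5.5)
The plan is a case split on which block of the partition $S_1,\dots,S_6$ contains $z$. First, $z$ cannot lie in $S_1$ or $S_2$. Every node of $S_1\cup S_2$ is a noncollider on $P$: either an endpoint, which is excluded since $u,v\notin Z$, or an internal mediator or confounder, which is excluded by $d$-connectedness. By definition $z\notin S_6$. So it remains to treat $z\in S_5$, $z\in S_3$, and $z\in S_4$.

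If $z\in S_5$, its node function is \verb|f_constant| with value $A_Z(z)$, so $\fpath_U(z)=A_Z(z)$ immediately. If $z\in S_3$, then $z$ is a collider on $P$, so $A_3$ maps it to $(i,j,x,y)$ with $a\rightarrow z\leftarrow b$ a subpath of $P$. By \autoref{lem:A3_existence}, $x=A_Z(d)$, where $(z,(p,d))\in D$. Since $z\in Z$, the descendant map gives $(p,d)=([],z)$, so $x=A_Z(z)$. The neighbours $a,b$ of a collider are noncolliders on $P$: two consecutive colliders are impossible, because the edge between them cannot point both ways. Hence \autoref{thm:equate_values} gives $\fpath_U(a)=\fpath_U(b)=\alpha$, \verb|f_equate_ij| returns $x$, and $\fpath_U(z)=A_Z(z)$.

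If $z\in S_4$, then $z$ lies on a descendant path $(c,d,p)$ in $D$ for some collider $c\notin Z$. Cleanliness condition (2), that all intermediate nodes avoid $Z$, forces $z=d$. I then prove by induction along the descendant path that every node $w$ on $(c,d,p)$, including $c$, satisfies $\fpath_U(w)=A_Z(d)$.

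For the base case at $c$, I argue exactly as in the $S_3$ case. Its two path parents are noncolliders with value $\alpha$, so $\fpath_U(c)=x=A_Z(d)$, using \autoref{lem:A3_existence} with $(c,(p,d))\in D$.

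For the step, take a node $w$ at index $i_D$ of the descendant path. By \autoref{lem:A4_existence}, $A_4(w)$ is the index in $\pa(w)$ of the node $a$ at index $i_D-1$. The node function is \verb|f_parent_i|, so $\fpath_U(w)=\fpath_U(a)$, which equals $A_Z(d)$ by the induction hypothesis. Taking $w=d=z$ finishes the case.

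The main obstacle I expect is justifying that each node really receives the intended node function from \verb|g_path|. That function dispatches by first match in the order $S_1$, $A_2$, $A_3$, $A_4$, $A_Z$, so I need the domains of these assignments to be pairwise disjoint. For example, a descendant-path node $w\in S_4$ must not also occur in $A_2$ or $A_3$; this follows from cleanliness condition (3) ($d_i\notin P$). Two descendant paths must not both assign a node in $A_4$; this follows from condition (4). Nodes of $S_4$ must not be sources; this again follows from condition (3).

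I would establish these disjointness facts as a preliminary lemma, using that the domains of $A_2,A_3$ are exactly the path transmitters and colliders (\autoref{lem:A2_existence}, \autoref{lem:A3_existence}). I would then also check that the lookup inside $A_4$ returns the entry from the correct descendant path, so that the induction along $(c,d,p)$ uses the right predecessor.
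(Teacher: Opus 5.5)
Your proof is correct and follows essentially the same route as the paper. Both split into the cases $z\in S_3\cup S_4\cup S_5$, obtain the collider value $x$ by applying \autoref{thm:equate_values} to its two path parents, and induct along the descendant path from $c$ to $d$. Your extra steps make explicit what the paper leaves implicit: ruling out $S_1\cup S_2$ via $u,v\notin Z$ and $d$-connectedness, pinning $z=d$ via cleanliness condition (2), and checking that \texttt{g\_path} dispatches each node to its intended node function.
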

\begin{proof}
    First, we show that for all nodes $w\in S_3$ (colliders), if $w$ maps to $(i, j, x, y)$ in $A_3$, then $\fpath_U(w) = x$. Let $a$ and $b$ be the $i$-th and $j$-th nodes of $\pa(w)$, respectively. Then, $a \rightarrow w \leftarrow b$ is a subpath of $P$. Since the \nf\ corresponding to $w$ is \verb|f_equate_ij|, \[\fpath_U(w) := \begin{cases}
    x & \text{if } \fpath_U(a) = \fpath_U(b) \\
    y & \text{otherwise}.
    \end{cases}\]
    Since $a$ and $b$ both have arrows out towards $w$, neither can be in $S_3$. Thus, both are in $S_1\cup S_2$ (noncollider nodes in $P$). Then, by \autoref{thm:equate_values}, $\fpath_U(a) = \fpath_U(b)$. Thus, $\fpath_U(w) = x$.

    Now, consider any $z\in Z$. By construction of the partition, $z \in S_3 \cup S_4 \cup S_5$. Suppose $z\in S_5$ (one of the residual nodes in $Z$). Then, by definition, $\fpath_U(z) = A_Z(z)$. Furthermore, if $z \in S_3$ is a collider, then $z$ does not need a descendant path, so $A_Z(z) = x$ by construction of $A_3$.
    Then, by the above paragraph, $\fpath_U(z) = x = A_Z(z)$, as desired.

    It remains to show that $\fpath_U(z) = A_Z(z)$ if $z\in S_4$ is the conditioned descendant of a collider.
    Let $c$ be the corresponding collider, and let $[c] {++} p {++} [z]$ be the path from collider to descendant
    (more concretely, $(c, (p, z)) \in D$, where $D$ is a descendant map, as described in \autoref{def:descendant_map}).
    We will show that for all nodes $w \in S_4$ in the descendant path, $\fpath_U(w) = \fpath_U(c)$. Let $i$ be the mapping of $w$ given by $A_4$. Let $a$ be the $i$-th node in $\pa(w)$. Since the \nf\ corresponding to $w$ is \verb|f_parent_i|, $\fpath_U(w) = \fpath_U(\pa(w)_i) = \fpath_U(a)$.
    
    We proceed via induction on the index of $w$ in the path. For the base case, we assume $w$ is the first node of $p$. Then, $a = c$, so $\fpath_U(w) = \fpath_U(c)$. For the induction hypothesis, we assume all nodes in the descendant path prior to $w$ evaluate to $\fpath_U(c)$. We can then apply the induction hypothesis on $a$, and we thus have $\fpath_U(w) = \fpath_U(a) = \fpath_U(c)$, as desired.

    Then, since $z$ is a node in the descendant path $[c] {++} p {++} [z]$, we have $\fpath_U(z) = \fpath_U(c)$. We furthermore know, by the first paragraph of the proof, that $\fpath_U(c) = x$, where $(i_c, j_c, x, y)$ is the mapping of $c$ given by $A_3$. Since $z$ is the corresponding descendant, $x = A_Z(z)$ by construction of $A_3$, so $\fpath_U(c) = A_Z(z)$. Thus, $\fpath_U(z) = A_Z(z)$, as desired.
\end{proof}

\section{Satisfying the Conditions of Semantic Separation}\label{app:forward}
To prove that the unobserved-terms assignments given in \autoref{eq:sequence_U} satisfy the conditions of \autoref{def:ci}, we must establish relationships between the nodes of $S_1$ with other nodes in the path, particularly nodes of which they are unblocked ancestors.

\begin{lemma}\label{lem:first_S1_node_anc}
    The first node in $P$ that is a member of $S_1$ is an unblocked ancestor of $u$.
\end{lemma}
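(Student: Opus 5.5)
Write $P = [w_0, w_1, \dots, w_m]$ with $w_0 = u$ and $w_m = v$. Let $s = w_k$ be the first node of $P$ that lies in $S_1$. The plan is to show that the segment $w_k, w_{k-1}, \dots, w_0$ is a directed path from $s$ to $u$ with no node in $Z$ except possibly its endpoint $u$. That is exactly what \autoref{def:unblockedancestor} asks for, and if $k=0$ there is nothing to prove because $s = u$.

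First I would show that $s$ exists. I would argue as in the induction of \autoref{thm:equate_values}. Walk along $P$ from $u$. Whenever the current node $w_j$ has its preceding neighbour as a parent, we have $w_{j-1} \leftarrow w_j$ with the edge pointing back toward $u$. Following such backward edges cannot continue forever, since $P$ is finite. Also, $u$ has at most one neighbouring parent on $P$, so it lies in $S_1$ or $S_2$. If $u \in S_1$ we are done. Otherwise $u$'s parent on $P$ is $w_1$, so the edge is $u \leftarrow w_1$.

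Now take the maximal initial segment whose edges all point toward $u$:
\[
u = w_0 \leftarrow w_1 \leftarrow \cdots \leftarrow w_k .
\]
This means the edge between $w_k$ and $w_{k+1}$, if it exists, is $w_k \rightarrow w_{k+1}$. So $w_k$ has no neighbouring parent on $P$, giving $w_k \in S_1$. This also covers the endpoint case $k=m$.

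Each $w_j$ with $1 \le j < k$ has the configuration $w_{j-1} \leftarrow w_j \leftarrow w_{j+1}$. So $w_j$ has exactly one neighbouring parent on $P$, which puts it in $S_2$; in particular it is not in $S_1$. Hence $w_k$ really is the first $S_1$ node, and by the ordering this is the same node as the first $S_1$ node along $P$.

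Next I would verify the unblocked condition. Each internal node $w_j$ with $1 \le j < k$ is a mediator on $P$, since it has one incoming and one outgoing path edge. The node $w_k$, when $k<m$, is a confounder on $P$, because $w_{k-1} \leftarrow w_k \rightarrow w_{k+1}$. Since $P$ is $d$-connected (\autoref{def:d_conn}), none of these nodes is in $Z$. If $k = m$, then $w_k = v \notin Z$ by the hypothesis of \autoref{thm:equiv}. The reversed segment is a directed path from $w_k$ to $u$, and it is acyclic because $P$ is. So $w_k \in \anc{u}$.

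The main obstacle is the mechanization rather than the mathematics. It lies in stating "first $S_1$ node" so that the segment-extraction argument goes through. I would do induction on the length of $P$, splitting on the orientation of the first edge. If the first edge is $u \rightarrow w_1$, then $u \in S_1$. If it is $u \leftarrow w_1$, I would recurse on the path starting at $w_1$, obtain an unblocked ancestor of $w_1$, and prepend the edge to $u$. Two facts need care here. First, $w_1$ must be a noncollider on $P$, hence not in $Z$. Second, $S_1$ membership of the nodes after $w_1$ must be the same for the subpath and for $P$. This is the same bookkeeping as in \autoref{lem:A2_existence}.
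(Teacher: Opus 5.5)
Your proof is correct and matches the paper's: the paper proves the lemma by exactly the induction on the length of $P$ that you sketch at the end. If $u \rightarrow w_1$, then $u \in S_1$. If $u \leftarrow w_1$, it recurses on the subpath from $w_1$, using that $u \in S_2$, so the first source is unchanged, and that $w_1$ is a noncollider and hence not in $Z$; the base case $u \leftarrow v$ is covered by $v \notin Z$. Your explicit maximal backward prefix $u \leftarrow w_1 \leftarrow \cdots \leftarrow w_k$ is that same induction unrolled.
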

\begin{proof}
    Proceed via induction on the length of $P$. For the base case, $P$ consists of only $u$ and $v$. If $P$ is $u \rightarrow v$, then $u$ is the first member of $S_1$, and $u \in \anc{u}$. If $P$ is $u \leftarrow v$, then $v$ is the first member of $S_1$, and it is clear that $v \in \anc{u}$, since $v \not\in Z$ by assumption.

    For the induction hypothesis, we assume that for paths of length $n$, the first node in $S_1$ is an unblocked ancestor of the first node in the path. Suppose $P$ has length $n + 1$, where $n \geq 2$. If the edge out of $u$ is outwards, then $u$ is the first member of $S_1$, and $u \in \anc{u}$. Otherwise, $P$ is $u \leftarrow w \cdots v$ for some node $w$. Let $P'$ be the subpath $w \cdots v$ with respective partitions $S_1'$, $S_2'$, etc. Note that $u \in S_2$, so the first node in $P'$ in $S_1'$, $a$, will also be the first node in $S_1$. By the induction hypothesis, $a$ is an unblocked ancestor of $w$. Note that since $P$ is $d$-connected, $w \not\in Z$, since $w$ is not a collider. Thus, $a \in \anc{u}$.
\end{proof}

\begin{lemma}\label{lem:consecutive_S1_nodes}
    If $x,y \in S_1$ are consecutive sources (no nodes between them on $P$ are in $S_1$), then there exists a node $z\in Z$ such that $x \in \anc{z}$ and $y \in \anc{z}$.
\end{lemma}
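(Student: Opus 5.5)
We will analyze the structure of $P$ between two consecutive sources $x$ and $y$ (say $x$ precedes $y$ on $P$). Since $x \in S_1$, the edge of $P$ leaving $x$ toward $y$ points away from $x$; symmetrically, the edge of $P$ entering $y$ from the $x$ side points away from $y$. Every node strictly between them lies in $S_2 \cup S_3$. The first key step is a structural claim: the segment of $P$ from $x$ to $y$ has the form
\[
x \rightarrow m_1 \rightarrow \cdots \rightarrow m_j \rightarrow c \leftarrow n_k \leftarrow \cdots \leftarrow n_1 \leftarrow y,
\]
with exactly one collider $c \in S_3$ and all other intermediate nodes transmitters (mediators on $P$). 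To prove this, I will walk from $x$ forward. Following a forward edge, the next node has a path-parent. If its next edge also points forward, it is a transmitter (and not a source). If its next edge points backward, it is a collider. After a collider, the next node $w$ has the edge pointing out of $w$ toward the collider, so $w$ has at most one path-parent. If $w$ has no path-parent it is a source, hence $w = y$. Otherwise $w$ is a transmitter whose parent lies further along, and the chain of backward edges continues until it reaches a node with no path-parent, necessarily $y$. A second collider cannot occur before $y$, because after the first collider all edges point backward until a source appears. Some collider must occur, since otherwise the edges from $x$ are all forward and reach $y$ via an incoming edge, contradicting $y \in S_1$. This is the same case analysis used in \autoref{lem:A2_existence} and \autoref{lem:first_S1_node_anc}; formally, I would do an induction on the length of the segment.

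Given the structure, take $z$ as follows. If $c \in Z$, let $z := c$. Otherwise, since $P$ is a clean $d$-connected path, $c$ has a descendant path $Q = (c, d, p)$ in the descendant map $D$, with $d \in Z$ and all of $c$ and $p$ outside $Z$ (by \autoref{def:clean_d_path}); let $z := d$. Then $x \rightarrow m_1 \rightarrow \cdots \rightarrow c$, possibly followed by $p$ and ending at $z$, is a directed path from $x$ to $z$. We must check that no node on it other than $z$ lies in $Z$. The $m_i$ are mediators on the $d$-connected path $P$, so they are not in $Z$. The node $x$ is a source, hence either a confounder or an endpoint, and endpoints are excluded because $u, v \notin Z$. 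Finally, $c$ and the nodes of $p$ are outside $Z$ by the definition of the descendant map. Hence $x \in \anc{z}$, and symmetrically $y \in \anc{z}$. If the concatenated walk fails to be acyclic, I will invoke \autoref{thm:directed_path_acyclic}; the resulting sublist inherits the non-membership in $Z$ of its internal nodes.

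I expect the main obstacle to be the structural claim in the first step: rigorously ruling out segments containing zero or two colliders, and handling the boundary cases where $x$ or $y$ is an endpoint $u$ or $v$. In the mechanization this amounts to a careful induction over the path list with the $\dir_P$ bookkeeping. The first thing I would try is to strengthen the induction to state that, starting from a node whose outgoing $P$-edge points forward, the path reaches a collider through transmitters, after which it reaches the next source through backward transmitters.
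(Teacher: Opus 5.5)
Your proof is correct. Its core is exactly the intuition the paper states before its formal argument: between consecutive sources there is a collider, and both sources are unblocked ancestors of that collider's conditioned descendant. The formal routes differ, however.

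The paper inducts on the length of $P$ by peeling off $u$:
\begin{itemize}
\item If the edge at $u$ points into $u$, or $u \neq x$, it recurses on the suffix.
\item If $u = x$ with $u \rightarrow w_1 \rightarrow w_2$, it takes the $z$ obtained for the consecutive suffix sources $w_1, y$. It then extends the unblocked-ancestor path by the edge $u \rightarrow w_1$, which is allowed because $w_1$ is a mediator and so $w_1 \notin Z$.
\item If $u = x$ with $u \rightarrow w_1 \leftarrow w_2$, it applies \autoref{lem:first_S1_node_anc} to the suffix to get $y \in \anc{w_1}$, and then uses the conditioned descendant of the collider $w_1$.
\end{itemize}
So your forward mediator chain is absorbed by repeated use of the induction hypothesis, and your backward chain by the earlier lemma, with no explicit shape theorem. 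Your version isolates the shape of the segment as a standalone claim: forward mediators, exactly one collider, backward mediators. This is more transparent and names $z$ concretely. It costs a separate induction for the shape claim, whereas the paper reuses an existing lemma and follows the list recursion of the mechanization.

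Two small simplifications are available to you. First, you do not need cleanliness or the descendant map. $d$-connectedness already gives $c$ a descendant in $Z$, and taking the first $Z$-node along a directed path from $c$ yields the required $z$; this is what the paper does. Second, the concatenated directed walk is automatically acyclic, since a repeated node on a directed walk in a DAG would give a directed cycle. The appeal to \autoref{thm:directed_path_acyclic} is therefore unnecessary.
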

\begin{proof}
    Intuitively, $x$ and $y$ must both have arrows out; in other words, $x \rightarrow \cdots \leftarrow y$ is a subpath of $P$. Then, at some point in the subpath, there must be a collider. Since $P$ is $d$-connected, this collider has a conditioned descendant $z \in Z$, of which $x$ and $y$ are unblocked ancestors.

    To formally prove the statement, we induct on the length of the path. Note that $P$ must have length at least 3 in order to have two nodes in $S_1$. Thus, for the base case, $P$ is $u \rightarrow w \leftarrow v$ (the other arrow orientations result in only one node in $S_1$), where $u = x$ and $v = y$. If $w\in Z$, then let $z = w$. It is clear that $x \in \anc{w}$ and $y \in \anc{w}$. Otherwise, $w$ must have a conditioned descendant $z \in Z$ such that there is a directed path from $w$ to $z$ that does not go through other nodes in $Z$. Then, it is clear that $x \in \anc{z}$ and $y \in \anc{z}$.

    Suppose the statement is true for paths $P'$ of length $n$ and corresponding set of nodes $S_1'$. Suppose $P$ has length $n+1$, where $n \geq 3$, so $P$ is $u \leftrightarrow w_1 \leftrightarrow w_2 \cdots v$. Let $P'$ be the subpath $w_1 \cdots v$. If $P$ has an arrow into $u$, then $u \in S_2$, so $x,y$ are still consecutive nodes in $S_1'$ corresponding to $P'$. Thus, we can apply the induction hypothesis on $P'$ to get the desired $z$.

    Now consider the case that $P$ has an arrow out of $u$. If $u \neq x$, then $x$ and $y$ are again still consecutive nodes in $S_1'$ corresponding to $P'$, so we again can apply the induction hypothesis on $P'$. Suppose $u = x$. We first consider the case that $P$ is $u \rightarrow w_1 \rightarrow w_2 \cdots v$. Then, note that the first node of $S_1'$ will be $w_1$, and the second will be $y$. Thus, $w_1$ and $y$ are consecutive nodes in $S_1'$, so we can find a $z\in Z$ such that $w_1 \in \anc{z}$ and $v\in \anc{z}$. Then, $u \in \anc{z}$ since $u \rightarrow w_1$ is an edge and $w_1 \not\in Z$, since $w_1$ is a mediator.

    We now consider the case that $P$ is $u \rightarrow w_1 \leftarrow w_2 \cdots v$. By \autoref{lem:first_S1_node_anc}, $y \in \anc{w_1}$. Since $w_1$ is a collider, it has a conditioned descendant $z\in Z$. Applying the same logic as the base case, we have that $u = x \in \anc{z}$, as desired.
\end{proof}

Letting $U_0, ... , U_\ell$ be the sequence given in \autoref{eq:sequence_U}, note that $U_0$ satisfies Condition 1 of \autoref{def:ci}. Furthermore, $\fpath_{U_0}(v) = \alpha \neq \beta = \fpath_{U_\ell}(v)$ by \autoref{thm:equate_values}. Thus, if we show that the sequence satisfies the remaining conditions of \autoref{def:ci}, then we will show that $u$ and $v$ are not semantically separated.

\begin{theorem}\label{thm:sequence_U}
    For the sequence defined in \autoref{eq:sequence_U}, each two consecutive unobserved-terms assignments $U_{i-1}, U_i$ in the sequence differ from each only for
    $$a' \in \bigcup_{\substack{z\in Z \\ \exists a \in \anc{z}, \\ U_{i-2}(a) \neq U_{i-1}(a)}} \anc{z}.$$
\end{theorem}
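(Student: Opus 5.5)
The statement concerns $i \geq 2$, since the index $U_{i-2}$ must exist; the case $i=1$ is the catalyst condition, handled separately by \autoref{lem:first_S1_node_anc}. Fix $i$ with $2 \leq i \leq \ell$. By \autoref{eq:sequence_U}, $U_i$ differs from $U_{i-1}$ at most at the single node $s_i$. So it suffices to exhibit a node $z \in Z$ that satisfies both of the following:
\begin{itemize}
\item some $a \in \anc{z}$ has $U_{i-2}(a) \neq U_{i-1}(a)$;
\item $s_i \in \anc{z}$.
\end{itemize}

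The natural witness for $a$ is $s_{i-1}$. The step from $U_{i-2}$ to $U_{i-1}$ changes exactly the entry at $s_{i-1}$, from its source-fixed value $\alpha$ to $\beta$. Since $\alpha \neq \beta$ and every source is assigned $\alpha$ in $U_0$, we get $U_{i-2}(s_{i-1}) = \alpha \neq \beta = U_{i-1}(s_{i-1})$. Here we use that the sources $s_1, \dots, s_\ell$ are distinct nodes, because $P$ is acyclic. This means no earlier step has already altered $s_{i-1}$.

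Next, $s_{i-1}$ and $s_i$ are consecutive sources on $P$, since $S_1$ is listed in path order. \autoref{lem:consecutive_S1_nodes} then yields a $z \in Z$ with $s_{i-1} \in \anc{z}$ and $s_i \in \anc{z}$. Taking $a = s_{i-1}$, this $z$ belongs to the index set of the union in the statement, and $s_i \in \anc{z}$ lies in the union. Therefore every node at which $U_{i-1}$ and $U_i$ differ, namely at most $s_i$, lies in the required set.

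The main subtlety is bookkeeping rather than mathematics. We must confirm that the source ordering used in \autoref{eq:sequence_U} matches the notion of ``consecutive'' in \autoref{lem:consecutive_S1_nodes}, which means no $S_1$ node lies strictly between $s_{i-1}$ and $s_i$ on $P$. We must also confirm distinctness, so that the difference $U_{i-2}(s_{i-1}) \neq U_{i-1}(s_{i-1})$ really holds. Both follow directly from listing $S_1$ in order of appearance on the acyclic path $P$. In a mechanization, this would be an induction over the list $S_1$ that pairs adjacent elements.
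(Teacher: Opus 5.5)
Your proof is correct and follows the paper's argument essentially verbatim. $U_i$ and $U_{i-1}$ differ only at $s_i$, \autoref{lem:consecutive_S1_nodes} supplies a $z \in Z$ with $s_{i-1}, s_i \in \anc{z}$, and $s_{i-1}$ witnesses the change $U_{i-2}(s_{i-1}) = \alpha \neq \beta = U_{i-1}(s_{i-1})$. Your explicit appeals to the distinctness of the sources (from acyclicity of $P$) and to $S_1$ being listed in path order are the bookkeeping the paper leaves to its mechanized induction on the length of $S_1$.
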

\begin{proof}
    Note that the statement applies only to paths with at least two sources, since $\ell$ must be at least 2 for $U_{i-2}$ to make sense. Consider $U_i$ and $U_{i-1}$ for some $i \geq 2$. By definition, they differ from each other only for $s_i$. By \autoref{lem:consecutive_S1_nodes}, there is a node $z \in Z$ such that $s_{i-1}, s_i \in \anc{z}$. Note that by definition, $U_{i-2}(s_{i-1}) = \alpha \neq \beta = U_{i-1}(s_{i-1})$. Thus, the statement holds.

    In order to prove the above formally, we proceed via induction on the length of $S_1$, proving a key lemma that $S_1'$ corresponding to the subpath $s_i \cdots v$ of $P$ is the subset of $S_1$ containing nodes beginning from $s_i$ and continuing to the end of the subpath.
\end{proof}

We now fill in the remaining details of the forward direction of \autoref{thm:equiv}, restated below for convenience.
\equiv*
\begin{proof}[Proof (Forward).]
    We aim to show that if $u$ and $v$ are semantically separated given $Z$, then they are $d$-separated given $Z$ in $\GG$.
    
    We proceed via the contrapositive. Suppose that $u$ and $v$ are not $d$-separated, so there exists a clean $d$-connected path $P$ from $u$ to $v$. Let $U_0, U_1, ..., U_\ell$ be defined as in \autoref{eq:sequence_U}. We show that all conditions of \autoref{def:ci} are satisfied, using graph function $\fpath$:
    \begin{enumerate}
        \item Since $U_0$ is source-fixed to $\alpha$, $\fpath_{U_0}(u) = \alpha$, and the function properly conditions on $Z$ by Theorems \ref{thm:equate_values} and \ref{thm:condition_on_Z}, respectively.
        \item By construction, $U_1$ differs from $U_0$ for only $s_1$, where $s_1 \in \anc{u}$ by \autoref{lem:first_S1_node_anc}. To show that $\fpath_{U_1}(u) = \beta$, we consider the edge orientation associated with $u$ on $P$: if $P$ has an edge out of $u$, then $u\in S_1$, so $\fpath_{U_1}(u) = U_1(u) = \beta$. Otherwise, $u\in S_2$. Via induction on the length of the path, we show that the chain of $S_2$ nodes starting at $u$ will terminate at $s_1$, and all nodes along the chain will take on the value of $s_1$, which is $\beta$.
        \item By \autoref{thm:sequence_U}, this condition is satisfied.
        \item Note that $\ell = |S_1|$, and $|S_1| \leq |\VV|$, so the condition is satisfied.
        
        \item Since $U_\ell$ is source-fixed to $\beta$, $\fpath_{U_\ell}(u) = \beta$, and the function properly conditions on $Z$ by Theorems \ref{thm:equate_values} and \ref{thm:condition_on_Z}, respectively.
    \end{enumerate}
    By \autoref{thm:equate_values}, $\fpath_{U_0}(v) = \alpha \neq \beta = \fpath_{U_\ell}(v)$. Thus, $u$ and $v$ are not semantically separated.
\end{proof}

\section{The Backward Direction for Short Sequences}\label{app:backward}

\valueaffectedbyunblocked*
\begin{proof}
    We proceed via strong induction on the index of $w$ in the topological sort of $\GG$. For the base case, suppose $w$ is the first node in the topological sort. Then, it has no parents, so $f_U(w)$ and $f_{U'}(w)$ depend only on $U(w)$ and $U'(w)$, respectively. Thus, it must be true that $U(w) \neq U'(w)$, so we can simply let $a = w$, where clearly $w \in \anc{w}$.

    For the induction hypothesis, assume that the statement is true for all nodes with index at most $i$ in the topological sort of $\GG$, and suppose the index of $w$ is $i + 1$. If $U(w) \neq U'(w)$, we can once again simply let $a = w$. Assume $U(w) = U'(w)$. Then, there must be some $w' \in \pa(w)$ such that $f_U(w') \neq f_{U'}(w')$, since $f_U(w) \neq f_{U'}(w)$. Since $w' \in \pa(w)$, the index of $w'$ in the topological sort of $\GG$ must be less than $i + 1$. Thus, by the induction hypothesis, there exists a node $a \in \anc{w'}$ such that $U(a) \neq U'(a)$. Note that since $f_U$ and $f_{U'}$ both properly condition on $Z$, it must be true that $w' \not\in Z$, since $f_U(w') \neq f_{U'}(w')$. Thus, $a \in \anc{w}$, as desired.
\end{proof}

\begin{figure}
    \begin{subfigure}[b]{.4\textwidth}
        \centering
        \begin{tikzpicture}[
        node distance=1cm and 1cm,
        every node/.style={minimum size=1.2em, font=\large},
        circ/.style={draw, thick, circle, inner sep=2pt},
        dots/.style={font=\scriptsize}
        ]
        
        \node (w1) at (1,1) {$w_1$};
        \node[below=.5cm of w1, dots] (dots) {$\vdots$};
        \node[below=.5cm of dots] (w2) {$w_2$};
        
        \draw[->, thick] (w1) -- (dots);
        \draw[->, thick] (dots) -- (w2);
        \end{tikzpicture}
        \caption{$a = w_1$ and has a directed path to $w_2$.}
        \label{fig:w1w2directed}
    \end{subfigure}
    \hfill
    \begin{subfigure}[b]{.4\textwidth}
        \centering
        \begin{tikzpicture}[
        node distance=1cm and 1cm,
        every node/.style={minimum size=1.2em, font=\large},
        circ/.style={draw, thick, circle, inner sep=2pt},
        dots/.style={font=\scriptsize}
        ]
        
        \node (w2) at (1,1) {$w_2$};
        \node[below=.5cm of w2, dots] (dots) {$\vdots$};
        \node[below=.5cm of dots] (w1) {$w_1$};
        
        \draw[->, thick] (w2) -- (dots);
        \draw[->, thick] (dots) -- (w1);
        \end{tikzpicture}
            \caption{$a = w_2$ and has a directed path to $w_1$.}
            \label{fig:w2w1directed}
    \end{subfigure}

    \begin{subfigure}[b]{.4\textwidth}
        \centering
        \begin{tikzpicture}[
        node distance=1cm and 1cm,
        every node/.style={minimum size=1.2em, font=\large},
        circ/.style={draw, thick, circle, inner sep=2pt},
        dots/.style={font=\scriptsize}
        ]
        
        \node (a) at (1,1) {$a$};
        \node[below left=.5cm of a, dots] (dots1) {\reflectbox{$\ddots$}};
        \node[below right=.5cm of a, dots] (dots2) {$\ddots$};
        \node[below left=.5cm of dots1] (w1) {$w_1$};
        \node[below right=.5cm of dots2] (w2) {$w_2$};
        
        \draw[->, thick] (a) -- (dots1);
        \draw[->, thick] (dots1) -- (w1);
        \draw[->, thick] (a) -- (dots2);
        \draw[->, thick] (dots2) -- (w2);
        \end{tikzpicture}
        \caption{The two directed paths from $a$ to $w_1$ and $w_2$ do not overlap.}
        \label{fig:confounderpatha}
    \end{subfigure}
    \hfill
    \begin{subfigure}[b]{.4\textwidth}
        \centering
        \begin{tikzpicture}[
        node distance=1cm and 1cm,
        every node/.style={minimum size=1.2em, font=\large},
        circ/.style={draw, thick, circle, inner sep=2pt},
        dots/.style={font=\scriptsize}
        ]
        
        \node (a) at (1,1) {$a$};
        \node[below=.5cm of a, dots] (dots1) {$\vdots$};
        \node[below=.5cm of dots1] (a2) {$a'$};
        \node[below left=.5cm of a2, dots] (dots3) {\reflectbox{$\ddots$}};
        \node[below right=.5cm of a2, dots] (dots4) {$\ddots$};
        \node[below left=.5cm of dots3] (w1) {$w_1$};
        \node[below right=.5cm of dots4] (w2) {$w_2$};
        
        \draw[->, thick] (a) -- (dots1);
        \draw[->, thick] (dots1) -- (a2);
        \draw[->, thick] (a2) -- (dots3);
        \draw[->, thick] (a2) -- (dots4);
        \draw[->, thick] (dots3) -- (w1);
        \draw[->, thick] (dots4) -- (w2);
        \end{tikzpicture}
        \caption{The two directed paths from $a$ to $w_1$ and $w_2$ overlap at $a'$.}
        \label{fig:confounderpathb}
    \end{subfigure}
    
    \caption{Given that $a \in \anc{w_1} \cap \anc{w_2}$, there are four possible cases for the path structure between $w_1$ and $w_2$, resulting in either a directed path or a single-confounder path.}
\end{figure}
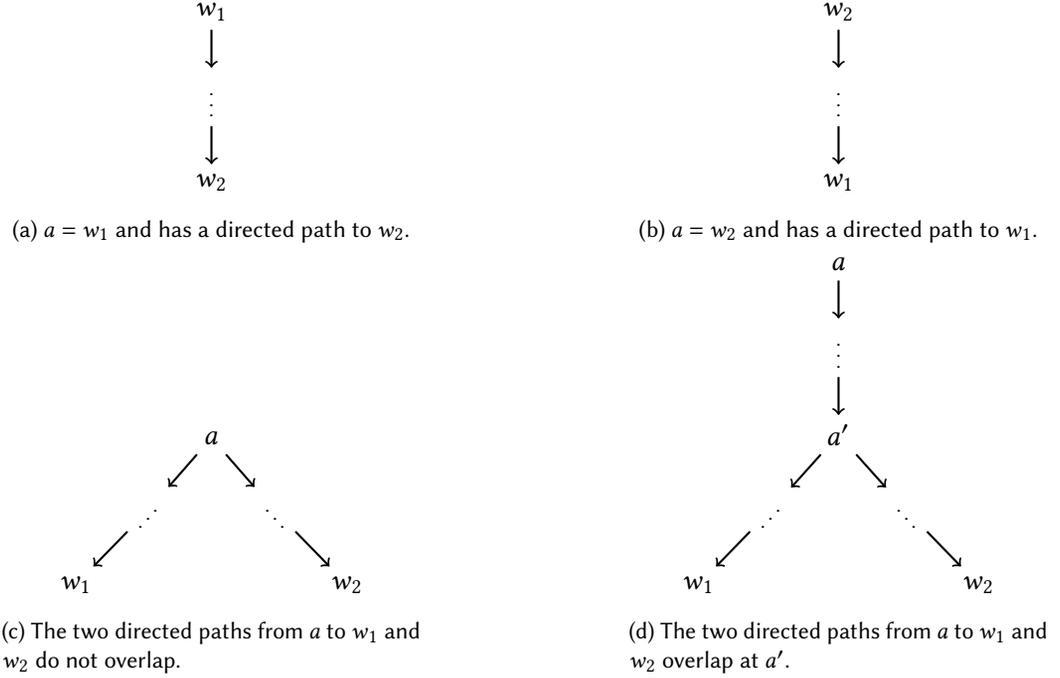

We now show that if two nodes share an unblocked ancestor, then they are $d$-connected.
\begin{lemma}\label{lem:confounder_path}
    Suppose distinct nodes $w_1, w_2$ share an unblocked ancestor $a$, i.e. $a \in \anc{w_1} \cap \anc{w_2}$. Then one of the following is true:
    \begin{enumerate}
        \item There is a $d$-connected, acyclic directed path $[w_1] {++} l {++} [w_2]$ such that $w_1 \not\in Z$.
        \item There is a $d$-connected, acyclic directed path $[w_2] {++} l {++} [w_1]$ such that $w_2 \not\in Z$.
        \item There is a $d$-connected, acyclic path $[w_1] {++} \mathsf{rev}(l_1) {++} [a'] {++} l_2{++} [w_2]$ such that both the paths $[a'] {++} l_1 {++} [w_1]$ and $[a'] {++} l_2 {++} [w_2]$ are directed.
    \end{enumerate}
\end{lemma}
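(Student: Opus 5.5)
By \autoref{def:unblockedancestor}, $a\in\anc{w_1}$ means either $a=w_1$ or there is a directed path $Q_1$ from $a$ to $w_1$ none of whose nodes other than $w_1$ lies in $Z$ (this includes $a$ itself). The same holds for $w_2$ with a path $Q_2$. By \autoref{thm:directed_path_acyclic} we may take $Q_1$ and $Q_2$ acyclic. Taking sublists preserves the ``no node in $Z$ except the endpoint'' property, since every node of the shortened path was already a non-final node of the original.

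We split into cases. If $a=w_1$, then $Q_2$ is a directed path from $w_1$ to $w_2$, which gives case 1. Here $w_1\notin Z$: either $a=w_1$ is the start of $Q_2$, or $w_1\neq w_2$ holds anyway. Every internal node of $Q_2$ is a mediator (\autoref{app:formalizing_causal_models}) and lies outside $Z$. A directed path has no confounders or colliders, so it is $d$-connected. The case $a=w_2$ is symmetric and gives case 2.

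Otherwise $a\neq w_1,w_2$ and both $Q_1$ and $Q_2$ exist. Apply \autoref{thm:first_elt_in_common} to the \emph{reversed} node lists $\mathsf{rev}(Q_1)$, which starts at $w_1$, and $\mathsf{rev}(Q_2)$, which starts at $w_2$. They share $a$, so we obtain a first common node $a'$ with disjoint prefixes. If $a'=w_2$, then $w_2$ lies on $Q_1$, so the segment of $Q_1$ from $w_2$ to $w_1$ is a directed path whose nodes other than $w_1$ avoid $Z$; in particular $w_2\notin Z$, and we land in case 2. If $a'=w_1$, we symmetrically land in case 1.

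In the remaining situation, $a'$ is internal to both paths, so $a'\notin Z$. Let $l_1$ be the nodes of $Q_1$ strictly between $a'$ and $w_1$, and define $l_2$ likewise for $Q_2$. The path $[w_1]\,{++}\,\mathsf{rev}(l_1)\,{++}\,[a']\,{++}\,l_2\,{++}\,[w_2]$ is acyclic because the prefixes are disjoint and each $Q_i$ is acyclic. Its two halves are directed away from $a'$, so $a'$ is a confounder. The proof mirrors \autoref{thm:directed_paths_intersect_at_collider} with edge directions reversed, using the confounder analogue of \verb|colliders_vs_edges_in_path|. All nodes of $l_1$ and $l_2$ are mediators not in $Z$, and there are no colliders. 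Equivalently, we can invoke \autoref{thm:concat_d_connected}, condition 2, on the two directed halves. Either way the path is $d$-connected, which is case 3.

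The main obstacle is the bookkeeping in the overlap step. We must show that the first common node of the reversed lists yields disjoint segments whose concatenation is acyclic. We must also handle the degenerate cases where the meeting point is an endpoint, and verify that ``not in $Z$'' survives the truncation. The rest is routine.
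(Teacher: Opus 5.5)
Your proof is correct and is essentially the paper's argument. It uses the same split into $a=w_1$, $a=w_2$, or neither, applies \autoref{thm:first_elt_in_common} to the reversed directed paths to find the confounder $a'$, and gets $d$-connectedness from the two directed halves. Your explicit treatment of the degenerate meeting points $a'\in\{w_1,w_2\}$, which fall back to cases 1 and 2, fills in a detail the paper's sketch passes over (and in your first case, $w_1\notin Z$ holds simply because $w_1=a$ is a non-final node of $Q_2$, since $w_1\neq w_2$).
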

\begin{proof}
    Since $a$ is a shared unblocked ancestor of $w_1, w_2$, we use \autoref{def:unblockedancestor} to produce the following cases:
    \begin{enumerate}
        \item $a = w_1$, shown in \autoref{fig:w1w2directed}. Then, since $w_1,w_2$ are distinct, $a\neq w_2$, so there is a directed path to $w_2$ that does not go through any member of $Z$. Then, the path is $d$-connected, since all intermediate nodes are mediators. Furthermore, since $a \in \anc{w_2}$ and $a \neq w_2$, $a \not\in Z$, as desired. Thus, Condition 1 is satisfied.
        \item $a \neq w_1$, $a = w_2$, shown in \autoref{fig:w2w1directed}. By symmetric logic as the case above, Condition 2 is satisfied.
        \item $a \neq w_1, a \neq w_2$. Then, there are directed paths from $a$ to $w_1$ and from $a$ to $w_2$, both of which do not go through members of $Z$. It is clear that $a \not\in Z$. If the two directed paths do not overlap, shown in \autoref{fig:confounderpatha}, then we simply concatenate them and satisfy Condition 3. If they do overlap, shown in \autoref{fig:confounderpathb}, then we find the first intersection $a'$ of the reverse paths given by \autoref{thm:first_elt_in_common}, and we take the path from $w_1$ to $a'$ via the reverse of $l_1$, then to $w_2$ via $l_2$. By \autoref{thm:first_elt_in_common}, this path is acyclic. Since the original directed paths do not pass through $Z$, this path is $d$-connected. Furthermore, it is clear that the two paths making up the resulting path are directed, since they are subpaths of the original directed paths.
    \end{enumerate}
    Thus, one of the conditions is satisfied in all cases.
\end{proof}

\section{Final Construction of the \texorpdfstring{$d$}{d}-Connected Path}\label{app:cons_path}

To make the reasoning of \autoref{sec:arb_length} precise, we now define functions to identify the set of $z \in Z$ whose values are affected during the sequence. These nodes serve as intermediaries in the propagation chain and will be used to construct an explicit $d$-connected path (specifically, its colliders) from $u$ to $v$.
\begin{minted}{coq}
Fixpoint find_unblocked_ancestors_in_Z_contributors {X: Type} `{EqType X}
        (G: graph) (Z: nodes) (AZ: assignments X) (S: nodes): nodes :=
  match AZ with
  | [] => []
  | (z, x) :: AZ' => if overlap (find_unblocked_ancestors G z Z) S
            then z :: find_unblocked_ancestors_in_Z_contributors G Z AZ' S
            else find_unblocked_ancestors_in_Z_contributors G Z AZ' S
  end.

Fixpoint get_conditioned_nodes_that_change_in_seq {X: Type} `{EqType X}
            (L: list (assignments X)) (Z: nodes) (AZ: assignments X) 
            (G: graph): nodes :=
  match L with
  | U1 :: L' => match L' with
                | U2 :: U3 :: L''' =>
    find_unblocked_ancestors_in_Z_contributors G Z AZ 
        (unblocked_ancestors_that_changed_A_to_B (nodes_in_graph G) U1 U2)
    ++ get_conditioned_nodes_that_change_in_seq L' Z AZ G
                | _ => []
                end
  | _ => []
  end.
\end{minted}
In the above, $\verb|unblocked_ancestors_that_changed_A_to_B|(\VV, U_1, U_2)$ returns every node with an unobserved term differing between $U_1$ and $U_2$. 
Let $$\Delta_Z(L) := \verb|get_conditioned_nodes_that_change_in_seq|(L, Z, A_Z, \GG).$$ In words, if we let $L = [U_0, U_1, ..., U_\ell]$, then $\Delta_Z(L)$ outputs the subset of $Z$ whose values were affected by changes in the sequence $U_2, \dots, U_\ell$. For simplicity, we will often instead refer to the same expression as $\Delta_Z(U_0, ..., U_\ell)$.

We now prove a result similar to \autoref{lem:value_affected_by_unblocked} tailored to sequences:
\begin{lemma}\label{lem:value_affected_by_unblocked_seq}
    For any graph function $f$ and sequence of unobserved-terms assignments $U_0, ..., U_\ell$ satisfying the conditions of \autoref{def:ci}, if $f_{U_0}(v) \neq f_{U_\ell}(v)$, then there exists a node $a\in \anc{v}$ such that one of the following is true:
    \begin{enumerate}
        \item $a\in \anc{u}$.
        \item There exists $z\in Z$ such that $a \in \anc{z}$ and $z\in \Delta_Z(U_0, ..., U_\ell)$.
    \end{enumerate}
\end{lemma}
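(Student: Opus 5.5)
Apply \autoref{lem:value_affected_by_unblocked} to the endpoints $U_0$ and $U_\ell$, then locate the step of the sequence at which the unblocked ancestor of $v$ actually changed. Both $f_{U_0}$ and $f_{U_\ell}$ properly condition on $Z$ (Conditions 1 and 5 of \autoref{def:ci}), and $f_{U_0}(v) \neq f_{U_\ell}(v)$. So \autoref{lem:value_affected_by_unblocked} yields a node $a \in \anc{v}$ with $U_0(a) \neq U_\ell(a)$. Since $U_0(a) \neq U_\ell(a)$, there is some index $i \in \{1, \dots, \ell\}$ with $U_{i-1}(a) \neq U_i(a)$. We fix any such $i$; the smallest one is a convenient choice.

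We then split on $i$. If $i = 1$, the catalyst condition (Condition 2) says $U_1$ differs from $U_0$ only on $\anc{u}$, so $a \in \anc{u}$ and alternative (1) holds. If $i \geq 2$, the reparative-propagation condition (Condition 3) gives some $z \in Z$ with $a \in \anc{z}$, together with some $a'' \in \anc{z}$ satisfying $U_{i-2}(a'') \neq U_{i-1}(a'')$. It remains to show $z \in \Delta_Z(U_0, \dots, U_\ell)$.

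This is where the main work lies: matching the informal membership condition to the recursive Rocq definition of \verb|get_conditioned_nodes_that_change_in_seq|. Its recursion consumes a window of three assignments at a time, and at the window starting at $U_{i-2}$ it adds exactly those $z$ (ranging over the domain of $A_Z$, which we identify with $Z$) whose unblocked ancestors overlap the set of nodes changed between $U_{i-2}$ and $U_{i-1}$. I would prove an auxiliary lemma by induction on the list $L$. It states that for any $j$ with $j+2 \leq \ell$, and any $z \in Z$ with some $a'' \in \anc{z}$ such that $U_j(a'') \neq U_{j+1}(a'')$, we have $z \in \Delta_Z(U_0,\dots,U_\ell)$.

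The lemma has two cases. When $j = 0$, the head case of the match applies, and membership in \verb|find_unblocked_ancestors_in_Z_contributors| follows from a small induction on $A_Z$ using \verb|overlap| soundness and the correctness of \verb|unblocked_ancestors_that_changed_A_to_B|. When $j > 0$, we recurse on the tail $L'$ and use that the output is a concatenation, hence contains the tail's output.

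Instantiating the auxiliary lemma with $j = i-2$ (valid since $i \leq \ell$) gives alternative (2). The only delicate bookkeeping is the off-by-one alignment between the index $i$ in Condition 3 and the position of the three-element window in the recursion. I expect that alignment, rather than any graph-theoretic reasoning, to be the obstacle.
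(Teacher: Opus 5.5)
Your proposal is correct and follows essentially the same route as the paper. You obtain $a \in \anc{v}$ with $U_0(a) \neq U_\ell(a)$ from \autoref{lem:value_affected_by_unblocked}, use the catalyst condition if $a$ changes at the first step and reparative propagation otherwise to produce $z$, and then unfold the definition of $\Delta_Z$. The only difference is bookkeeping: the paper inducts on $\ell$ by dropping $U_0$, which forces it to drop Condition 2 from the inductive statement. You instead pick the changing index directly and move the list reasoning into an auxiliary lemma that is essentially the ``if'' direction of \autoref{lem:z_correspond_to_U_sublist}.
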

\begin{proof}
    By \autoref{lem:value_affected_by_unblocked}, we know that there is a node $a \in \anc{v}$ such that $U_0(a) \neq U_\ell(a)$. If $a \in \anc{u}$, then Condition 1 is satisfied. Suppose $a \not\in \anc{u}$.

    Then, $U_0(a) = U_1(a)$, so $\ell \geq 2$. To show that Condition 2 above is satisfied for any $U_0, ..., U_\ell$ satisfying Condition 3 of \autoref{def:ci}, we perform induction on $\ell$. Note that for the rest of the proof, we no longer require that the sequence satisfy Condition 2 of \autoref{def:ci}.
    
    For the base case, $\ell = 2$. Then, $U_1(a) \neq U_2(a)$, so there must be a node $z\in Z$ and a node $a'$ such that $a \in \anc{z}$, $a' \in \anc{z}$, and $U_0(a') \neq U_1(a')$. Then, letting $$S := \verb|unblocked_ancestors_that_changed_A_to_B|(\VV, U_0, U_1),$$ we have that $$z \in \verb|find_unblocked_ancestors_in_Z_contributors|(\GG, Z, A_Z, S),$$ and thus $z \in \Delta_Z(U_0, U_1, U_2)$.

    For the induction hypothesis, we assume that the statement is true for $\ell = n$. Suppose the sequence has length $\ell = n + 1$, where $n \geq 2$. If $U_1(a) \neq U_2(a)$, proceed as in the base case. Otherwise, apply the induction hypothesis to the sequence $U_0' = U_1, U_1' = U_2, ...$, which tells us that there must be a node $z \in Z$ such that $a \in \anc{z}$ and $z \in \Delta_Z(U_1, U_2, ..., U_{n+1})$. Then, by definition of $\Delta_Z$, we have that $z \in \Delta_Z(U_0, U_1, ..., U_{n+1})$, as desired.
\end{proof}
This result allows us to pinpoint the change in $v$'s value to a specific conditioned node $z$. We now relate syntactic structure in the sequence to the nodes in $\Delta_Z(L)$.
\begin{lemma}\label{lem:z_correspond_to_U_sublist}
    For a node $z\in Z$ and $L = [U_0, ..., U_\ell]$, $z \in \Delta_Z(L)$ if and only if there exists a subsequence $U_i, U_{i+1}, U_{i+2}$ of $L$ such that $$z \in \verb|find_unblocked_ancestors_in_Z_contributors|(\GG, Z, A_Z, S),$$ where $S := \verb|unblocked_ancestors_that_changed_A_to_B|(\VV, U_i, U_{i+1})$. In words, $z$ was affected by changes between $U_i$ and $U_{i+1}$.
\end{lemma}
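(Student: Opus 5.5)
This is an unfolding of the recursive definition of $\Delta_Z$ (i.e.\ \verb|get_conditioned_nodes_that_change_in_seq|). I would prove it by induction on the list $L$, or equivalently on $\ell$, proving both directions together. A few conventions fix the bookkeeping. Write $C(U_i,U_{i+1})$ for \verb|find_unblocked_ancestors_in_Z_contributors|$(\GG, Z, A_Z, S)$ with $S$ the changed set between $U_i$ and $U_{i+1}$. A ``subsequence $U_i, U_{i+1}, U_{i+2}$'' means a contiguous window of three consecutive entries of $L$. With these conventions the definition reads as follows:
\begin{itemize}
\item if $L = U_0 :: U_1 :: U_2 :: L'''$, then $\Delta_Z(L) = C(U_0,U_1) \mathbin{+\!\!+} \Delta_Z(U_1 :: U_2 :: L''')$;
\item otherwise (fewer than three elements), $\Delta_Z(L) = []$.
\end{itemize}

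\emph{Base case.} Suppose $L$ has length at most $2$. Then $\Delta_Z(L)$ is empty, so the left side is false. There is also no window of three consecutive elements, so the right side is false too, and the equivalence holds trivially.

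\emph{Inductive step.} Suppose $L = U_0 :: U_1 :: U_2 :: L'''$. By list membership in an append, $z \in \Delta_Z(L)$ holds iff either $z \in C(U_0,U_1)$ or $z \in \Delta_Z(U_1 :: U_2 :: L''')$. Apply the induction hypothesis to the tail $L' = U_1 :: U_2 :: L'''$: the second disjunct holds iff some window $U_i, U_{i+1}, U_{i+2}$ of $L'$ has $z \in C(U_i,U_{i+1})$. Now compare with windows of $L$. Every window of $L$ is either the head window $U_0,U_1,U_2$ or a window of $L'$, and conversely every window of $L'$ is a window of $L$. In the head window the relevant pair is $(U_0,U_1)$, which matches exactly the first disjunct. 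Combining these gives the equivalence for $L$.

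The main obstacle is only the formal handling of ``subsequence''. In Rocq I would state the right side as the existence of $L_1, L_2$ with $L = L_1 \mathbin{+\!\!+} [U_i; U_{i+1}; U_{i+2}] \mathbin{+\!\!+} L_2$, via the \verb|sublist| predicate already used in \verb|colliders_vs_edges_in_path|. The step that needs care is the lemma that a three-element contiguous sublist of $x :: L'$ is either a prefix of $x :: L'$ or a contiguous sublist of $L'$. I would prove it by case analysis on whether $L_1$ is empty: if $L_1$ is empty the window is the head window, and otherwise $L_1 = x :: L_1'$ gives a window of $L'$. The reverse direction, extending a window of $L'$ by prepending $x$ to $L_1$, is immediate. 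Membership in the append then follows from \verb|in_app_iff|.
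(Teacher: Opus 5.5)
Your proof is correct and matches the paper's argument, which just says the lemma follows by induction on $\ell$ from the recursive definition of $\Delta_Z$. Your unfolding $\Delta_Z(U_0 :: U_1 :: U_2 :: L''') = C(U_0,U_1) \mathbin{+\!\!+} \Delta_Z(U_1 :: U_2 :: L''')$, together with splitting the three-element windows of $L$ into the head window and the windows of the tail, is exactly the bookkeeping that induction needs.
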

\begin{proof}
    In the mechanized proof, we would proceed via induction on $\ell$. However, it is clear from the Rocq function definitions that the statement is true.
\end{proof}
This lemma isolates the part of the sequence that causes a particular $z$ to change. It is possible for this change to come directly from $U_1$; if not, it is caused by reparative changes initiated by a different $z' \in Z$ in a previous transition, leading us to a recursive structure:
\begin{lemma}\label{lem:two_conditioned_nodes}
    Let $L = [U_0, ..., U_\ell]$ satisfy Condition 2 of \autoref{def:ci}. Suppose $U_{i}, U_{i+1}, U_{i+2}$ is the subsequence corresponding to $z\in \Delta_Z(L)$ as given by \autoref{lem:z_correspond_to_U_sublist}. Furthermore, suppose there exists $U'$ such that $U', U_i, U_{i+1}$ is a subsequence of $L$. Then, there exists node $a$ and node $z' \in Z$ such that $$a \in \anc{z} \cap \anc{z'}$$ and $$z' \in \verb|find_unblocked_ancestors_in_Z_contributors|(\GG, Z, A_Z, S'),$$ where $S' := \verb|unblocked_ancestors_that_changed_A_to_B|(\VV, U_{i-1}, U_i)$.
\end{lemma}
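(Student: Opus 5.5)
The plan is to unpack the hypotheses directly from the definitions of \verb|find_unblocked_ancestors_in_Z_contributors| and \verb|unblocked_ancestors_that_changed_A_to_B|. By \autoref{lem:z_correspond_to_U_sublist}, the fact that $U_i, U_{i+1}, U_{i+2}$ is the subsequence witnessing $z \in \Delta_Z(L)$ means that some node $a$ satisfies $a \in \anc{z}$ and $U_i(a) \neq U_{i+1}(a)$. This $a$ is the shared ancestor we will return, so the remaining task is to produce $z'$.

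Since $U', U_i, U_{i+1}$ is a subsequence of $L$, we have $i \geq 1$ and $U' = U_{i-1}$. Thus $U_{i+1}$ is a reparative step, and $i+1 \geq 2$. Condition 3 of \autoref{def:ci} applies to the pair $U_i, U_{i+1}$. Any node on which they differ, in particular $a$, lies in $\bigcup \anc{z'}$, where the union ranges over those $z' \in Z$ for which some $a'' \in \anc{z'}$ has $U_{i-1}(a'') \neq U_i(a'')$. So we obtain $z' \in Z$ with $a \in \anc{z'}$, together with a witness $a''$ that changed between $U_{i-1}$ and $U_i$. This gives $a \in \anc{z} \cap \anc{z'}$.

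To finish, we restate the witness $a''$ in the vocabulary of the Rocq functions. We need two soundness/completeness facts, each proved by a routine induction on the underlying lists:
\begin{itemize}
\item $a'' \in \verb|unblocked_ancestors_that_changed_A_to_B|(\VV, U_{i-1}, U_i)$ exactly when $a'' \in \VV$ and $U_{i-1}(a'') \neq U_i(a'')$;
\item $z' \in \verb|find_unblocked_ancestors_in_Z_contributors|(\GG, Z, A_Z, S')$ exactly when $z'$ appears in $A_Z$ and $\anc{z'}$ overlaps $S'$.
\end{itemize}
With these, $a'' \in \anc{z'} \cap S'$, so $z'$ lies in the required set.

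The main obstacle is bookkeeping rather than mathematics. First, Condition 3 (not Condition 2) of \autoref{def:ci} must actually be available for the indices $i, i+1$, so the hypothesis of the lemma must carry it and not only Condition 2. Second, we must align positional indexing of the list with the subsequence hypothesis, which means showing that $[U', U_i, U_{i+1}]$ being a sublist forces $U' = U_{i-1}$ when the sequence may contain duplicate assignments. Third, we must ensure every $z' \in Z$ has an entry in $A_Z$ and that $a''$ is a graph node. I would handle the duplicate issue by working with an explicit index $i$ from \autoref{lem:z_correspond_to_U_sublist} rather than list membership. I would discharge the $A_Z$-domain issue using the standing assumption that $A_Z$ assigns every node of $Z$.
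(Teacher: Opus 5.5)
Your argument is correct and is essentially the paper's. The paper also takes $a\in\anc{z}$ with $U_i(a)\neq U_{i+1}(a)$ and applies the reparative-propagation constraint to the step $U_i\to U_{i+1}$, obtaining $z'\in Z$ with $a\in\anc{z'}$ and $z'$ affected by the change $U_{i-1}\to U_i$. It only wraps this in an induction on $L$ to locate the triple, which also settles your duplicate-assignment concern: Condition 3 can be applied wherever $U', U_i, U_{i+1}$ occurs, with $U'$ in place of $U_{i-1}$. You are also right that this step needs Condition 3 rather than Condition 2 of \autoref{def:ci}; the paper's proof uses Condition 3 implicitly even though the lemma is worded with Condition 2.
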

\begin{proof}
    We perform induction on $L$. For the base case, suppose $U' = U_0$, $U_i = U_1$, and $U_{i+1} = U_2$. Then, since $z$ is affected by changes from $U_1$ to $U_2$, there must be a node $a \in \anc{z}$ such that $U_1(a) \neq U_2(a)$. Then, there must be a node $z' \in Z$ such that $a \in \anc{z'}$, and $z'$ is affected by changes from $U_0$ to $U_1$, as desired.

    For the induction step, we assume that the result holds for $L' = [U_1, ..., U_\ell]$. If $U' = U_0$, $U_i = U_1$, and $U_{i+1} = U_2$ again, we follow the same steps as above. If not, the result follows directly from the induction hypothesis.
\end{proof}

\section{Concatenating Paths from Unobserved-Terms Assignments} \label{app:backward_overlap}

Recall that near the end of \autoref{sec:backward}, our goal is to construct a $d$-connected path from $u$ to $v$ using the existence of a sequence of unobserved-terms assignments $U_0, ..., U_\ell$ satisfying the conditions of \autoref{def:ci}, under the assumption that $f_{U_0}(v) \neq f_{U_\ell}(v)$.

The case where $\anc{u} \cap \anc{v} \neq \emptyset$ yields a $d$-connected path via \autoref{lem:confounder_path}. In the case where $\anc{u} \cap \anc{v} = \emptyset$, \autoref{lem:value_affected_by_unblocked_seq} guarantees the existence of some $z \in Z$ and $a \in \anc{z} \cap \anc{v}$ such that $z$ is affected by changes in the sequence. If $z$ is influenced directly by $U_1$, we can construct a $d$-connected path from $u$ to $z$. Otherwise, \autoref{lem:two_conditioned_nodes} provides a recursive structure in which the change to $z$ propagates through another node $z' \in Z$. Repeated applications of Lemmas~\ref{lem:z_correspond_to_U_sublist} and~\ref{lem:two_conditioned_nodes}, together with induction, allow us to construct a $d$-connected path from $u$ to $z$. Finally, we concatenate this path with a $d$-connected path from $z$ to $v$.

We now formalize the construction of a $d$-connected path from $u$ to such a node $z \in Z$.
Note that we will eventually have to concatenate this path with a $d$-connected path from $z$ to $v$, in which $z$ will have to be a collider to guarantee $d$-connectedness, as shown in \autoref{thm:concat_d_connected}.
Thus, we require the additional constraint that the final edge is into $z$.

\begin{lemma}\label{lem:path_u_to_z}
    Let $z\in Z$, and let $z\in \Delta_Z(U_0, ..., U_\ell)$. Then, there is a path $P_{u,z}$ from $u$ to $z$ that is acyclic, $d$-connected given $Z$, and into $z$ at the last edge.
\end{lemma}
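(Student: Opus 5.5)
The plan is strong induction on the index $i$ of the earliest triple $U_i, U_{i+1}, U_{i+2}$ that \autoref{lem:z_correspond_to_U_sublist} assigns to $z$. This is the index at which $z$ is first affected.

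\textbf{Base case ($i=0$).} Here $z$ is affected by changes between $U_0$ and $U_1$. So there is some $a \in \anc{z}$ with $U_0(a) \neq U_1(a)$. By the catalyst condition, $a \in \anc{u}$ as well. We then apply \autoref{lem:confounder_path} to $u$ and $z$ with the shared unblocked ancestor $a$. Each of its three cases gives an acyclic $d$-connected path. We must check that the last edge points into $z$:
\begin{itemize}
\item Case~1 (a directed path from $u$ to $z$) ends with an edge into $z$.
\item Case~3 ends in the directed segment $a' \cdots z$, so its last edge is also into $z$.
\item Case~2 would be a directed path from $z$ to $u$ with $z \notin Z$. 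This contradicts $z \in Z$, so it cannot occur.
\end{itemize}
The degenerate situation $a = z$ cannot arise, since $z \in Z$ and $u \notin Z$ forbid $z \in \anc{u}$. One point needs care here: \autoref{lem:confounder_path} also requires the shared endpoint $z$ not to be an internal node of a path. This holds because unblocked-ancestor paths never pass through $Z$ internally.

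\textbf{Inductive step ($i \ge 1$).} We apply \autoref{lem:two_conditioned_nodes}. It gives a node $a \in \anc{z} \cap \anc{z'}$ and a node $z' \in Z$ that was affected at the strictly earlier index $i-1$. In particular $z' \in \Delta_Z$ via that earlier triple. If $z' = z$, then $z$ was affected earlier, which contradicts minimality of $i$; so $z' \neq z$. The induction hypothesis then gives an acyclic, $d$-connected path $P_{u,z'}$ from $u$ to $z'$ whose last edge is into $z'$.

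Next we use \autoref{lem:confounder_path} on $z'$ and $z$ with the shared ancestor $a$. Since both nodes lie in $Z$, its Cases~1 and~2 are impossible. We are therefore in Case~3: there is a path $z' \leftarrow \cdots \leftarrow a'' \rightarrow \cdots \rightarrow z$. Its first edge leaves $z'$ and its last edge enters $z$.

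We concatenate $P_{u,z'}$ with this path at $z'$. Now $z'$ is a collider in $Z$, so \autoref{thm:concat_d_connected}(3) applies, provided the two pieces share no node other than $z'$.

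\textbf{Main obstacle: overlaps.} The pieces may overlap. In that case we pick the first intersection point $x$ of $P_{u,z'}$ with the confounder path, scanning from $u$, using \autoref{thm:first_elt_in_common}. We then splice at $x$ and case-split on where $x$ lies and on the orientations of the edges meeting at $x$.

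\begin{itemize}
\item \textbf{$x$ on the directed arm $a'' \cdots z$.} Every node on this arm is an unblocked non-$Z$ node, except $z$ itself. Splicing at $x$ and following the arm on to $z$ therefore makes $x$ a mediator or a confounder outside $Z$. This keeps the path $d$-connected and ends with an edge into $z$.
\item \textbf{$x$ on the arm $a'' \cdots z'$.} If $x$ is reached in $P_{u,z'}$ by an edge into $x$, then $x$ becomes a collider in the spliced path. It has a descendant in $Z$, namely $z'$, reachable along the arm, so the path stays $d$-connected, exactly as in \autoref{fig:twoconfounderpathb}. Otherwise $x$ is a mediator or a confounder outside $Z$.
\end{itemize}

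Acyclicity is guaranteed in every case by the choice of the first intersection.

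This overlap case analysis, together with keeping acyclicity, is where I expect most of the effort to go, mirroring the clean-path argument of \autoref{app:disjoint_desc_paths}. The base-case orientation check and the induction bookkeeping are routine by comparison.
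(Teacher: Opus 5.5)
Your proposal is correct and follows the paper's proof essentially step for step. It uses strong induction on the triple index, handles the base case via \autoref{lem:confounder_path} (Cases 1 or 3), does the inductive step via \autoref{lem:two_conditioned_nodes} plus the single-confounder path between $z'$ and $z$, and splices at the first intersection point; your use of the earliest index to force $z' \neq z$, and your observation that an $x$ on the arm toward $z$ can never become a collider, are if anything slightly sharper than the paper. The only case you leave unaddressed is $x = z$, i.e.\ $z$ lies on $P_{u,z'}$, which the paper handles in one line: $z \in Z$ is then an internal node of the $d$-connected path $P_{u,z'}$, hence a collider there, so the prefix of $P_{u,z'}$ ending at $z$ already enters $z$.
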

\begin{proof}
    By \autoref{lem:z_correspond_to_U_sublist}, there must be a subsequence $U', U'', U'''$ of $U_0, U_1, ..., U_\ell$ such that $z$ was affected by changes between $U'$ and $U''$. Let $i$ be the index of the subsequence in the sequence (the $i$-th entry of the sequence is $U'$, the $(i+1)$-th is $U''$, and the $(i+2)$-th is $U'''$). Note further that this index does not have to be unique; we do not require that the sequence of unobserved-terms assignments are minimal. We can choose any index $i$ that satisfies the requirements.

    Now, we proceed via strong induction on $i$. For the base case, suppose that $i = 0$, so $U' = U_0, U'' = U_1, U''' = U_2$. Then, there must exist some $a$ such that $a \in \anc{z}$, and $U_0(a) \neq U_1(a)$. Thus, $a \in \anc{u}$. Since $z \in Z$, one of Conditions 1 or 3 of \autoref{lem:confounder_path} must be true. Either way, it is clear that the path goes into $z$, is $d$-connected, and is acyclic.

    For the induction hypothesis, suppose that for any $z' \in Z$ that is affected by changes between assignments appearing at index at most some $n$, where $n \geq 0$, there is a path $P_{u,z'}$ from $u$ to $z'$ that is acyclic, $d$-connected given $Z$, and into $z'$. Assume $i = n + 1 > 0$. Then, there must be some $U$ that immediately precedes the subsequence $U', U'', U'''$; in other words, $U$ appears at index $(i-1)$ in the sequence. Then by \autoref{lem:two_conditioned_nodes}, there is a $z' \in Z$ and a node $a$ such that $a \in \anc{z} \cap \anc{z'}$, and $z$ is affected by changes between $U$ and $U'$. It is clear that $(i-1)$ is a valid index of the subsequence $U, U', U''$ in the sequence. Thus, we apply the induction hypothesis to get a path $P_{u,z'}$ that goes into $z'$.

    Furthermore, since $z,z' \in Z$, Condition 3 of \autoref{lem:confounder_path} must be true, giving us some single-confounder path $P_{z',z}$ from $z'$ to $z$. We aim to concatenate these two paths; however, we must consider the possibility that they intersect at some point. Luckily, the casework here is much easier than in \autoref{app:disjoint_desc_paths}.

    If $P_{u,z'}$ and $P_{z',z}$ do not intersect, we concatenate them directly. Since the last edge of $P_{u,z'}$ is into $z'$, $z'$ is a collider, and the path remains $d$-connected because $z' \in Z$.

    If $u$ lies on $P_{z',z}$, then we simply take the subpath of $P_{z',z}$ starting from $u$ and continuing to $z$. Since $P_{z',z}$ is acyclic, $d$-connected, and into $z$, so is this subpath. Similarly, if $z$ lies on $P_{u,z'}$, then we simply take the subpath of $P_{u,z'}$ starting from $u$ and continuing to $z$; this subpath must be acyclic and $d$-connected. Since $z\in Z$, and $P_{u,z'}$ is $d$-connected, $z$ must be a collider in the path. Thus, the subpath is also into $z$.

    Otherwise, let $x$ be the first intersection point from \autoref{thm:first_elt_in_common} of $P_{u,z'}$ and the reverse of $P_{z',z}$. Construct the new path by taking $P_{u,z'}$ to $x$, then continuing along $P_{z',z}$. Let this composite path be $P_{u,z}$. This path is acyclic by \autoref{thm:first_elt_in_common} and still into $z$. Note that since $x\in P_{z', z}$, it must be a mediator or confounder in $P_{z',z}$ and thus not in $Z$. In $P_{u,z}$, if $x$ is a mediator or confounder, then the path is $d$-connected, since $x\not\in Z$. If $x$ is a collider in $P_{u,z}$, then we consider where $x$ falls in $P_{z',z}$. In particular, if $x$ falls before the confounder, then $x$ has a descendant path to $z'$. Otherwise, $x$ has a descendant path to $z$. Either way, $P_{u, z}$ is $d$-connected.

    All the above cases result in paths $P_{u,z}$ satisfying the requirements in the lemma statement.
\end{proof}

We now have all the pieces to prove the backward direction \autoref{thm:equiv} (restated below for clarity), thus completing the full proof of the theorem.

\equiv*
\begin{proof}[Proof (Backward).]
    We aim to show that if $u$ and $v$ are $d$-separated given $Z$, then they are semantically separated given $Z$.

    We prove the contrapositive. Suppose that $u$ and $v$ are not semantically separated, so there exists some graph function $f$, some $\alpha \neq \beta$, and a sequence of unobserved-terms assignments $U_0, ..., U_\ell$, where $\ell \geq 1$, satisfying the conditions of \autoref{def:ci}, such that $f_{U_0}(v) \neq f_{U_\ell}(v)$. We will show that $u$ and $v$ are not $d$-separated given $Z$. 

    By \autoref{lem:value_affected_by_unblocked_seq}, there exists an $a \in \anc{v}$ such that either $a \in \anc{u}$, or there exists $z\in Z$ such that $a \in \anc{z}$ and $z \in \Delta_Z(U_0, ..., U_\ell)$. Consider the first case. Then, by \autoref{lem:confounder_path}, there is a $d$-connected path from $u$ to $v$, and thus $u$ and $v$ are not $d$-separated.

    In the second case, we know by \autoref{lem:path_u_to_z} that there is a $d$-connected, acyclic path $P_{u, z}$ from $u$ to $z$ that goes into $z$ at the last edge. Furthermore, since $z$ and $v$ share unblocked ancestor $a$, and $z\in Z$, one of Conditions 2 of 3 of \autoref{lem:confounder_path} must hold, and thus there is a $d$-connected path $P_{z,v}$ from $z$ to $v$ that goes into $z$ at the first edge. 
    
    We now consider the structure of the path from $z$ to $v$ and how it intersects with the previously constructed path from $u$ to $z$.
    \begin{enumerate}
    \item If $P_{u,z}$ and $P_{z,v}$ do not intersect anywhere, then we can simply concatenate them to get $P_{u,v}$. Since $P_{u,z}$ is into $z$, $z$ becomes a collider in $P_{u,v}$, so $P_{u,v}$ is $d$-connected.
    \item If $u$ lies on $P_{z,v}$ or $v$ lies on $P_{u,z}$, we take the appropriate subpath, which remains a $d$-connected path.
    \item If the first overlap of $P_{u,z}$ and the reverse of $P_{z,v}$ given by \autoref{thm:first_elt_in_common} is some node $x \not\in \{ u,v \}$, then we let $P_{u,v}$ be the path $P_{u,z}$ until $x$, then switch to the path $P_{z,v}$ until $v$. This path is acyclic by \autoref{thm:first_elt_in_common}. Since $x \in P_{z,v}$, it is a mediator or confounder and thus not in $Z$. Thus if $x$ is a mediator or confounder in $P_{u,v}$, then $P_{u,v}$ is still $d$-connected. Suppose $x$ is a collider in $P_{u,v}$.

    Consider the structure of $P_{z,v}$ as given by \autoref{lem:confounder_path} (it must satisfy Condition 2 or Condition 3). If $P_{z,v}$ is a directed path from $v$ to $z$, then the descendant path from $x$ to $z$ along the reverse of $P_{z,v}$ ensures that $P_{u,v}$ is $d$-connected.

    If $P_{z,v}$ is a single-confounder path from $z$ to $v$, then $x$ must come before the confounder in $P_{z,v}$, since otherwise, the edge on the right of $x$ in $P_{u,v}$ would be outwards, and thus $x$ would not be a collider. Thus, the descendant path from $x$ to $z$ along the reverse of $P_{z,v}$ still ensures that $P_{u,v}$ is $d$-connected.
    \end{enumerate}
    Thus, in all possible overlapping cases, there is a path $P_{u,v}$ from $u$ to $v$ that is $d$-connected given $Z$, so $u$ and $v$ are not $d$-separated, as desired.
\end{proof}
\end{document}